\documentclass[11pt]{article}
\usepackage{pgfplots}

\usepackage{fullpage}

\usepackage{hyperref}
\usepackage[utf8]{inputenc} 
\usepackage[T1]{fontenc}    
\usepackage{url}            
\usepackage{mathtools}
\usepackage{booktabs}       
\usepackage{amsfonts}       
\usepackage{nicefrac}       
\usepackage{microtype}      
\usepackage{xcolor}         

\usepackage{nicematrix}

\usepackage{tikz}
\usepackage{float}
\usetikzlibrary{shapes,backgrounds,patterns,calc,arrows,arrows.meta}
\usepackage{chemfig}

\catcode`\_=11
\definearrow5{-x>}{%
    \CF_arrowshiftnodes{#3}%
    \CF_expafter{\draw[}\CF_arrowcurrentstyle](\CF_arrowstartnode)--(\CF_arrowendnode)%
        coordinate[midway,shift=(\CF_ifempty{#4}{225}{#4}+\CF_arrowcurrentangle:\CF_ifempty{#5}{5pt}{#5})](line@start)%
        coordinate[midway,shift=(\CF_ifempty{#4}{45}{#4}+\CF_arrowcurrentangle:\CF_ifempty{#5}{5pt}{#5})](line@end)%
        coordinate[midway,shift=(\CF_ifempty{#4}{135}{#4}+\CF_arrowcurrentangle:\CF_ifempty{#5}{5pt}{#5})](line@start@i)%
        coordinate[midway,shift=(\CF_ifempty{#4}{315}{#4}+\CF_arrowcurrentangle:\CF_ifempty{#5}{5pt}{#5})](line@end@i);
    \draw(line@start)--(line@end);%
    \draw(line@start@i)--(line@end@i);%
    \CF_arrowdisplaylabel{#1}{0.5}+\CF_arrowstartnode{#2}{0.5}-\CF_arrowendnode
}
\catcode`\_=8

\usepackage{pict2e,picture,graphicx}
\usepackage[overload]{empheq}
\usepackage{microtype}
\usepackage{graphicx}
\usepackage{subcaption}
\usepackage{booktabs}
\usepackage{amsmath}
\usepackage{cases}
\usepackage{mathtools}
\usepackage{amsthm}
\usepackage{thm-restate}
\usepackage{enumitem}

\allowdisplaybreaks
\usepackage{xcolor}
\usepackage{nicefrac, xfrac}
\usepackage{xparse}
\usepackage[capitalize, noabbrev]{cleveref}
\usepackage{wrapfig}
\usepackage{bbm}
\usepackage{yfonts}
\usepackage{nicefrac} 
\usepackage{multirow}
\usepackage{bm}
\usepackage{bbm}
\usepackage{subcaption}

\usepackage[T1]{fontenc}

\hypersetup{
	colorlinks = true,
	linkcolor = typ_blue,
	citecolor = Sepia,
	linktocpage = true,
	urlcolor = darkGreen
}

\usepackage{comment}

\usepackage{colortbl}
\usepackage{cellspace}
\usepackage{mleftright}
\usepackage{algorithm} 
\usepackage{dsfont}
\usepackage{algpseudocode}
\usepackage{amssymb}
\newcounter{programCounter}
\crefname{programCounter}{Program}{Programs}

\makeatletter
\DeclareRobustCommand{\Arrow}[1][]{%
	\check@mathfonts
	\if\relax\detokenize{#1}\relax
	\settowidth{\dimen@}{$\m@th\rightarrow$}%
	\else
	\setlength{\dimen@}{#1}%
	\fi
	\sbox\z@{\usefont{U}{lasy}{m}{n}\symbol{41}}%
	\begin{picture}(\dimen@,\ht\z@)
		\roundcap
		\put(\dimexpr\dimen@-.7\wd\z@,0){\usebox\z@}
		\put(0,\fontdimen22\textfont2){\line(1,0){\dimen@}}
	\end{picture}%
}
\makeatother

\newcommand{\A}{\mathcal{A}}

\makeatletter
\newcommand{\pushright}[1]{\ifmeasuring@#1\else\omit\hfill$\displaystyle#1$\fi\ignorespaces}
\newcommand{\pushleft}[1]{\ifmeasuring@#1\else\omit$\displaystyle#1$\hfill\fi\ignorespaces}
\newcommand{\specialcell}[1]{\ifmeasuring@#1\else\omit$\displaystyle#1$\ignorespaces\fi}
\makeatother

\usepackage{pifont}
\usepackage[normalem]{ulem} %
\definecolor{mygreen}{rgb}{0.0, 0.5, 0.0}
\definecolor{myorange}{rgb}{1, 0.7, 0.1}

\newtheorem*{theorem-non}{Theorem}

\theoremstyle{plain}
\newtheorem{theorem}{Theorem}[section]

\newtheorem{corollary}[theorem]{Corollary}
\theoremstyle{definition}
\newtheorem{definition}[theorem]{Definition}

\theoremstyle{remark}

\definecolor{niceRed}{RGB}{190,38,38}
\definecolor{Red2}{RGB}{219, 50, 54}
\definecolor{mgreen}{RGB}{160, 200, 140}
\definecolor{blueGrotto}{RGB}{5,157,192}
\definecolor{limeGreen}{HTML}{81B622}
\definecolor{myellow}{rgb}{0.88,0.61,0.14}
\definecolor{darkGreen}{HTML}{2E8B57}
\definecolor{navyBlueP}{HTML}{03468F}
\definecolor{Sepia}{HTML}{7F462C}
\definecolor{red2}{HTML}{1F462C}
\definecolor{orange2}{HTML}{FF8000}
\definecolor{mgray}{HTML}{ABB3B8}
\definecolor{lgray}{HTML}{E5E8E9}
\definecolor{myPurple}{RGB}{175,0,124}
\definecolor{mypurple2}{rgb}{0.8,0.62,1}
\definecolor{royalBlue}{HTML}{057DCD}
\definecolor{mpink}{HTML}{FC6C85}
\definecolor{lblue}{RGB}{74,144,226}
\definecolor{peagreen}{RGB}{152,193,39}
\definecolor{typ_navy}{HTML}{001f3f}
\definecolor{typ_blue}{HTML}{0074d9}
\definecolor{typ_aqua}{HTML}{7fdbff}
\definecolor{typ_teal}{HTML}{39cccc}
\definecolor{typ_eastern}{HTML}{239dad}
\definecolor{typ_purple}{HTML}{b10dc9}
\definecolor{typ_fuchsia}{HTML}{f012be}
\definecolor{typ_maroon}{HTML}{85144b}
\definecolor{typ_red}{HTML}{ff4136}
\definecolor{typ_orange}{HTML}{ff851b}
\definecolor{typ_yellow}{HTML}{ffdc00}
\definecolor{typ_olive}{HTML}{3d9970}
\definecolor{typ_green}{HTML}{2ecc40}
\definecolor{typ_lime}{HTML}{01ff70}
\definecolor{newgreen}{HTML}{83c702}
\definecolor{newpurp}{RGB}{97,96,121}
\usepackage[style=alphabetic,natbib=true,maxcitenames=2, maxbibnames=10,backend=bibtex]{biblatex}
\addbibresource{sample-bibliography.bib}

\allowdisplaybreaks

\title{Regret Minimization for Piecewise Linear Rewards:\\Contracts, Auctions, and Beyond}

\author{
    Francesco Bacchiocchi\thanks{Politecnico di Milano, \texttt{\textcolor{blueGrotto}{francesco.bacchiocchi@polimi.it}}}\and
    Matteo Castiglioni\thanks{Politecnico di Milano, \texttt{\textcolor{blueGrotto}{matteo.castiglioni@polimi.it}}} \and
    Alberto Marchesi \thanks{Politecnico di Milano, \texttt{\textcolor{blueGrotto}{alberto.marchesi@polimi.it}}}
    \and
    Nicola Gatti \thanks{Politecnico di Milano, \texttt{\textcolor{blueGrotto}{nicola.gatti@polimi.it}}}
}

\date{}

\begin{document}

\maketitle
\begin{abstract}
Most microeconomic models of interest involve optimizing a \emph{piecewise linear function}.
	These include \emph{contract design} in hidden-action principal-agent problems, selling an item in \emph{posted-price} auctions, and bidding in \emph{first-price} auctions.
	When the relevant model parameters are \emph{unknown} and determined by some (unknown) probability distributions, the problem becomes \emph{learning how to optimize an unknown and stochastic piecewise linear reward function}.
	Such a problem is usually framed within an \emph{online learning} framework, where the decision-maker (learner) seeks to \emph{minimize the regret} of \emph{not} knowing an optimal decision in hindsight.
	This paper introduces a general online learning framework that offers a unified approach to tackle \emph{regret minimization for piecewise linear rewards}, under a suitable \emph{monotonicity} assumption commonly satisfied by microeconomic models.
	We design a learning algorithm that attains a regret of $\widetilde{O}(\sqrt{nT})$, where $n$ is the number of ``pieces'' of the reward function and $T$ is the number of rounds.
	This result is tight when $n$ is \emph{small} relative to $T$, specifically when $n \leq T^{\nicefrac{1}{3}}$.
	Our algorithm solves two open problems in the literature on learning in microeconomic settings.
	First, it shows that the $\widetilde{O}(T^{\nicefrac{2}{3}})$ regret bound obtained by~\citet{zhu2023sample} for learning optimal \emph{linear} contracts in hidden-action principal-agent problems is \emph{not} tight when the number of agent's actions is small relative to $T$.
	Second, our algorithm demonstrates that, in the problem of learning to set prices in posted-price auctions, it is possible to attain suitable (and desirable) instance-independent regret bounds, addressing an open problem posed by~\citet{cesa2019dynamic}.
\end{abstract}

\pagenumbering{gobble} 

\clearpage
\tableofcontents
\renewcommand{\thmtformatoptarg}[1]{~#1}
\newpage
\pagenumbering{arabic}
\section{Introduction}

Several microeconomic problems involve the optimization of a \emph{piecewise linear} function.
These include \emph{contract design} in hidden-action principal-agent problems (see, \emph{e.g.},~\citep{dutting2019simple}), selling items in \emph{posted-price} auctions~(see, \emph{e.g.},~\citep{kleinberg2003value}), and bidding in \emph{first-price} auctions under certain modeling assumptions~(see, \emph{e.g.},~\citep{daskalakis2022learning}).
For instance, in a posted-price auction, a seller proposes a price $p$ to a potential buyer, who has some (unknown) private valuation $v$ for the item being sold.
The goal of the seller is to maximize revenue.
This is equal to $p$ whenever $p \leq v$, since the buyer buys the item, while it is zero if $p > v$, since the item goes unsold in such a case.
Thus, the seller's revenue is clearly piecewise linear as a function of the price $p$.
Another example is a basic contract design problem, in which a principal is willing to incentivize an agent to exert effort on a project, by promising them a fraction $\rho$ of the project's earnings (or rewards).
Based on $\rho$, the agent chooses a hidden action $i$, which corresponds to a certain level of effort put into the project.
Therefore, the principal's utility to be maximized is equal to $(1- \rho) R_i$, where $R_i$ is the expected reward under the action $i$ chosen by the agent, given $\rho$.
This is clearly a piecewise linear function of $\rho$, with the ``pieces'' determined by the agent's actions.

Algorithmic game theory has been historically concerned with the design of efficient computational methods to solve optimization problems arising from microeconomic models, under the assumption that all their relevant parameters are \emph{known}.
In simple settings, this usually reduces to maximizing or minimizing some (known) piecewise linear function, a task that can be easily tackled by means of standard linear programming techniques.
In recent years, growing attention has been devoted to settings in which the model parameters are \emph{unknown} and determined by some (unknown) probability distributions.
Such settings beget the fundamental and challenging problem of \emph{learning how to optimize an unknown and possibly stochastic piecewise linear function}.
%

%
This problem is typically framed within an \emph{online learning} framework, where the learner (\emph{e.g.}, the principal in contract design or the seller in posted-price auctions) repeatedly faces the ``one-shot'' decision-making problem to gather feedback on relevant model parameters (see, \emph{e.g.}~\citep{cesa2019dynamic,zhu2023sample}).
%
%
At each round, the learner gets a reward for their decision, which is determined by the unknown and possibly stochastic piecewise linear reward function that characterizes the setting. 
%
%
Thus, the learner's objective becomes \emph{minimizing the regret} with respect to what they could have obtained with knowledge of the parameters defining such a function.
%

In this paper, we introduce a general online learning framework that offers a unified approach to tackle \emph{regret minimization for piecewise linear rewards}, under a suitable \emph{monotonicity} assumption commonly satisfied by microeconomic settings.
This framework enables us to address some open problems in the literature on learning in microeconomic models, as we describe below.

\subsection{Regret Minimization for Piecewise Linear Rewards}

We refer to our online learning framework as \emph{Bandit with Monotone Jumps} (BwMJ).
%
%
%
In an instance of BwMJ, the learner takes actions $\alpha \in [0,1]$.
The action space is partitioned into $n \in \mathbb{N}_+$ action intervals (unknown to the learner), denoted $\A_i$ for $i = 1, \ldots, n$.
Each interval corresponds to a distinct ``piece'' of the piecewise linear reward function that characterizes the instance.
Specifically, the learner's reward for any action $\alpha \in \A_i$ is a random variable, and the expectations of such variables define a linear function of $\alpha$ over the interval $\A_i$.
This is because the reward for an action $\alpha \in \A_i$ is the product between a (deterministic) term $\ell(\alpha)$, where $\ell$ is a given linear function shared across all intervals, and a sample from an interval-specific probability distribution $\nu_i$, whose expected value is $\mu_i$.
Therefore, the expected reward over the interval $\A_i$ is the function $\ell(\alpha) \mu_i$.
Note that any action defining the ``meeting point'' between two intervals $\A_i$ and $\A_{i+1}$ corresponds to a \emph{jump discontinuity}, where the learner's expected reward abruptly changes from $\ell(\alpha) \mu_i$ to $\ell(\alpha) \mu_{i+1}$.
These points are crucial elements characterizing a BwMJ problem.
Figure~\ref{fig:jumps} shows an example of BwMJ instance.
%

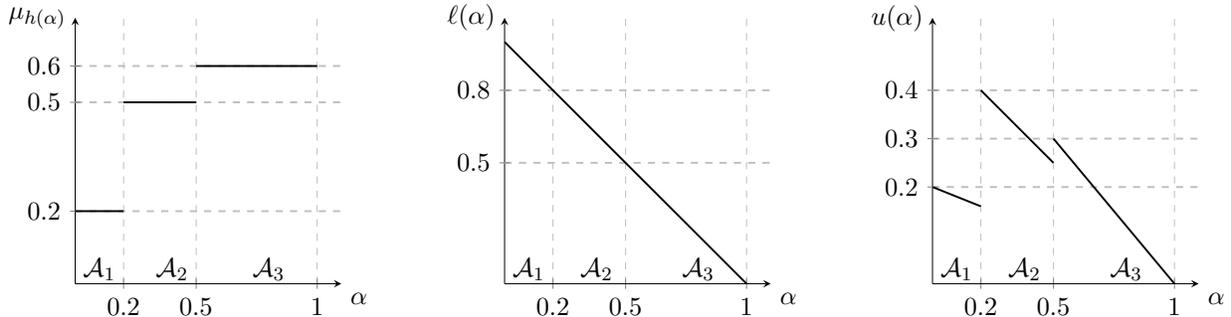
\begin{figure}[!t]
    \centering
    \resizebox{\linewidth}{!}{\begin{tikzpicture}
	\begin{axis}[
		name=plot1,
		axis lines=middle,
		xlabel={$\alpha$},
		xlabel style={below right},
		ylabel={$\mu_{h(\alpha)}$},
		ylabel style={left},
		xmin=0, xmax=1.1,
		ymin=0, ymax=0.733,
		xtick={0,0.2,0.5,1},
		ytick={0},
		grid=both,
		grid style={dashed,gray!50},
		extra y ticks={0.2, 0.5,0.6},
		extra y tick style={grid=major, grid style={thick, dashed}},
		width=5.5cm,
		height=5.5cm
		]
		\addplot[domain=0:0.2, samples=2, thick, color=black] {0.2};
		\addplot[domain=0.2:0.5, samples=2, thick, color=black] {0.5};
		\addplot[domain=0.5:1, samples=2, thick, color=black] {0.6};
		\node at (axis cs:0.10, 0.04) {$\mathcal{A}_1$};
		\node at (axis cs:0.4, 0.04) {$\mathcal{A}_2$};
		\node at (axis cs:0.8, 0.04) {$\mathcal{A}_3$};
	\end{axis}
	
	\begin{axis}[
		at={(plot1.east)},
		anchor=west,
		xshift=2.4cm, 
		axis lines=middle,
		xlabel={$\alpha$},
		xlabel style={below right},
		ylabel={$\ell(\alpha)$},
		ylabel style={left},
		xmin=0, xmax=1.1,
		ymin=0, ymax=1.1,
		xtick={0,0.2,0.5,1},
		ytick={0},
		grid=both,
		grid style={dashed,gray!50},
		extra y ticks={0.5, 0.8},
		extra y tick style={grid=major, grid style={thick, dashed}},
		width=5.5cm,
		height=5.5cm
		]
		\addplot[domain=0:1, samples=100, thick, color=black] {(1-x)};
		\node at (axis cs:0.10, 0.06) {$\mathcal{A}_1$};
		\node at (axis cs:0.38, 0.06) {$\mathcal{A}_2$};
		\node at (axis cs:0.8, 0.06) {$\mathcal{A}_3$};
	\end{axis}
	
	\begin{axis}[
		at={(plot1.east)},
		anchor=west,
		xshift=8.7cm, 
		axis lines=middle,
		xlabel={$\alpha$},
		xlabel style={below right},
		ylabel={$u(\alpha)$},
		ylabel style={left},
		xmin=0, xmax=1.1,
		ymin=0, ymax=0.55,
		xtick={0,0.2,0.5,1},
		ytick={0},
		grid=both,
		grid style={dashed,gray!50},
		extra y ticks={0.2, 0.3, 0.4},
		extra y tick style={grid=major, grid style={thick, dashed}},
		width=5.5cm,
		height=5.5cm
		]
		\addplot[domain=0:0.2, samples=2, thick, color=black] {0.2*(1-x)};
		\addplot[domain=0.2:0.5, samples=2, thick, color=black] {0.5*(1-x)};
		\addplot[domain=0.5:1, samples=2, thick, color=black] {0.6*(1-x)};
		\node at (axis cs:0.10, 0.03) {$\mathcal{A}_1$};
		\node at (axis cs:0.38, 0.03) {$\mathcal{A}_2$};
		\node at (axis cs:0.8, 0.03) {$\mathcal{A}_3$};
	\end{axis}
\end{tikzpicture}}
    \caption{Example of BwMJ instance. (\emph{Left}) The expected values $\mu_{h(\alpha)}$ as a function of $\alpha$, where $h(\alpha)$ is defined as the index $i$ such that $\alpha \in \A_i$. (\emph{Center}) The linear function $\ell(\alpha) \coloneqq 1-\alpha$. (\emph{Right}) The learner's expected reward $u(\alpha) \coloneqq \ell(\alpha) \mu_{h(\alpha)}$, which is a piecewise linear function over the action space $[0,1]$.}
    \label{fig:jumps}
\end{figure}

The learner interacts with the environment over $T$ rounds.
At each round $t$, they select an action $\alpha_t \in [0,1]$ and receive feedback in the form of a sample $x_t$ from the probability distribution $\nu_i$, where $\alpha_t \in \A_i$.
%
%
The learner only knows the linear function $\ell$, but they do \emph{not} know anything about the action intervals $\A_i$ and their corresponding probability distributions $\nu_i$, including their expectations $\mu_i$.
Thus, the learner's goal is to \emph{minimize the regret accumulated over $T$ rounds}, which is the difference between the expected reward attained by an \emph{optimal} action---computed with knowledge of the intervals $\A_i$ and their distributions $\nu_i$---and the expected reward obtained by the actions $\alpha_t$.
%

The BwMJ framework captures a variety of learning problems arising in microeconomic models.
For example, in posted-price auctions, it can model the problem of learning to set prices against unknown buyers, with the ``pieces'' of the piecewise linear reward function corresponding to the different possible (stochastic) buyer valuations~\citep{cesa2019dynamic}.
Moreover, in contract design, BwMJ instances capture the problem faced by a principal who must learn an optimal linear contract without knowing their own rewards and the agent's features, such as action costs and their outcome-determining probability distributions~\citep{zhu2023sample}.
%

Without imposing any conditions on the \emph{gaps} $\mu_{i+1}-\mu_i$ that characterize the jump discontinuities in a BwMJ instance, it is immediately clear that the regret-minimization problem is \emph{ill-posed}, as no algorithm can achieve sublinear regret (see Section~\ref{sec:learner_policies}).
This is why we introduce the \emph{monotonicity} property in the definition of BwMJ instances.
Specifically, we require that the learner's expected rewards are always increasing (or equivalently, decreasing) at the jumps.
In other words, we assume that all the gaps $\mu_{i+1}-\mu_i$ are positive or negative.
As we show in this paper, this condition is sufficient to enable the design of regret-minimization algorithms.
Furthermore, the monotonicity property does \emph{not} limit the modeling power of BwMJ instances, as it holds in most of the microeconomic settings captured by the framework.
For instance, in contract design, the jumps are characterized by positive gaps, whereas in posted-price auctions, the gaps are negative.
%

%

\subsection{Our Results and Techniques}

BwMJ problems can be seen as a special case of continuum-armed bandits with one-sided \emph{Lipschitz} rewards.
A natural regret-minimization algorithm for this setting is the one by~\citet{kleinberg2004nearly}, which guarantees an instance-independent regret bound of $\widetilde{O}(T^{\nicefrac{2}{3}})$. This algorithm uniformly discretizes the action space with step size $\epsilon \propto T^{\nicefrac{1}{3}}$ and runs a regret minimizer over the discretized set.
While this approach is optimal in the worst case, it becomes largely suboptimal in BwMJ instances where the number of jump discontinuities is small, as it may waste many rounds trying actions that are clearly suboptimal.

A more refined approach proposed by~\citet{cesa2019dynamic}---in the restricted setting of posted-price auctions---is to first attempt to identify the jump discontinuities and then run a regret minimizer only over the best action within each interval $\mathcal{A}_i$ (see Figure~\ref{fig:jumps}). However, given the stochastic nature of the feedback, detecting \emph{all} of the needed jump discontinuities is particularly challenging.
Indeed, the learner may incur large regret in order to identify jumps with excessively small gaps.  
%
%
Thus, the algorithm by~\citet{cesa2019dynamic} only guarantees an \emph{instance-dependent} regret bound that depends on the gap sizes and is largely suboptimal when the gaps are too small.
%
%
%

In this paper, we propose an approach for regret minimization in general BwMJ instances that avoids identifying jumps with small gaps.
%
%
Our approach consists of structuring the learning process into epochs.
%
%
In each epoch, our algorithm first searches for jumps with gaps larger than a suitable threshold. Once identified, the algorithm uses collected information about jumps to shrink the learner's action space by excluding actions that are clearly suboptimal. It then proceeds to the next epoch, halving the threshold used to identify jumps. The shrinking step is a crucial part of our procedure, as it ensures that in the subsequent epoch, the algorithm only considers actions that are not too suboptimal, thus effectively controlling the regret.
%

A key advantage of our method is that the resulting regret bound depends only on the \emph{number} of discontinuities, rather than their magnitudes. This leads to significantly improved performance in most BwMJ instances, \emph{i.e.}, when the number of jumps is small.
In particular, our learning algorithm guarantees a regret bound of $\widetilde{O} ( \sqrt{n T} )$, which we show to be tight when $n \leq T^{\nicefrac{1}{3}}$. Note that, as discussed above, when this condition is not satisfied, treating the problem as a generic one-sided Lipschitz bandit is the appropriate strategy.
%
%
More formally, we prove the following:
%
\begin{theorem}
    There exists an algorithm that attains $\widetilde{O} ( \sqrt{n T} )$ regret in any BwMJ instance, where $n$ is the number of action intervals of the instance and $T$ is the number of rounds of the learning interaction.
\end{theorem}

\begin{table}[!t]
    \centering
    \caption{Summary of the open problems solved in this paper. \emph{Notes:} ${\ddag}$ Regret bound for \emph{Bayesian} settings, where $d$ is the number of possible agent’s types. ${\dagger}$ The quantity $V$ is such that $V \propto \max_{i \in [n]} \nicefrac{1}{(v_{i+1} - v_{i})}$, where $v_i$ denotes the $i$-th buyer's valuation (see~\citep{cesa2019dynamic} for a formal definition).}
    \label{tab:results}
    \renewcommand{\arraystretch}{1.5}
    \begin{tabular}{c|c|c}
        \toprule
        Paper\,/\ Problem & Posted-price auctions & Principal-agent problems \\
         \hline
        \citet{zhu2023sample} & --- & $\widetilde{\mathcal{O}}(T^{\nicefrac{2}{3}})$ \\ \hline
        \citet{cesa2019dynamic} & $\widetilde{\mathcal{O}}(\sqrt{nT} + V(V+1))^\dagger$ & ---  \\  \hline
        \rowcolor{gray!20} Ours & $\widetilde{\mathcal{O}}(\sqrt{nT})$ & $\widetilde{\mathcal{O}}(\sqrt{nT}) \quad \widetilde{\mathcal{O}}(\sqrt{ndT})^{\ddag}$  \\ \bottomrule
    \end{tabular}
\end{table}
%
The above theorem addresses two major \emph{open questions} in the literature on learning in microeconomic models, as discussed below.
Table~\ref{tab:results} provides a summary of the results achieved.
%
%
\begin{itemize}
	\item \textbf{Learning optimal linear contracts in hidden-action principal-agent problems.} This learning problem was originally addressed by~\citet{zhu2023sample}, who provided an algorithm that achieves $\widetilde{O}(T^{\nicefrac{2}{3}})$ regret.
	\citet{zhu2023sample} also complement this result with a lower bound showing that such a regret guarantee is tight in $T$.
	However, their lower bound relies on problem instances in which the number $n$ of actions available to the agent grows proportionally to $T^{\nicefrac{1}{3}}$.
	Thus, they leave open the question of establishing whether better regret guarantees can be attained when the number of actions is small.
    %
	%
	%
	Our learning algorithm demonstrates that $\widetilde{O} ( \sqrt{n T} )$ regret can be attained in non-\emph{Bayesian} settings when the number of agent's actions $n$ is small relative to $T$, specifically when $n \leq T^{\nicefrac{1}{3}}$.
    We also provide a lower bound showing that such a regret guarantee is tight (Theorem~\ref{thm:lb}), complementing the lower bound by~\citet{zhu2023sample}.
    Furthermore, our learning algorithm also allows to achieve $\widetilde{O} ( \sqrt{n d T} )$ regret in \emph{Bayesian} settings, where $d$ is the number of possible agent's types.
    %
    %
	%
	\item \textbf{Learning in posted-price auctions with finitely many valuations.} This problem was originally addressed by~\citet{cesa2019dynamic}, who proposed an algorithm achieving a regret bound of $\widetilde{\mathcal{O}}(\sqrt{nT} + V(V+1))$ when cast within the BwMj framework, where $V$ is an instance-dependent parameter that encodes the difficulty of identifying jumps. 
    \citet{cesa2019dynamic} left as an open question establishing whether the dependence on $V$ in the regret bound is avoidable or \emph{not}.
   Since posted-price auctions can be framed as specific BwMJ instances, our learning algorithm shows that the dependence on $V$ in the regret bound achieved by~\citet{cesa2019dynamic} can indeed be avoided.
    %
\end{itemize}

\subsection{Related Works}
\paragraph{Online learning in contract design} 
Our paper is closely related to the line of research that considers repeated interactions in hidden-action principal-agent problems (see, \emph{e.g.},~\cite{dutting2024algorithmic} for a detailed survey). 
\citet{chien2016adaptive} first introduce the problem of learning approximately-optimal contracts in a setting where the principal has no prior knowledge of the agent and repeatedly interacts with them. 
They propose an algorithm with $\widetilde{\mathcal{O}}(\sqrt{m}\, T^{\nicefrac{2}{3}})$ regret with respect to an optimal linear contract, where $m$ is the number of outcomes. 
Subsequently, \citet{zhu2023sample} improve the result in \cite{chien2016adaptive} by providing an algorithm that achieves an $\widetilde{\mathcal{O}}(T^{\nicefrac{2}{3}})$ regret bound.
They also show that this bound is tight when the agent has a number of actions $n$ satisfying $n \propto T^{1/3}$. 
\citet{DuettingOptimal2023} show that it is possible to achieve $\mathcal{O}(\log\log T)$ regret when the principal receives stronger feedback than that considered by~\citet{zhu2023sample}.
Specifically, \citet{DuettingOptimal2023} assume that the principal observes their expected utility in each contract they propose to the agent.
In addition, \citet{zhu2023sample} also establish near-tight exponential regret bounds for general (bounded) contracts. 
Therefore, to overcome their negative result, \citet{bacchiocchi2024learning} and \citet{chen2024bounded} introduce additional assumptions and provide polynomial regret guarantees.

\paragraph{Dynamic Pricing and Posted Price Auctions}
Our work is also related to dynamic pricing and online posted price auctions. 
Over the last years, these problems have received significant attention from the computer science community (see, \emph{e.g.}, ~\cite{den2015dynamic}\ for a detailed survey). 
In contrast to the majority of previous works, in which the demand curve is a continuous function satisfying some regularity conditions (see, \emph{e.g.},~\cite{BesbesSurprising2015,Babaioff2015Dynamic}), our problem can be cast into a dynamic pricing problem with a piecewise constant demand curve.
In particular, \cite{SinglaWang2024} study sequential posted pricing under regular, \emph{i.e.}, continuous, valuation distributions, and their algorithm performs a binary-search procedure that shrinks a single interval across epochs. This is possible because regularity endows the seller’s expected utility with structural properties---half-concavity, a unique maximizer, monotonicity on each side of it, and Lipschitz continuity---that do not hold in our discrete framework.

\citet{denBoer2020Discontinuous} develop an algorithm for piecewise continuous demand curves, achieving a regret upper bound of $\mathcal{O}(C\sqrt{T} \log T)$ in the piecewise constant case.
Despite the fact that their problem is more general than the one captured by the BwMJ setting, the constant $C$ affecting their regret guarantees is an instance-dependent parameter.
Additionally, their approach requires prior knowledge of certain parameters, such as the number of discontinuities and the smallest drop in the demand curve.
\citet{cesa2019dynamic} improve the regret guarantees of~\citet{denBoer2020Discontinuous} when restricted to piecewise constant demand curves.
They show that it is possible to achieve a $\widetilde{\mathcal{O}}(\sqrt{nT} + V(V+1))$ regret bound, where $n$ is the number of buyer's valuations and the parameter $V$ is an instance-dependent parameter.
In the following, we show that it is possible to design an algorithm that provides instance-independent regret guarantees. 

\paragraph{Continuum-armed Bandits}
Our problem can be seen as a continuum-armed bandit problem~\cite{kleinberg2004nearly,Bubeck2008Online,Kleinberg2008bandits} in which the learner's expected utility is a one-sided Lipschitz function.
\citet{DuettingOptimal2023} consider a regret minimization problem when the learner's function is an unknown \emph{one-sided Lipschitz continuous function}.
However, unlike our paper, they receive deterministic feedback about the principal's utility after pulling an arm.
\citet{Jieming2018} study the problem of learning a Lipschitz function from binary feedback applied to contextual dynamic pricing. 
However, different from our paper, they consider an adversarial setting.
\citet{BalcanDickVitercik2018} study online learning for piecewise Lipschitz functions under a dispersion assumption, which requires discontinuities not to be overly concentrated across rounds. This assumption does not hold in our setting, since the same jump discontinuity may occur at every round and different jumps may be arbitrarily close. As a result, their dispersion-based guarantees may yield linear regret when applied to BwMJ instances.
\citet{pmlr-v206-mehta23a} introduce a threshold linear bandit problem in which, as in our problem, the learner’s utility is piecewise linear.
In their model, the learning action space coincides with $[0,1]^d$ (with $d \in \mathbb{N}$) and is divided into two regions by a separating hyperplane. 
In contrast, our work addresses scenarios with potentially many discontinuities, when the learner's action space is limited to the one-dimensional interval $[0,1]$.
Finally, \citet{LazzaroPikeBurke2025FixedBudget,LazzaroPikeBurke2025FixedConfidence} study change-point identification in piecewise constant bandits under fixed-budget and fixed-confidence objectives, respectively. Unlike our work, they focus on locating the discontinuities rather than minimizing cumulative regret.

\section{Bandits with Monotone Jumps}\label{sec:learning_problem}

In this section, we formally introduce the \emph{Bandit with Monotone Jumps} (BwMJ) problems addressed in this paper.
As we discuss extensively in the following Section~\ref{sec:applications}, BwMJ problems provide a general framework that captures several learning tasks arising from microeconomic models, such as posted-price auctions and algorithmic contract design in principal-agent problems.

\subsection{Model}

%

%


An instance of BwMJ is defined as a triplet $\mathcal{I} \coloneqq \left((\mathcal{A}_i)_{i \in [n]},( \nu_i)_{i \in [n]}, \ell \right)$,\footnote{In this paper, we denote by $[n]$ the set of the first $n \in \mathbb{N}_+$ natural numbers, namely $[n] \coloneqq \{ 1, \ldots, n \}$.} where:
%
%
\begin{itemize}
	\item $(\mathcal{A}_i)_{i \in [n]}$ is a collection of action intervals such that $\mathcal{A}_i\coloneqq [\overline\alpha_i,\overline\alpha_{i+1} )$ for every $i \in [n-1]$ and $\mathcal{A}_n\coloneqq [\overline\alpha_n,\overline\alpha_{n+1} ]$, with $\overline\alpha_1 = 0$, $\overline\alpha_{n+1} = 1$, and $\overline\alpha_i < \overline\alpha_{i+1}$ for all $i \in [n]$.
	%
	%
	%
	\item $(\nu_i)_{i \in [n]}$ is a collection of probability distributions in which each distribution $\nu_i$ has support $\text{supp}(\nu_i) \subseteq [0,1]$ and expected value $\mu_i \in [0,1]$, with $\mu_i < \mu_{i+1}$ for all $i \in [n-1]$.\footnote{Notice that, for ease of presentation, we focus on BwMJ instances in which $\mu_i < \mu_{i+1}$ for all $i \in [n-1]$. However, our results can be easily generalized to instances where $\mu_i > \mu_{i+1}$ for all $i \in [n-1]$.}
	%
	%
	%
	%
	\item $\ell: [0,1] \to [0,1]$ is a strictly decreasing linear function.\footnote{Notice that, since the sequence of expected values $(\mu_i)_{i \in [n]}$ is strictly increasing, we can focus w.l.o.g.~on BwMJ instances in which $ \ell: [0,1] \to [0,1] $ is a strictly decreasing linear function. Otherwise, as it will be clear later, the optimal action for the learner would trivially be $\alpha=1$. Moreover, let us remark that our results can be extended to deal with BwMJ instances in which $\ell : [0,1] \to [0,1]$ belongs to a broader class of functions. However, since linear functions are sufficient to encompass the main applications presented in Section~\ref{sec:applications}, in this paper we only deal with linear functions, for ease of presentation.}
	%
\end{itemize}
In a BwMJ problem, the learner only knows the linear function $\ell$, while they do \emph{not} know anything about the intervals $\A_i$ and their corresponding probability distributions $\nu_i$, including their supports and expected values $\mu_i$.
Let us also remark that the terminology ``monotone jumps'' refers to the jump discontinuities characterizing the expected rewards of the learner, as described later.
%

The action space available to the learner is the interval $[0,1]$.
Whenever the learner selects an action $\alpha \in [0,1]$, they receive as feedback a sample from a probability distribution $\nu_i$, where $i \in [n]$ is the index of the interval containing $\alpha$, namely $\alpha \in \mathcal{A}_i$.
%
%
Thus, every time the learner selects an action belonging to some interval $\mathcal{A}_i$, they receive as feedback a sample from the same distribution $\nu_i$.
For ease of notation, we introduce a function $h: [0,1] \to [n]$ that takes $\alpha \in [0,1]$ as input and returns the index $i \in [n]$ such that $\alpha \in \mathcal{A}_i$.
Formally, for every $\alpha \in [0,1]$, it holds that:
\begin{equation*}
    h(\alpha) :=\sum_{i \in [n]} \mathbbm{I}\{ \alpha \in \mathcal{A}_i \} \cdot i.
\end{equation*}
Notice that the function $h$ is always well defined, as the intervals $(\mathcal{A}_i)_{i \in [n]}$ are disjoint by definition.

After selecting an action $\alpha \in [0,1]$ and observing feedback $x \sim \nu_{h(\alpha)}$, the learner achieves a reward (or utility) equal to the product $\ell(\alpha) \cdot x$.
We introduce a function $u : [0,1] \to \mathbb{R}$ to compactly denote the learner's expected reward as a function of their action $\alpha \in [0,1]$.
Formally:
\begin{equation*}
	u(\alpha) \coloneqq  \ell(\alpha) \,  \mathbb{E}_{x \sim \nu_{h(\alpha)}}[x] = \ell(\alpha) \, \mu_{h(\alpha)}.
\end{equation*}
Notice that, even if the function $ \ell: [0,1] \to [0,1] $ is linear, the learner's expected reward $u(\alpha)$ is in general a \emph{piecewise linear} function, and, thus, it may be \emph{discontinuous}.
Indeed, the function $u(\alpha)$ is characterized by $n-1$ jump discontinuity points, corresponding to actions defining the ``meeting point'' of two consecutive intervals $\mathcal{A}_i$ and $\mathcal{A}_{i+1}$, where the value of $u(\alpha)$ abruptly ``jumps'' due to the fact that $\mu_{h(\alpha)}$ changes from $\mu_i$ to $\mu_{i+1}$.
We call such actions \emph{(monotone) jumps} since, by assumption, the value of $\mu_{h(\alpha)}$ can only increase as $\alpha$ increases, given that $\mu_{i+1} - \mu_i > 0$.
Moreover, we refer to $\mu_{i+1} - \mu_i$ as the \emph{gap} of the jump at the ``meeting point'' of the intervals $\A_i$ and $\A_{i+1}$.

%
%
%


%

Figure~\ref{fig:jumps} graphically depicts an example of BwMJ instance.

\subsection{Learner's Policies and Regret}\label{sec:learner_policies}

In a BwMJ problem, the learner repeatedly interacts with the environment over $T \in \mathbb{N}_+$ rounds.

At each round $t \in [T]$, the learner-environment interaction goes as follows:
%
%
\begin{enumerate}[noitemsep,nolistsep]
	\item The learner selects an action $\alpha_t$ from the interval $[0,1]$.
	\item Given the action selected by the learner, the environment samples $ x_t \sim \nu_{h(\alpha_{t})}$.
	\item The learner observes $x_t$ as feedback and achieves reward $\ell(\alpha_t) \cdot x_t$.
\end{enumerate}



The learner's behavior over the $T$ rounds is described by a \emph{(deterministic) policy}, which is defined as a tuple $\pi \coloneqq (\pi_t)_{t \in [T]}$ whose components $\pi_t : \mathcal{H}_{t-1} \to  [0,1]$ encode how the learner selects actions at each round.
Specifically, $\pi_t$ is a mapping from the history of observations up to the preceding round $t-1$, namely ${H}_{t-1} \coloneqq (\alpha_0,x_0,\alpha_1 \ldots, \alpha_{t-1},x_{t-1})\in\mathcal{H}_{t-1}$, to an action $\alpha_{t}=\pi_t(H_{t-1})$, where the set $\mathcal{H}_{t-1} $ contains all the possible histories of observations up to round $t -1$.

The performance of a learner's policy $\pi\coloneqq (\pi_t)_{t \in [T]}$ over the $T$ rounds is evaluated in terms of the \textit{(cumulative) pseudo-regret} (henceforth sometimes called \emph{regret} for short), which is defined as:
\begin{equation*}
	R_T(\pi) := T \cdot \textnormal{OPT} - \mathbb{E}\left[\sum_{t \in [T]} u(\alpha_t) \right],
\end{equation*}
where the expectation is taken with respect to the randomness of the environment that generates the feedback received by the learner at each round.

In the regret definition, $\text{OPT}$ represents the learner's expected reward for an action that is \emph{optimal in hindsight}.
This is defined as $\text{OPT}\coloneqq \max_{\alpha \in [0,1]}u(\alpha)$.\footnote{Notice that the learner's expected reward \emph{always admits a maximum} in a BwMJ instance. This follows from the fact that the linear function $\ell$ is decreasing and the intervals $\mathcal{A}_i$ are closed on the left. Thus, within each interval $\mathcal{A}_i$, the maximum exists and is attained at the left extreme of the interval. Since the union of all the intervals $\mathcal{A}_i$ gives the whole space of actions $[0,1]$, the learner's expected reward clearly admits a maximum.}
In the following, for any $\epsilon \in [0,1]$, we say that an action $\alpha \in [0,1]$ is $\epsilon$-optimal if $u(\alpha) \ge \text{OPT} - \epsilon$.
Furthermore, in the rest of this paper, we sometimes omit the dependency on the policy from the regret, by simply writing $R_{T}(\pi)$ as $R_{T}$, whenever the policy $\pi$ is clear from context.

The goal of the learner is to find a policy that minimizes the pseudo-regret.
Ideally, the learner would like a policy achieving \emph{no-regret}, meaning that the regret grows sublinearly in the number of rounds $T$, namely $R_T = o (T)$.
This ensures that the per-round regret $\frac{R_T}{T}$ goes to zero as $T$ grows.

We conclude the section with two observations related to the attainability of the learner's goal.

\paragraph{Why the jump monotonicity condition is needed?}
As a first observation, let us remark that, without assuming that the expected values $\mu_i$ constitute a monotone sequence, namely $\mu_i < \mu_{i+1}$ for all $i \in [n-1]$ (or equivalently, $\mu_i > \mu_{i+1}$ for all $i \in [n-1]$), it is \emph{not} possible to attain no-regret.
%
%
This is because it is always possible to construct a BwMJ instance $\mathcal{I} \coloneqq \left((\mathcal{A}_i)_{i \in [n]},( \nu_i)_{i \in [n]}, \ell \right)$ in which the learner can obtain a utility greater than zero only over an arbitrarily small interval.
Indeed, it is sufficient to consider an instance in which there exists an index $i \in [n]$ such that $\mu_i = 1$ and $ \mathcal{A}_i = [\alpha^\star, \alpha^\star + \epsilon)$ for $\alpha^\star \in (0,1)$ and an arbitrarily small $\epsilon > 0$, while $\mu_j = 0$ for all the other indexes $j \ne i$.
Clearly, since the learner does \emph{not} know $\alpha^\star$, the probability with which any algorithm selects an action belonging to $\mathcal{A}_i$ can be made arbitrarily close zero as $\epsilon$ goes to zero.
This results in any algorithm suffering a pseudo-regret that grows linearly in the number of~rounds~$T$.
%
%
%

\paragraph{Why existing algorithms are not enough?}
As a second observation, let us remark that a na\"ive approach to achieve no-regret in BwMJ problems would be to exploit existing algorithms for continuum-armed bandit settings with one-sided \emph{Lipschitz} rewards.
Indeed, it is easy to check that, in BwMJ problems, the learner's expected reward function $u(\alpha)$ is always one-sided \emph{Lipschitz}.
%
%
As a result, it is possible to employ the algorithm by~\citet{kleinberg2004nearly} for continuum-armed bandits to achieve pseudo-regret of the order of $ \widetilde{\mathcal{O}}(T^{\nicefrac{2}{3}}) $.
This regret bound is optimal for the general bandit instances considered in~\citep{kleinberg2004nearly}.
However, as we show in the rest of this paper, in BwMJ problems it is possible to improve the regret bound when the number of discontinuities is sufficiently small, by employing an \emph{ad hoc} approach that substantially differs from the one adopted by~\citet{kleinberg2004nearly}, so as to attain a regret bound of the order of $\mathcal{\widetilde{O}}(\sqrt{nT})$.

\section{Bandits With Monotone Jumps Applied to Learning in Microeconomics Settings}\label{sec:applications}

In this section, we show that the BwMJ framework captures several learning tasks arising from microeconomic models.
In Section~\ref{sec:applications_contracts}, we draw a connection between BwMJ and \emph{hidden-action principal-agent}~problems.
Specifically, we show that attaining no-regret against an optimal-in-hindsight \emph{linear} contract in a hidden-action principal-agent problem---a challenge faced by~\citet{zhu2023sample}---is equivalent to attaining no-regret in a suitably-defined BwMJ instance.
In Section~\ref{sec:applications_postedprice}, we provide an analogous connection with learning problems arising in \emph{posted-price} auctions with finitely many buyer valuations~\citep{kleinberg2003value}.
%
%
%
%
%
%
In Section~\ref{sec:applications_firstprice}, we provide an additional application in the context of \emph{first-price} auctions.

\subsection{No-Regret Learning in Hidden-Action Principal-Agent Problems}\label{sec:applications_contracts}


In \emph{hidden-action principal-agent} problems, a principal seeks to incentivize an agent to take a costly unobservable action that leads to favorable outcomes, through the provision of payments.
%

An instance of hidden-action principal-agent problem is defined by a tuple $(A, \Omega, r, F, c)$, where $A \coloneqq [n]$ is a set of $n$ agent's actions and $\Omega \coloneqq [m]$ is a set of $m$ outcomes.
%
%
Each outcome $j \in [m]$ is associated with a reward $r_j \in [0,1]$ for the principal,\footnote{The rewards in the definition of hidden-action principal-agent problem should \emph{not} be confused with those in the definition of BwMJ problem. Indeed, the latter correspond to principal's utilities in principal-agent problems, as shown in this section.} while each action $i \in [n]$ is characterized by a cost $c_i \in [0,1]$ for the agent---with $c_1 = 0$---and a probability distribution $F_i $ over outcomes.
%
%
%
For every $i \in [n]$, we let $R_i \coloneqq \mathbb{E}_{j \sim F_i}[r_j]$ be the principal's expected reward when the agent selects $i$.

The principal incentivizes the agent to take a desirable action by committing to a \emph{contract}, which specifies a payment $p_j \ge 0$ from the principal to the agent for every possible outcome $j\in [m]$.
%
%
We focus on the special subclass of \emph{linear contracts}, which simply transfer a fraction $\rho \in [0,1]$ of principal's rewards to the agent, namely $p_j = \rho  r_j$ for every $j \in [m]$.
Thus, under a linear contract parametrized by $\rho \in [0,1]$, if the agent selects an action $i \in [n]$, then the principal's expected utility is equal to $ (1-\rho)R_i $, while the agent's one is $ \rho R_i - c_i$. 
After the principal commits to a linear contract $\rho \in [0,1]$, the agent chooses a \emph{best response}, which is an action $i^\star(\rho) \in [n]$ maximizing their expected utility under $\rho$.
When the agent is indifferent among multiple actions, we assume that ties are broken in favor of the principal, as it is standard in the literature (see, \emph{e.g.},~\citep{dutting2019simple}).
%
Thus, the principal's expected utility is $u(\rho) \coloneqq (1-\rho) R_{i^\star(\rho)} $ as a function of $\rho \in [0,1]$.

Figure~\ref{fig:contract_reward}~(\emph{Left}) shows an example of principal's expected utility $u(\rho)$.

In the following, for the sake of presentation, we assume w.l.o.g.~that agent's actions are ordered in increasing order of expected reward, so that $R_i < R_{i+1}$ for every $i \in [n-1]$.
%
%
%
Moreover, we assume that all the actions are implementable by at least one linear contract, which means that, for every action $i \in [n]$, there exists some $\rho \in [0,1]$ such that $\rho R_i - c_i > \rho R_j - c_j$ for all $j \in [n]$ with $j \ne i$.

When the principal-agent interaction is repeated for $T \in \mathbb{N}_+$ rounds, at each round $t \in [T]$:
%
%
%
\begin{enumerate}
	\item The principal commits to a linear contract $\rho_t \in [0,1]$. 
	\item The agent plays a best response $i^\star(\rho_t)$, which is \emph{not} observed by the principal.
	\item The principal observes the outcome $j \sim F_{i^\star(\rho_t)}$ realized by the agent’s action.
	\item The principal achieves a utility equal to $(1-\rho_t)r_j$, while the agent gets $\rho_t r_j - c_{i^\star(\rho_t)}$.
\end{enumerate}
In such a repeated setting, a learning algorithm for the principal has to prescribe a linear contract to commit to at each round, without any knowledge about costs and probability distributions associated with agent's actions.
The goal is to design algorithms whose regret $T \cdot u(\rho^\star) - \mathbb{E} \left[ \sum_{t \in [T]} u(\rho_t)\right]$ grows sublinearly in the number of rounds $T$, where $\rho^\star \in \arg\max_{\rho \in [0,1]} u(\rho)$ denotes an optimal-in-hindsight linear contract (see, \emph{e.g.},~\citep{zhu2023sample}).
This is equivalent to achieving no-regret in a suitable BwMJ instance, which we describe in the following.
%

First, let us observe that agent's best responses induce a partition of linear contracts.
Indeed, by letting $\mathcal{B}_{i} \coloneqq \left\{\rho \in [0,1] \mid i^\star(\rho)=i \right\}$ be the set of contracts under which action $i \in [n]$ is a best response, it is easy to see that $\mathcal{B}_{i} = [\beta_i, \beta_{i+1})$ for every $i \in [n]$, for a suitable sequence $(\beta_{i})_{i \in [n+1]}$ such that $\beta_i < \beta_{i+1}$ for all $i \in [n]$, $\beta_1=0$, and $\beta_{n+1}=1$ (see, \emph{e.g.},~\citep{dutting2019simple}).
%
%
%
%

Then, the BwMJ instance can be defined as follows.
\begin{itemize}
	\item The intervals $(\mathcal{A}_i)_{i \in [n]}$ coincide with the sets $\mathcal{B}_i = [\beta_{i}, \beta_{i+1})$.
	\item The distributions $(\nu_i)_{i \in [n]}$ are discrete and supported over all the possible principal's rewards $(r_j )_{j \in [m]}$, with the probability that $\nu_i$ assigns to value $r_j$ being equal to $F_{i,j}$. 
	%
	%
	%
	%
	\item The linear function $\ell: [0,1] \to [0,1]$ is such that $\ell(\rho)=1-\rho$ for every $\rho \in [0,1]$.
\end{itemize}
Finally, let us observe that the monotonicity of the jumps is guaranteed by the fact that, by definition, $\mu_i = \mathbb{E}_{j \sim F_i} [ r_j ] = R_i$, and the sequence of principal's expected rewards $(R_i)_{i \in [n]}$ is increasing.

%
%
%

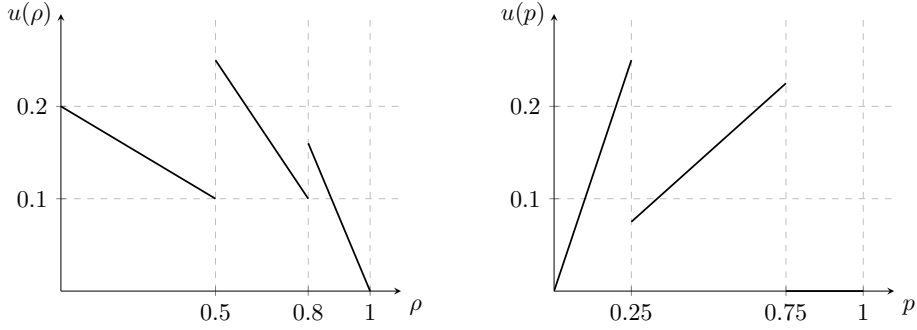
\begin{figure}[!t]
	\centering
	\resizebox{0.75\linewidth}{!}{\begin{tikzpicture}
	\begin{axis}[
		name=plot1,
		axis lines=middle,
		xlabel={$\rho$},
		xlabel style={below right},
		ylabel={$u(\rho)$},
		ylabel style={left},
		xmin=0, xmax=1.1,
		ymin=0, ymax=0.3,
		xtick={0,0.5,0.8,1},
		ytick={0,0.1,0.2},
		grid=both,
		grid style={dashed,gray!50},
		extra y ticks={0.5, 0.8},
		extra y tick style={grid=major, grid style={red, thick, dashed}},
		width=7cm,
		height=6cm
		]
		
		\addplot[black, thick, domain=0:0.5, samples=100] {0.2*(1-x)};
		\addplot[black, thick, domain=0.5:0.8, samples=100] {0.5*(1-x)};
		\addplot[black, thick, domain=0.8:1, samples=100] {0.8*(1-x)};
	\end{axis}
	\begin{axis}[
		name=plot2,
		at={(plot1.right of south east)},
		anchor=left of south west,
		xshift=1.0cm,
		axis lines=middle,
		xlabel={$p$},
		xlabel style={below right},
		ylabel={$u(p)$},
		ylabel style={left},
		xmin=0, xmax=1.1,
		ymin=0, ymax=0.3,
		xtick={0,1/4,3/4,1},
		ytick={0,0.1,0.2},
		grid=both,
		grid style={dashed,gray!50},
		width=7cm,
		height=6cm
		]
		
		\addplot[black, thick, domain=0:1/4, samples=100] {x};
		\addplot[black, thick, domain=1/4:3/4, samples=100] {0.3*x};
		\addplot[black, thick, domain=3/4:1, samples=100] {0.0*x};
	\end{axis}
	\hspace{-0.8cm}
	
\end{tikzpicture}}
	\caption{(\emph{Left}) Example of principal's expected utility by using linear contracts in a principal-agent problem. (\emph{Right}) Example of seller's expected utility in a posted-price auction.}
	\label{fig:contract_reward}
\end{figure}

\paragraph{Bayesian hidden-action principal-agent problems}
BwMJ problems also allow to encompass \emph{Bayesian} principal-agent settings in which the agent may be of different \emph{types}. 
Specifically, there is a set $\Theta \coloneqq [d]$ of $d$ agent's types, and both the costs $c_{i,k}$ and the probability distributions $F_{i,k}$ associated with agent's actions depend on the actual agent's type $k \in [d]$.
The goal is to attain no-regret against an optimal-in-hindsight linear contract when agent's types are drawn according to some fixed, but unknown, probability distribution $\mathcal{D}$. 
It is possible to show that an equivalent BwMJ instance can be built by partitioning $[0,1]$ into $nd$ intervals $\mathcal{B}_{i,k}$, each one corresponding to the set of contracts under which an agent of type $k \in [d]$ plays an action $i \in [n]$ as best response.

\subsection{No-Regret Learning in Posted-Price Auctions With Finitely Many Buyer Valuations}\label{sec:applications_postedprice}

In a \emph{posted-price} auction, a seller seeks to sell an item to a buyer, whose valuation for the item is sampled according to some unknown probability distribution (see, \emph{e.g.}, the random model in~\citep{kleinberg2003value}).
Formally, the seller posts a price $p \in [0,1]$ for the item, while the buyer's valuation $v \in (0,1)$ for the item is sampled from a probability distribution $\mathcal{V}$, which has support $\text{supp}(\mathcal{V}) \subseteq (0,1)$.
Then, the buyer chooses to acquire the item only if their valuation $v$ is greater than or equal to the price $p$ proposed by the seller.
Consequently, the seller's expected utility is equal to $u(p) \coloneqq p \cdot \mathbb{P}_{v \sim \mathcal{V}} \{ v \geq p \} = p \cdot (1- {F}(p)) $ for every $p \in [0,1]$, where $F: [0,1] \to [0,1]$ denotes a cumulative density function of $\mathcal{V}$ such that $F(x) \coloneqq \mathbb{P}_{v \sim \mathcal{V}} \{ v < x \}$ for all $x \in [0,1]$.

Figure~\ref{fig:contract_reward}~(\emph{Right}) depicts an example of seller's expected utility $u(p)$.

In the following, we assume that $\mathcal{V}$ is a discrete probability distribution such that $|\text{supp}(\mathcal{V})| = n$ for some $n \in \mathbb{N}_{+}$, 
and we define the support of $\mathcal{V}$ as $\text{supp}(\mathcal{V}) := \{v_i\}_{i=1}^n$.
%

When a posted-price auction is repeated for $T \in \mathbb{N}_+$ rounds, at each round $t \in [T]$: 
%
%
\begin{enumerate}
\item The seller proposes a price $ p_t \in [0,1] $ to the buyer.
\item The buyer’s valuation $ v_t \sim \mathcal{V} $ is sampled according to the distribution $ \mathcal{V} $.
\item If $ v_t \geq p_t $, then the buyer buys the item, the seller gains $ p_t $, and the buyer gets utility $ v_t - p_t $. Otherwise, if $ v_t < p_t $, both the seller and the buyer get utility zero. 
\end{enumerate}
Notice that, at the end of each round, the seller does \emph{not} observe $v_t$, but only whether the item has been sold or not.
In such a setting, the seller has \emph{no} knowledge about the distribution $\mathcal{V}$.
Thus, the learning task that they face is to minimize the regret with respect to the choice of always proposing an optimal-in-hindsight price, which is computed with knowledge of $\mathcal{V}$.

The learning task introduced above can be equivalently formulated in the framework of BwMJ problems.
%
%
Specifically, it can be captured by means of a BwMJ instance defined as follows.
\begin{itemize}
	\item There are $n+1$ intervals $(\mathcal{A}_i)_{i \in [n+1]}$ such that $\A_1 \coloneqq [v_0,v_1]$, with $v_0 \coloneqq 0$, and $\mathcal{A}_i \coloneqq (v_{i-1}, v_{i}]$ for all $i \in [n+1]$, with $v_{n+1} \coloneqq 1$.
	Intuitively, for all the prices in an interval $\mathcal{A}_i$, the item is sold with the same probability.
	Specifically, this is because, for all the valuations $v_j \in \text{supp}(\mathcal{V})$ with $j \geq i$, the buyer buys the item, while, for all the other valuations, they do \emph{not} buy. 
	\item The probability distributions $(\nu_i)_{i \in [n+1]}$ are \emph{Bernoulli} with parameters (\emph{i.e.}, expected values) $\mu_i \coloneqq 1 - F(v_{i})$.
	%
	Let us remark that $\mu_i$ is the probability of selling the item when the posted price lies within the interval $\mathcal{A}_i$, as $F(v_{i}) = \mathbb{P}_{v \sim \mathcal{V}} \{ v < v_{i} \}$ and $1 - F(v_{i}) = \mathbb{P}_{v \sim \mathcal{V}} \{ v \geq v_{i} \}$.
	%
	%
	\item The linear function $\ell: [0,1] \to [0,1]$ is such that $\ell(p)=p$ for every $p \in [0,1]$.
\end{itemize}
It is easy to verify that the monotonicity of the jumps is guaranteed, since $(\mu_i)_{i \in [n+1]}$ is indeed a decreasing sequence, as the cumulative density function $F$ is increasing by definition.

\subsection{No-Regret Learning in First-Price Auctions}\label{sec:applications_firstprice}

In a \emph{first-price} auction, a bidder with a private valuation $v \in [0,1]$ for the item being sold competes with other bidders in order to win the item, by placing a bid $b \in [0,1]$.
%
%
The bidders simultaneously send their bids, and the one with the highest bid wins the item, paying their own bid.
%
%
By letting $m \in [0,1]$ be the value of the highest bid submitted by the other bidders (\emph{i.e.}, excluding bid $b$), the bidder wins the item if and only if $b \geq m$ (assuming ties are broken in their favor), and they pay $b$, thus achieving a utility of $v - b$.
%
%
Otherwise, if $b < m$, the bidder does \emph{not} win, and their utility is equal to zero.
In the following, similarly to, \emph{e.g.}, \cite{han2024optimal}, we assume that the maximum bid submitted by the other bidders is drawn from a probability distribution $\mathcal{V}$, which is unknown to the bidder and satisfies $\text{supp}(\mathcal{V}) \subseteq [0,1]$ with $|\text{supp}(\mathcal{V})| = n$ for some $n \in \mathbb{N}_+$.
Thus, it is easy to verify that the bidder's expected utility, as a function of the submitted bid $b \in [0,1]$, is equal to $u(b) \coloneqq (v-b) \cdot \mathbb{P}_{m \sim \mathcal{V}} \{ b \geq m \}$.

When a bidder with valuation $v \in [0,1]$ repeats the auction for $T \in \mathbb{N}_+$ rounds, at each $t \in [T]$:\footnote{Notice that we assume that the bidder's valuation is the same at every round, while, in most of the works in the literature~\cite{han2024optimal,ai2022noregretlearningrepeatedfirstprice}, it is assumed that it is sampled at every round according to some fixed probability distribution. However, let us remark that, the latter setting cannot be captured by a BwMJ problem.}
%
%
\begin{enumerate}
	\item The bidder chooses a bid $b_t \in [0,1]$.
	%
	%
	\item The maximum bid of the other bidders $m_t \sim \mathcal{V}$ is sampled according to the distribution $\mathcal{V}$.
	\item The bidder’s utility is $v - b_t$ if $b_t \ge m_t$. Otherwise, if $b_t < m_t$, the bidder's utility is zero.
\end{enumerate}

The learning interaction described above corresponds to an equivalent one in a suitably-defined BwMJ instance.
We let each interval $\mathcal{A}_i$ be the set of bids for which the probability of winning the auction remains the same.
Furthermore, we let each distribution $\nu_i$ be a \emph{Bernoulli} distribution whose expected value corresponds to the probability of winning the auction in the interval $\mathcal{A}_i$.
Finally, we define the function $\ell(\alpha) = v - \alpha$.
\section{An Optimal No-Regret Algorithm for the BwMJ Framework}\label{sec:NoRegretAlgo}

In this section, we present the main result of this paper, which is a no-regret learning algorithm for BwMJ problems, whose regret guarantees are tight up to logarithmic factors.
The section is organized as follows.
Section~\ref{sec:MainAlgo} introduces the main procedure executed by the algorithm, while Section~\ref{sec:FindJumps} and Section~\ref{sec:OptShrink} focus on two core sub-procedures.
Finally, Section~\ref{sec:PutTogether} concludes by proving the regret guarantees attained by the algorithm.

\subsection{Recursive Jump Identification With Optimistic Shrinking}\label{sec:MainAlgo}

The main procedure executed by the algorithm---called \emph{Recursive Jump Identification with Optimistic Shrinking} (\texttt{RJI-OS} for short)---is provided in Algorithm~\ref{alg:main}.
%
%
The procedure splits the $T$ rounds into several epochs, with each epoch $j \in \mathbb{N}_{+}$ being further divided into three macro blocks.\footnote{Notice that, for the sake of presentation, in Algorithm~\ref{alg:main} and all its sub-procedures we do \emph{not} keep track of the current round $t \in [T]$. However, we assume that, whenever $t > T$, the execution of the algorithm is immediately stopped.}

In the \emph{first} macro block (Lines~\ref{line:firstformain}-\ref{line:endformain}), Algorithm~\ref{alg:main} attempts at identifying all the jump discontinuities whose corresponding gap $\mu_{i+1} -\mu_i$ is of the order of $\Delta_j \coloneqq \nicefrac{1}{2^j}$. 
To do this, the algorithm invokes the \texttt{Find-Jumps} procedure (Algorithm~\ref{alg:find_bp}), which employs a recursive binary-search-style approach.
In the \emph{second} macro block (Lines~\ref{line:beginSecondBlock}-\ref{line:endSecondBlock}), by leveraging the information about jumps collected by \texttt{Find-Jumps}, Algorithm~\ref{alg:main} computes an estimate of $\textnormal{OPT}$ for the current epoch $j$, denoted as $\textnormal{OPT}(j)$.
By using such an estimate and the information collected by \texttt{Find-Jumps}, in the \emph{third} macro block (Lines~\ref{line:beginThirdBlock}-\ref{line:endThirdBlock}), Algorithm~\ref{alg:main} reduces the set of actions $\alpha \in [0,1]$ to be considered during epoch $j+1$.
%
%
This is done by employing a suitably-designed procedure called \texttt{Optimistic-Shrink} (Algorithm~\ref{alg:restrict}), whose effects are crucial in order to achieve the desired regret guarantees.

Algorithm~\ref{alg:main} keeps track of all the information it needs into two data structures.
The first one is a collection of action intervals that the \texttt{Find-Jumps} procedure has to consider during the current epoch, denoted as $\mathcal{I}_j$ for epoch $j \in \mathbb{N}_+$.
Such a collection contains the whole action space $[0,1]$ during the first epoch, namely $\mathcal{I}_1 \coloneqq \{ [0,1] \}$, and it is updated at the end of each epoch (in the third macro block) by the \texttt{Optimistic-Shrink} procedure.
Notice that, during an epoch $j \in \mathbb{N}_+$, the set $\mathcal{I}_j$ may contain potentially non-adjacent action intervals, which do \emph{not} necessarily cover the whole action space $[0,1]$.
In the second data structure, which is denoted by $\mathcal{T}_j$ for epoch $j \in \mathbb{N}_+$, the algorithm keeps track of all the information collected by the \texttt{Find-Jumps} procedure in the current epoch.
Specifically, each element of $\mathcal{T}_j$ is a triplet specifying an action interval $I = [\alpha_1, \alpha_2]$ that has been returned by \texttt{Find-Jumps}, an estimate $\widehat \mu(\alpha_{1})$ of the expected value $\mu_{h(\alpha_1)}$, and an estimate $\widehat \mu(\alpha_{2})$ of $\mu_{h(\alpha_2)}$.
Notice that $\mathcal{T}_j$ is re-initialized at the beginning of each epoch.

\begin{algorithm}[!t]
	\caption{\texttt{Recursive Jump Identification with Optimistic Shrinking (RJI-OS)}}
	\label{alg:main}
	\begin{algorithmic}[1]
		\Require Number of rounds $T \in \mathbb{N}_+$
		\State Set $\mathcal{I}_1 \gets \{ [0,1] \}$
		\For{ $j \in \mathbb{N}_{+}$} 
		\State $\mathcal{T}_{j} \gets \varnothing$, $\Delta_j \gets \nicefrac{1}{2^j}$\label{line:firstformain}\Comment{\textcolor{gray}{First macro block}}
		\For{$I \in \mathcal{I}_{j}$}
		\State $\mathcal{T}_j\gets \mathcal{T}_j \cup \texttt{Find-Jumps}(I,\Delta_j ,1)$ 	\label{line:endformain}
		\EndFor
		\State $\text{OPT}(j) \gets 0$,  $\mathcal{I}_{j+1} \gets \varnothing$ \Comment{\textcolor{gray}{Second macro block}}\label{line:beginSecondBlock}
		\For{$ ( [\alpha_{1}, \alpha_{2}],\widehat \mu(\alpha_1), \widehat \mu(\alpha_2) ) \in \mathcal{T}_{j}$} \label{line:For} 
		\For{$\alpha \in \{\alpha_1,\alpha_2\}$}
		\If{$\ell(\alpha) \cdot \widehat \mu(\alpha) > \text{OPT}(j) $}
		\State $\text{OPT}(j) \gets \ell(\alpha) \cdot \widehat \mu(\alpha) $, $\alpha_j^\star \gets \alpha$
		\EndIf
		\EndFor 
		\EndFor \label{line:endSecondBlock}
		\For{$ \tau = ( [\alpha_{1}, \alpha_{2}],\widehat \mu(\alpha_1), \widehat \mu(\alpha_2) )\in \mathcal{T}_{j}$}  \Comment{\textcolor{gray}{Third macro block}}\label{line:beginThirdBlock}
		\State $\mathcal{I}_{j+1}\gets \mathcal{I}_{j+1} \cup \texttt{Optimistic-Shrink}(\tau,\text{OPT}(j),j)$ \label{line:end_first_algo}
		\EndFor
		\EndFor \label{line:endThirdBlock}
	\end{algorithmic}
\end{algorithm}

Next, we provide more details about the three macro blocks in each epoch $j \in \mathbb{N}_{+}$ of Algorithm~\ref{alg:main}. 
\begin{enumerate}
\item In the \emph{first} macro block (Lines~\ref{line:firstformain}-\ref{line:endformain}), Algorithm~\ref{alg:main} iterates over the intervals in $\mathcal{I}_{j}$.
For each interval, Algorithm~\ref{alg:main} invokes the \texttt{Find-Jumps} procedure to identify, within the interval, some jump discontinuity whose corresponding gap is of the order of $\Delta_j$.
To do so, \texttt{Find-Jumps} recursively calls itself, each time halving the interval given as input.
%
%
Specifically, given an interval $I = [\alpha_{1},\alpha_{2}]$, \texttt{Find-Jumps} first plays both the extremes of the interval for a number of rounds proportional to $\nicefrac{1}{\Delta_j^2}$.
%
This allows to build suitable estimates $\widehat \mu(\alpha_{1})$ and $\widehat \mu(\alpha_{2})$ of $\mu_{h(\alpha_{1})}$ and $\mu_{h(\alpha_{2})}$, respectively.
If the difference between such estimates is larger than $\Delta_j$, then \texttt{Find-Jumps} recursively invokes itself on the two halves of $I$.
%
Otherwise, it stops returning a triplet defined as $(I,\widehat \mu(\alpha_{1}),\widehat \mu(\alpha_{2}))$.
%
All the triplets generated by the different executions of \texttt{Find-Jumps} are stored by Algorithm~\ref{alg:main} in $\mathcal{T}_j$, in order to be used in the subsequent blocks.
%

\item In the \emph{second} macro block (Lines~\ref{line:beginSecondBlock}-\ref{line:endSecondBlock}), Algorithm~\ref{alg:main} computes the current estimate $\textnormal{OPT}(j)$ of the optimal expected utility $\textnormal{OPT}$, by using the estimates previously computed by \texttt{Find-Jumps}.
Specifically, Algorithm~\ref{alg:main} iterates over all the triplets $(I = [\alpha_{1},\alpha_{2}], \widehat \mu(\alpha_{1}), \widehat \mu(\alpha_{2})) \in \mathcal{T}_{j}$, and, for each of them, it computes the estimated expected utility at both the extremes of $I$, as $\ell(\alpha) \cdot \widehat \mu(\alpha)$ for $\alpha \in \{\alpha_{1},\alpha_{2}\}$.
The maximum among all such values is taken as $\textnormal{OPT}(j)$, while the extreme point in which it is attained is stored in $\alpha_j^\star$.
%
%

\item In the \emph{third} macro block (Lines~\ref{line:beginThirdBlock}-\ref{line:endThirdBlock}), Algorithm~\ref{alg:main} uses the \texttt{Optimistic-Shrink} procedure to identify a collection $\mathcal{I}_{j+1}$ of action intervals to be considered by the \texttt{Find-Jumps} procedure in the following epoch $j+1$.
The main goal of this step is to ensure that, in each possible $\alpha \in [0,1]$ selected during epoch $j+1$, the learner's expected utility satisfies $u(\alpha) \ge \textnormal{OPT} - \mathcal{O}(\Delta_j)$. 
In this way, the regret incurred by the algorithm during the execution of \texttt{Find-Jumps} in the first block of epoch $j+1$ is \emph{not} too large.
This is crucial to achieve the desired regret guarantees.
%
\end{enumerate}

\subsection{The \texttt{Find-Jumps} Procedure}\label{sec:FindJumps}


In this section, we present the \texttt{Find-Jumps} procedure, whose pseudocode is provided in Algorithm~\ref{alg:find_bp}.
At an high level, given an interval as input, the algorithm implements a recursive binary search in order to identify, within the interval, jump discontinuities whose associated gaps are proportional to a given $\Delta_j$.
Specifically, the algorithm takes as input a triplet $(I, \Delta_j, k)$, where $I = [\alpha_1, \alpha_2] \subseteq [0,1]$ is an action interval, $\Delta_j \in [0,1]$ is a jump gap considered by Algorithm~\ref{alg:main} in epoch $j \in \mathbb{N}_+$, and $k \in \mathbb{N}_{+}$ is the depth reached in the tree of recursive calls made by \texttt{Find-Jumps}.

Before delving into the details of Algorithm~\ref{alg:find_bp}, we make an observation that helps to better understand how it works.
%
%
Given an interval $I = [\alpha_{1},\alpha_{2}]\subseteq [0,1]$,
by playing each $\alpha_{i}$ for $\mathcal{O}\left(\nicefrac{1}{\Delta_j^2} \right)$ rounds, it is possible to compute some estimates $\widehat \mu(\alpha_{i})$ whose distance from the true expected values $\mu_{h(\alpha_{i})}$ is of the order of $\Delta_j$, with high probability.
%
%
Thus, if $\widehat \mu(\alpha_{2}) - \widehat \mu(\alpha_{1})$ is larger than $\Delta_j$, then, with high probability, there exists at least one jump discontinuity within the interval $I$.
However, notice that this does \emph{not} guarantee that there exists a \emph{single} jump discontinuity with gap proportional to $\Delta_j$. 
Indeed, there could be multiple jump discontinuities with smaller gaps, whose cumulative sum is proportional to $\Delta_j$.
Instead, if $\widehat \mu(\alpha_{2}) - \widehat \mu(\alpha_{1})$ is smaller than $\Delta_j$, then either there is no jump discontinuity in the interval $I$ or a gap smaller than $\Delta_{j}$ is needed in order to identify one.
%


\begin{algorithm}[!htp]
	\caption{\texttt{Find-Jumps}}
	\label{alg:find_bp}
	\begin{algorithmic}[1]
		\Require A triplet $(I,\Delta_j,k)$ with $I = [\alpha_1, \alpha_2] \subseteq [0,1]$, $\Delta_j \in [0,1]$, and $k\in \mathbb{N}_{+}$
		\State $\delta \gets \nicefrac{1}{T} $ 
		\State $N \gets \nicefrac{8}{\Delta_j^2} \log( \nicefrac{4T}{\delta} )$ 
		\If{$\alpha_2 - \alpha_1 \le \frac{1}{T}$} \label{line:small_int}
		\State Play $\alpha_t \coloneqq \alpha_{2}$ and observe feedback $x_t \sim \nu_{h(\alpha_t)}$ for $N$ rounds\label{line:small_int_ini}
		\State Compute estimate $\widehat \mu (\alpha_2)$ using the observed feedback
		\State $\widehat\mu(\alpha_1)  \gets 0$
		%
		%
		\State \textbf{return} $(I, \widehat\mu(\alpha_1), \widehat\mu(\alpha_2))$ \label{line:small_int_end}
		\EndIf
		\For{$i = 1,2 $} \label{line:collect_samples_findbp1}
		\State Play $\alpha_t \coloneqq \alpha_{i}$ and observe feedback $x_t \sim \nu_{h(\alpha_t)}$ for $N$ rounds\label{line:estimates}
		\State Compute estimate $\widehat \mu (\alpha_i)$ using the observed feedback
		%
		%
		\EndFor \label{line:collect_samples_findbp2}
		\If{$\widehat{\mu}(\alpha_{2}) -\widehat{\mu}(\alpha_{1}) \ge  \Delta_j $} \label{line:large_estimates}
		\State 	\textbf{return} $ \texttt{Find-Jumps}\left( [\alpha_1, \frac{\alpha_1+\alpha_2}{2}], \Delta_j, k+1 \right) \cup  \texttt{Find-Jumps} \left( [ \frac{\alpha_1+\alpha_2}{2}, \alpha_2], \Delta_j, k+1 \right)$ \label{line:recursion}
		\Else \label{line:small_estimates}
		\State \textbf{return} $(I, \widehat\mu(\alpha_1), \widehat\mu(\alpha_2))$\label{line:small_estimates_end}
		\EndIf
	\end{algorithmic}
\end{algorithm}

We now describe how Algorithm~\ref{alg:find_bp} works.
This can be conceptually split into two cases, depending on whether the length of the interval $I = [\alpha_{1},\alpha_{2}]$ given as input is greater than $\nicefrac 1 T$ or not.

\paragraph{Case $\alpha_{2} - \alpha_{1} \leq \nicefrac 1 T$}
%
%
In this case (see Lines~\ref{line:small_int}-\ref{line:small_int_end}), Algorithm~\ref{alg:find_bp} does \emph{not} search for jump discontinuities within the interval in input, and it returns the interval itself without further halving it.
Intuitively, this is because the interval is sufficiently small for the purposes of the algorithm.
This represents the first base case of the recursive binary search implemented by Algorithm~\ref{alg:find_bp}.
Before returning, the algorithm plays for $N \propto \nicefrac{1}{\Delta_j^2}$ rounds the right extreme $\alpha_2$ of the interval, in order to compute an estimate $\widehat \mu(\alpha_{2})$ of $\mu_{h(\alpha_2)}$, while it artificially sets the estimate $\widehat \mu(\alpha_{1})$ for the left extreme $\alpha_{1}$ to zero.
%
%
%
%
The reason why Algorithm~\ref{alg:find_bp} computes an estimate only for the right extreme of the interval is that, as it will become clear later on, this extreme provides an expected utility that is sufficiently close to the highest expected utility attainable over the whole interval.
Finally, in this case the algorithm ends its execution with Line~\ref{line:small_int_end}, by returning the triplet $(I, \widehat\mu(\alpha_1), \widehat\mu(\alpha_2))$.
%
%
%
%
%
%

\paragraph{Case $\alpha_{2} - \alpha_{1} > \nicefrac 1 T$}
%
%
In this case (see Lines~\ref{line:collect_samples_findbp1}-\ref{line:small_estimates_end}), Algorithm~\ref{alg:find_bp} searches for a jump discontinuity within the interval given as input.
In order to do so, the algorithm first computes two estimates $\widehat \mu(\alpha_1)$ and $\widehat \mu(\alpha_2)$ of the expected values $\mu_{h(\alpha_1)}$ and $\mu_{h(\alpha_2)}$ at the extremes of the interval, by playing each $\alpha_i$ for a suitable number of rounds $N$ proportional to $\nicefrac{1}{\Delta_j^2}$.
%
%
%
Then, depending on the value of the difference between such estimates, the algorithm proceeds its execution as follows.
\begin{enumerate}
\item If the difference $\widehat \mu(\alpha_2) - \widehat \mu(\alpha_1)$ is smaller than $\Delta_j$, Algorithm~\ref{alg:find_bp} stops searching for jumps discontinuities within the input interval, and, thus, it does \emph{not} invoke itself anymore (see Line~\ref{line:small_estimates_end}).
Indeed, in this case, either there is no jump discontinuity in the interval or a smaller value of $\Delta_j$ is needed in order to identify one, as we previously observed.
%
%
This represents another base case in the recursive binary search implemented by Algorithm~\ref{alg:find_bp}, in which the algorithm simply returns the triplet defined as $(I, \widehat\mu(\alpha_1), \widehat\mu(\alpha_2))$ (see Line~\ref{line:small_estimates_end}). 
\item If the difference $\widehat \mu(\alpha_2) - \widehat \mu(\alpha_1)$ is larger than $\Delta_j$, Algorithm~\ref{alg:find_bp} halves the input interval $I$ and recursively invokes itself with these two new subintervals as inputs (see Line~\ref{line:large_estimates}).
This is because, if the difference between the estimates is greater than $\Delta_j $, then there could be a jump discontinuity whose gap is proportional to $\Delta_j$ within that interval, as previously observed.    
\end{enumerate}
%
%
%
%

Next, we provide some results related to Algorithm~\ref{alg:find_bp}.
In the following, for the ease of presentation, we introduce a \emph{clean event} $\mathcal{E}$.
This encodes all the situations in which the estimates computed by Algorithm~\ref{alg:find_bp} fall within a suitable confidence interval, in \emph{every} execution of the algorithm.
%
%
\begin{definition}[Clean event]
	 We define $\mathcal{E}$ as the event in which, every time Algorithm~\ref{alg:find_bp} is called given as input a triplet $(I, \Delta_j, k)$ with $I = [\alpha_1, \alpha_2] \subseteq [0,1]$, $\Delta_j \in [0,1]$, and $k \in \mathbb{N}_{+}$, the estimates $\widehat \mu(\alpha_1)$ and $\widehat \mu(\alpha_2)$ computed by the algorithm satisfy the following conditions:
	 %
	 	\begin{equation*}
	 		\left| \widehat \mu(\alpha_i)- \mu_{h(\alpha_i)} \right| \le \frac{\Delta_j}{4} \quad  \forall i \in \{1,2\}.
	 	\end{equation*}
\end{definition}   
    
We now prove two lemmas that state two crucial properties that Algorithm~\ref{alg:find_bp} satisfies.
The first property ensures that, whenever Algorithm~\ref{alg:find_bp} stops recursively calling itself, then either the interval given as input is ``sufficiently'' tight or the jump discontinuities within the interval (if there are some) have gaps that are ``small enough''.
%
%
%
Formally:
\begin{restatable}{lemma}{FindBPFirst}\label{lem:find_bp_delta}
	%
	Suppose that Algorithm~\ref{alg:find_bp} is given as input a triplet $(I,\Delta_j,k)$, with $I = [\alpha_1, \alpha_2] \subseteq [0,1]$, $\Delta_j \in [0,1]$, and $k \in \mathbb{N}_{+}$.
	Under the clean event $\mathcal{E}$, if the algorithm does not recursively call itself, then one of the following two conditions holds:
	\begin{enumerate}
		\item $\alpha_2 -\alpha_1 \le \frac{1}{T} $.
		\item $\mu_{h(\alpha)} - \mu_{h(\alpha')} \le \frac{3}{2} \Delta_j$ for all $ \alpha, \alpha' \in I$ such that $\alpha \ge \alpha'$.
	\end{enumerate}
	%
\end{restatable}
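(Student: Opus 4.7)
The plan is to case-split on the two branches of Algorithm~\ref{alg:find_bp} in which no recursive call is made, namely the early return at Line~\ref{line:small_int_end} (triggered by the ``small interval'' test of Line~\ref{line:small_int}) and the return at Line~\ref{line:small_estimates_end} (triggered by the failure of the test at Line~\ref{line:large_estimates}). If we are in the first branch, then by the guard $\alpha_2 - \alpha_1 \le \nicefrac{1}{T}$ holds by construction, and condition~1 of the statement is immediate. So the entire substance of the lemma lies in the second branch.

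In that branch, we know simultaneously that $\alpha_2 - \alpha_1 > \nicefrac{1}{T}$ (otherwise Line~\ref{line:small_int} would have fired first) and that $\widehat\mu(\alpha_2) - \widehat\mu(\alpha_1) < \Delta_j$. Under the clean event $\mathcal{E}$, the estimates built at Lines~\ref{line:collect_samples_findbp1}--\ref{line:collect_samples_findbp2} satisfy $|\widehat\mu(\alpha_i) - \mu_{h(\alpha_i)}| \le \Delta_j/4$ for $i \in \{1,2\}$, so by the triangle inequality
\[
\mu_{h(\alpha_2)} - \mu_{h(\alpha_1)} \;\le\; \bigl(\widehat\mu(\alpha_2) - \widehat\mu(\alpha_1)\bigr) + \tfrac{\Delta_j}{2} \;<\; \tfrac{3}{2}\Delta_j.
\]
This handles the boundary values of $I$; it remains to promote the inequality to arbitrary pairs $\alpha, \alpha' \in I$ with $\alpha \ge \alpha'$.

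The promotion step rests on the monotonicity structure of a BwMJ instance. The function $h$ is non-decreasing because the intervals $(\mathcal{A}_i)_{i \in [n]}$ are arranged in increasing order along $[0,1]$, and the sequence $(\mu_i)_{i \in [n]}$ is strictly increasing by definition. Hence for any $\alpha' \ge \alpha_1$ we have $\mu_{h(\alpha')} \ge \mu_{h(\alpha_1)}$, and for any $\alpha \le \alpha_2$ we have $\mu_{h(\alpha)} \le \mu_{h(\alpha_2)}$. Combining these with the boundary bound gives
\[
\mu_{h(\alpha)} - \mu_{h(\alpha')} \;\le\; \mu_{h(\alpha_2)} - \mu_{h(\alpha_1)} \;\le\; \tfrac{3}{2}\Delta_j,
\]
which is precisely condition~2.

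There is no real obstacle here: the argument is essentially a two-line computation once the clean event is invoked, and the only subtlety worth being explicit about is the use of the jump-monotonicity hypothesis to extend the boundary bound to the whole interval. The place where one has to be a bit careful in writing the final proof is simply to check that the algorithm genuinely ends up in exactly one of the two described branches whenever it does not recurse, which follows directly from the if/else structure of Algorithm~\ref{alg:find_bp}.
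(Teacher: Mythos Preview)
Your proposal is correct and follows essentially the same approach as the paper's proof: case-split on the two non-recursive branches, use the clean event to convert the estimate gap $\widehat\mu(\alpha_2) - \widehat\mu(\alpha_1) < \Delta_j$ into $\mu_{h(\alpha_2)} - \mu_{h(\alpha_1)} \le \tfrac{3}{2}\Delta_j$, and then invoke jump monotonicity to extend the bound from the endpoints to all pairs $\alpha \ge \alpha'$ in $I$. The paper's version is slightly terser (it collapses the monotonicity step and the clean-event step into a single chain of inequalities), but the logic is identical.
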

\begin{proof}
	To prove the lemma, we notice that when Algorithm~\ref{alg:find_bp} returns a tuple containing the same interval $I$ it has received as input, either the condition $|I| \le \nicefrac{1}{T}$ or the condition $\widehat{\mu}(\alpha_2) - \widehat{\mu}(\alpha_1) <  \Delta_j$ holds.
	This is due to the termination conditions specified in Algorithm~\ref{alg:find_bp}.
	
	In the following, we consider these two possible termination conditions and we show that if either one of them holds, then either $(1)$ or $(2)$ holds.
	\begin{enumerate}
		\item 	If $|I| \le \nicefrac{1}{T}$ , then condition (1) trivially holds.
		\item 	If $\widehat \mu (\alpha_2) -\widehat \mu (\alpha_1) < \Delta_j$, then 
		for all $\alpha, \alpha' \in I$ with $\alpha \geq \alpha'$, under the event $\mathcal{E}$, we have:
		\begin{equation*}
			\mu_{h(\alpha)}-\mu_{h(\alpha')}  \le \mu_{h(\alpha_2)}- \mu_{h(\alpha_1)} \le \widehat \mu (\alpha_2) -\widehat \mu (\alpha_1) +  \frac{\Delta_j}{2} \le  \frac{3}{2} \Delta_j.
		\end{equation*}
		Where the first inequality holds because of monotonicity, the second inequality holds under the event $\mathcal{E}$ and the last inequality holds because $\widehat \mu (\alpha_2) -\widehat \mu (\alpha_1)  < \Delta_j$ by hypothesis, and, thus condition (2) holds.
	\end{enumerate}
	Thanks to the two points above, the proof is concluded.
\end{proof}

The second crucial property that Algorithm~\ref{alg:find_bp} ensures is that, whenever it receives as input an interval such that the distributions at its extremes coincide, and, thus, they have the same expected value, the algorithm does \emph{not} invoke itself further.
Such a property is crucial; otherwise, Algorithm~\ref{alg:find_bp} may perform too many unnecessary recursive calls, resulting in a large regret. 
%
%
Formally:
\begin{restatable}{lemma}{FindBPSecond}\label{lem:find_bp_terminate}
	Suppose that Algorithm~\ref{alg:find_bp} is given as input a triplet $(I,\Delta_j,k)$ with $I=[\alpha_1,\alpha_2] \subseteq [0,1]$, $\Delta_j \in [0,1]$, and $k \in \mathbb{N}_{+}$ such that $\mu_{h(\alpha_1)} = \mu_{h(\alpha_2)}$.
	Then, under the clean event $\mathcal{E}$, the algorithm stops recursively invoking itself and terminates.
\end{restatable}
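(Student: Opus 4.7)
The plan is to do a direct case analysis on the length of the input interval $I = [\alpha_1,\alpha_2]$, since these are precisely the two cases in which Algorithm~\ref{alg:find_bp} makes its branching decisions.

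\textbf{Case 1: $\alpha_2 - \alpha_1 \le \nicefrac{1}{T}$.} In this case, the condition on Line~\ref{line:small_int} of Algorithm~\ref{alg:find_bp} is satisfied and the procedure reaches Line~\ref{line:small_int_end}, where it returns the triplet $(I, \widehat\mu(\alpha_1), \widehat\mu(\alpha_2))$ without performing any recursive call. Thus the claim holds trivially.

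\textbf{Case 2: $\alpha_2 - \alpha_1 > \nicefrac{1}{T}$.} Here Algorithm~\ref{alg:find_bp} proceeds to the loop on Lines~\ref{line:collect_samples_findbp1}--\ref{line:collect_samples_findbp2} and computes the estimates $\widehat \mu(\alpha_1)$ and $\widehat \mu(\alpha_2)$. Under the clean event $\mathcal{E}$, the definition yields $|\widehat \mu(\alpha_i) - \mu_{h(\alpha_i)}| \le \nicefrac{\Delta_j}{4}$ for $i\in\{1,2\}$. Combining these two bounds with the hypothesis $\mu_{h(\alpha_1)} = \mu_{h(\alpha_2)}$ via the triangle inequality gives
\[
\widehat{\mu}(\alpha_2) - \widehat{\mu}(\alpha_1) \le \bigl|\widehat{\mu}(\alpha_2) - \mu_{h(\alpha_2)}\bigr| + \bigl|\mu_{h(\alpha_1)} - \widehat{\mu}(\alpha_1)\bigr| \le \frac{\Delta_j}{2} < \Delta_j,
\]
so the condition on Line~\ref{line:large_estimates} fails. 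Consequently, the algorithm takes the else branch on Line~\ref{line:small_estimates} and returns on Line~\ref{line:small_estimates_end} without invoking itself recursively.

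In both cases no recursive call is performed, hence the execution terminates, which concludes the proof. The reasoning is almost entirely bookkeeping; the only substantive ingredient is the concentration guarantee of the clean event, and no obstacle beyond carefully matching the two base-case branches of Algorithm~\ref{alg:find_bp} is expected.
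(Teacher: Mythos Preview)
Your proof is correct and follows essentially the same approach as the paper: a two-case split on whether $|I|\le\nicefrac{1}{T}$, with the second case using the clean-event concentration bound to show $\widehat{\mu}(\alpha_2)-\widehat{\mu}(\alpha_1)\le\nicefrac{\Delta_j}{2}<\Delta_j$ so that the recursion condition on Line~\ref{line:large_estimates} fails.
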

\begin{proof}
	We split the proof into two different cases.
	\begin{enumerate}
		\item If $|I| \le \nicefrac{1}{T}$, then the condition at Line~\ref{line:small_int} is satisfied and thus the Algorithm~\ref{alg:find_bp} returns the pair $(I, j, k)$, independently of the distributions at the extremes.
		
		\item If $|I| > \nicefrac{1}{T}$, under the event $\mathcal{E}$, we have:
		\begin{equation*}
			\widehat \mu (\alpha_2) -\widehat \mu (\alpha_1) \le  \mu_{h(\alpha_2)} - \mu_{h(\alpha_1)} + \frac{\Delta_j}{2} = \frac{\Delta_j}{2}.
		\end{equation*}
		Thus, the condition at Lines~\ref{line:large_estimates} is not satisfied and therefore the algorithm returns the tuple $(I, j, k)$, concluding the proof.
	\end{enumerate}
\end{proof}

\subsection{The \texttt{Optimistic-Shrink} Procedure}\label{sec:OptShrink}

We now introduce the second key procedure employed by Algorithm~\ref{alg:main}, which is \texttt{Optimistic-Shrink}, whose pseudocode is provided in Algorithm~\ref{alg:restrict}.
At a high level, the algorithm takes as input an interval $I \subseteq [0,1]$ and returns a subinterval $I' \subseteq I$ such that, for every action $\alpha \in I'$ selected during the following epoch $j+1$, it holds that $u(\alpha) \ge \textnormal{OPT} - \mathcal{O}(\Delta_j)$.
To do so, Algorithm~\ref{alg:restrict} relies on the estimates of the expected values at the extremes of the interval $I$, computed by Algorithm~\ref{alg:find_bp}.
%
%
As a result, Algorithm~\ref{alg:restrict} does \emph{not} require the learner to interact with environment, as it just manipulates the intervals in $\mathcal{I}_j$.
Thus, the regret suffered by the algorithm is zero.

Depending on the length of the interval in input, Algorithm~\ref{alg:restrict} works in two different ways.
\begin{enumerate}
	\item If the length of the input interval $I = [\alpha_1, \alpha_2]$ is such that $\alpha_2 - \alpha_1 > \nicefrac{1}{T}$, then Algorithm~\ref{alg:restrict} returns a subinterval $I' \subseteq I$ such that, if $\alpha \in I'$, then the value of $\ell(\alpha)\widehat{\mu}(\alpha_1) + 2\Delta_j + \nicefrac{2}{T} $ is larger than $\text{OPT}(j)$ (see Line~\ref{line:restrict_large_int}).
	This is because, thanks to Lemma~\ref{lem:find_bp_delta} and the definition of the clean event $\mathcal{E}$, the value $\ell(\alpha)\widehat{\mu}(\alpha_1) + 2 \Delta_j + \nicefrac{2}{T}$ represents an optimistic estimate of the expected utility in each $\alpha \in I$.
	Therefore, if such an optimistic estimate is smaller than $\text{OPT}(j)$, then the learner's expected utility will be suboptimal at that given $\alpha$, as a consequence of the relationship between $\text{OPT}(j)$ and $\text{OPT}$, which we show in the following Lemma~\ref{lem:optimal_estimate}.
	%
	%
	%
	%
	\item If the length of the input interval $I= [\alpha_1, \alpha_2]$ is such that $\alpha_2 -\alpha_1 \le \nicefrac{1}{T}$, then Algorithm~\ref{alg:restrict} returns the interval itself only if the optimistic estimate of the learner's expected utility at the right extreme $\alpha_{2}$ is larger than the current estimate of the optimum $\text{OPT}(j)$ (see Line~\ref{line:restrict_small_int}).
	This is because such an optimistic estimate, due to the monotonicity of the jumps and the fact that the input interval is sufficiently small, represents an upper bound for the learner's expected utility over the entire interval.
	Hence, if the estimate is smaller than $\text{OPT}(j)$, then the learner's expected utility within that interval is suboptimal, given the relationship between $\text{OPT}(j)$ and $\text{OPT}$ in Lemma~\ref{lem:optimal_estimate}.
	As a result, Algorithm~\ref{alg:restrict} does \emph{not} return the interval $I$.
\end{enumerate}
%
%
%
%

Next, we prove some crucial properties that Algorithm~\ref{alg:restrict} satisfies.
The first property ensures that Algorithm~\ref{alg:restrict} always returns an interval $I \subseteq[0,1]$ containing an optimal action $\alpha^\star$.
%

Formally:
\begin{restatable}{lemma}{RestrictFirst}\label{lem:always_optimal}
	Under the clean event $\mathcal{E}$, for every epoch $j \in \mathbb{N}_{+}$, it holds $\alpha^\star \in I$ for some $I \in \mathcal{I}_j$.
\end{restatable}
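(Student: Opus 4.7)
The plan is to proceed by induction on the epoch index $j \in \mathbb{N}_+$. The base case $j=1$ is immediate, since $\mathcal{I}_1 = \{[0,1]\}$ contains any optimal action $\alpha^\star \in [0,1]$. For the inductive step, assume $\alpha^\star \in I$ for some $I \in \mathcal{I}_j$. I would first observe, by a short induction on the depth of recursion, that the union of the intervals appearing in the triplets returned by \texttt{Find-Jumps} on input $I$ covers $I$: at each call, either the input interval is returned as-is (Lines~\ref{line:small_int_end} and~\ref{line:small_estimates_end}) or it is split into two halves on which \texttt{Find-Jumps} recurses (Line~\ref{line:recursion}). Consequently, there exists a triplet $(I^\star, \widehat\mu(\alpha_1), \widehat\mu(\alpha_2)) \in \mathcal{T}_j$ with $I^\star = [\alpha_1, \alpha_2] \ni \alpha^\star$. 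It then remains to show that the call to \texttt{Optimistic-Shrink} on this triplet returns a non-empty subinterval that still contains $\alpha^\star$, so that $\alpha^\star$ ends up in some interval placed in $\mathcal{I}_{j+1}$.

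Throughout the case analysis I would rely on two facts available under $\mathcal{E}$: the clean-event bound $|\widehat\mu(\alpha_i) - \mu_{h(\alpha_i)}| \le \Delta_j/4$, and the relation $\textnormal{OPT}(j) \le \textnormal{OPT} + \Delta_j/4 = u(\alpha^\star) + \Delta_j/4$ that follows from Lemma~\ref{lem:optimal_estimate}. If $\alpha_2 - \alpha_1 > 1/T$, Lemma~\ref{lem:find_bp_delta} together with $\alpha_1 \le \alpha^\star$ and the monotonicity of $\mu_{h(\cdot)}$ gives $\mu_{h(\alpha^\star)} - \mu_{h(\alpha_1)} \le \tfrac{3}{2}\Delta_j$, so $\ell(\alpha^\star)\widehat\mu(\alpha_1) \ge u(\alpha^\star) - \tfrac{7}{4}\Delta_j$ (using $\ell \le 1$). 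Adding $2\Delta_j + 2/T$ yields a quantity at least $u(\alpha^\star) + \Delta_j/4 \ge \textnormal{OPT}(j)$, so the condition at Line~\ref{line:restrict_large_int} is satisfied at $\alpha^\star$ and $\alpha^\star$ belongs to the returned set $I'$.

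If instead $\alpha_2 - \alpha_1 \le 1/T$, I would exploit the monotonicity $\mu_{h(\alpha_2)} \ge \mu_{h(\alpha^\star)}$ together with the $1$-Lipschitzness of $\ell$ (which is linear and maps $[0,1]$ into $[0,1]$) to obtain $\ell(\alpha_2)\mu_{h(\alpha_2)} \ge u(\alpha^\star) - 1/T$. Combining this with the clean-event bound yields $\ell(\alpha_2)\widehat\mu(\alpha_2) + \Delta_j/2 + 1/T \ge u(\alpha^\star) + \Delta_j/4 \ge \textnormal{OPT}(j)$, which triggers Line~\ref{line:restrict_small_int} of \texttt{Optimistic-Shrink} and returns the entire interval $I^\star$, so again $\alpha^\star$ is preserved. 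This closes the induction.

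The main obstacle is the careful bookkeeping in the small-interval branch: one has to simultaneously absorb the $1/T$-scale loss from replacing $\ell(\alpha^\star)$ with $\ell(\alpha_2)$, the $\Delta_j/4$-scale loss from the clean event, and the $\Delta_j/4$-scale slack between $\textnormal{OPT}(j)$ and $u(\alpha^\star)$ coming from Lemma~\ref{lem:optimal_estimate}, and to verify that the optimistic inflation $\Delta_j/2 + 1/T$ hard-coded into \texttt{Optimistic-Shrink} is exactly enough to cover all three simultaneously. The large-interval case is analogous but delicate for a different reason, in that it hinges on Lemma~\ref{lem:find_bp_delta}'s bound of $\tfrac{3}{2}\Delta_j$ on the total gap inside $I^\star$ to control $\mu_{h(\alpha^\star)} - \mu_{h(\alpha_1)}$.
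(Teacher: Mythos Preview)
Your approach is essentially the paper's: induction on $j$, the covering observation for $\mathcal{T}_j$, and the same two-case analysis inside \texttt{Optimistic-Shrink} using Lemma~\ref{lem:find_bp_delta} for the large-interval branch and monotonicity plus $|\alpha_2-\alpha^\star|\le 1/T$ for the small one. One correction, though: the inequality $\textnormal{OPT}(j)\le\textnormal{OPT}+\Delta_j/4$ does \emph{not} come from Lemma~\ref{lem:optimal_estimate}---that lemma gives the opposite direction, and in the paper it is proved \emph{using} the present lemma, so citing it here would be circular. The upper bound you need is immediate from the clean event applied to $\alpha_j^\star$ (namely $\ell(\alpha_j^\star)\widehat\mu(\alpha_j^\star)\le\ell(\alpha_j^\star)\mu_{h(\alpha_j^\star)}+\Delta_j/4\le\textnormal{OPT}+\Delta_j/4$), which is exactly what the paper does inline rather than via a lemma.
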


\begin{proof}
	To prove the lemma we employ an inductive argument. When $j=1$, the statement is trivially satisfied since $\mathcal{I}_1= \{[0,1]\}$ and $\alpha^\star \in [0,1] $.
	
	Suppose that $\alpha^\star \in I$ for some $I \in  \mathcal{I}_{j}$ with $j > 1$, we prove that $\alpha^\star \in I $ for some $I \in  \mathcal{I}_{j+1}$. To show that, we consider two different cases.
	\begin{enumerate}
		\item  If $\alpha^\star \in I$ for some $I \in \mathcal{I}_j$ with $|I| > \nicefrac{1}{T}$, then it holds:
		\begin{align*}
			\ell(\alpha^\star) \widehat \mu(\alpha_1) 
			& \ge \ell(\alpha^\star)  \mu_{h(\alpha_1)}  -\frac{\Delta_j}{4} \\
			& \ge \ell(\alpha^\star) \mu_{h(\alpha^\star)} -\frac{7 \Delta_j}{4}\\
			& = \text{OPT} -\frac{7 \Delta_j}{4}\\
			& \ge \ell(\alpha_j^\star) \mu_{h(\alpha_j^\star)}  -\frac{7 \Delta_j}{4}\\
			& \ge \ell(\alpha_j^\star) \widehat \mu(\alpha_j^\star)  -{2 \Delta_j} \\
			& = \text{OPT}(j) -{2 \Delta_j}.
		\end{align*}

		Where the first inequality above holds under the clean event $\mathcal{E}$.
		The second inequality holds thanks to Lemma~\ref{lem:find_bp_delta} since $|I| \ge \nicefrac{1}{T}$, while the third inequality holds because of the optimality of \textnormal{OPT}.
		The fourth inequality holds under the event $\mathcal{E}$, and the last inequality holds because of the definition of $\alpha_j^\star \in [0,1]$.
		
		Thanks to the above inequalities, we have the following:
		\begin{equation*}
			\ell(\alpha^\star) \widehat \mu(\alpha_1)  + 2\Delta_j \ge \textnormal{OPT}(j).
		\end{equation*}
		Consequently, the condition at Line~\ref{line:restrict_large_int} is satisfied, and thus $\alpha^{\star}$ also belongs to some $I \in \mathcal{I}_{j+1}$.
		
		\item 	If $\alpha^\star \in I$ for some $I \in \mathcal{I}_j$ with $|I| \le \nicefrac{1}{T}$, then the following holds:
		\begin{align*}
			\ell ( \alpha_2)\widehat \mu(\alpha_2)  
			& \ge \ell(\alpha_2 ) \mu_{h(\alpha_2  )} -\frac{\Delta_j}{4} \\
			& \ge \ell(\alpha_2 ) \mu_{h(\alpha^\star  )} -\frac{\Delta_j}{4}\\
			& =( \ell(\alpha_2) \pm \ell( \alpha^{\star} )) \mu_{h(\alpha^\star  )} -\frac{\Delta_j}{4}\\
			& \ge \ell(\alpha^\star )  \mu_{h(\alpha^\star  )} -\frac{\Delta_j}{4} - \frac{1}{T} \\
			& = \text{OPT} -\frac{\Delta_j}{4} - \frac{1}{T} \\
			& \ge \ell(\alpha_j^\star )   \mu_{h(\alpha_j^\star  )} -\frac{\Delta_j}{4}- \frac{1}{T}\\
			& \ge \ell(\alpha_j^\star )  \widehat \mu(\alpha_j^\star  ) -\frac{\Delta_j}{2} - \frac{1}{T} \\
			& = \text{OPT}(j) - \frac{\Delta_j}{2} - \frac{1}{T}
		\end{align*}
		We notice that the first and last inequalities above hold under the clean event $\mathcal{E}$. The second inequality holds since $\mu_{h(\alpha_2)} \ge \mu_{h(\alpha^\star)}$ according to monotonicity. The third inequality holds by hypothesis observing that $|\alpha^\star - \alpha_1| \le |I| \le \nicefrac{1}{T}.$
		As a result, the following holds:
		\begin{equation*}
			\ell(\alpha_2) \widehat{\mu}(\alpha_2) + \Delta_j/2 + 1/T \ge \textnormal{OPT}(j).
		\end{equation*}
		Consequently, the condition at Lines~\ref{line:restrict_small_int} is satisfied, and thus $\alpha^{\star}$ also belongs to some $I \in \mathcal{I}_{j+1}$.
	\end{enumerate}
	Finally, since both the base case holds and the inductive step is valid, we conclude the proof.
\end{proof}
\begin{algorithm}[h]
	\caption{\texttt{Optimistic-Shrink}}
	\label{alg:restrict}
	\begin{algorithmic}[1]
		\Require $(I=[\alpha_1, \alpha_2],\widehat{\mu}(\alpha_1), \widehat{\mu}(\alpha_2)) \in \mathcal{T}_j$, $\Delta_j \in [0,1]$, and $\text{OPT}(j) \in [0,1]$
		\If{$\alpha_2 - \alpha_1 \le \frac{1}{T} \, \wedge \,  \ell(\alpha_2) \cdot \widehat{\mu}(\alpha_2) + \nicefrac{\Delta_j}{2}+ \nicefrac{1}{T}\ge \textnormal{OPT}(j)$}  \label{line:restrict_small_int}
		\State\Return $I$
		\EndIf
		\If{$\alpha_2 - \alpha_1 > \frac{1}{T}$} 
		\State $ I' \gets \left\{ \alpha \in I \mid \ell(\alpha) \cdot \widehat{\mu}(\alpha_1) + 2 \Delta_j + \nicefrac{2}{T}\ge   \textnormal{OPT}(j) \right\}  $  \label{line:restrict_large_int}
		\State\Return $I'$ 
		\EndIf
		\State\Return $\varnothing$ 
	\end{algorithmic}
\end{algorithm}

In order to prove the lemma, we consider two different cases.
The first one is when $\alpha^\star \in I$ for some $I = [\alpha_{1},\alpha_{2}] \in \mathcal{I}_j$ with $\alpha_{2}-\alpha_{1}<\nicefrac{1}{T}$.
Then, under the clean event $\mathcal{E}$ and by the monotonicity of jumps, it is possible to show that the optimistic estimate of the learner's expected utility at the right extreme $\alpha_{2}$ of the interval satisfies the condition in Line~\ref{line:restrict_small_int} of Algorithm~\ref{alg:restrict}. 
Thus, the interval that contains $\alpha^\star$ is guaranteed to also belong to $\mathcal{I}_{j+1}$.
The second case is when $\alpha^\star \in I$ for some $I \in \mathcal{I}_j$ whose length is larger than $\nicefrac{1}{T}$.
Then, by employing Lemma~\ref{lem:find_bp_delta}, it is possible to show that the optimistic estimate of the learner's expected utility for action $\alpha^\star$ (evaluated in Line~\ref{line:restrict_large_int} of Algorithm~\ref{alg:restrict}) is greater than or equal to $\textnormal{OPT}(j)$.
Thus, if there exists an interval $I \in \mathcal{I}_j$ such that $\alpha^\star \in I$, then there is also some $I' \in \mathcal{I}_{j+1}$ such that $\alpha^\star \in I'$.
Finally, by using an inductive argument, the statement of the lemma is immediately proved.

Thanks to Lemma~\ref{lem:always_optimal}, it is possible to show that the difference between the optimal value $\text{OPT}$ and the one estimated at epoch $j \in \mathbb{N}_{+}$, namely $\text{OPT}(j)$, is proportional to $\Delta_j$. 
Formally, we have:
\begin{restatable}{lemma}{RestrictSecond}\label{lem:optimal_estimate}
	Under the clean event $\mathcal{E}$, it holds $\textnormal{OPT}(j) \ge \textnormal{OPT} - \nicefrac{ 7 \Delta_j }{4}-\nicefrac{1 }{T}$ for every epoch $j \in \mathbb{N}_+$.
\end{restatable}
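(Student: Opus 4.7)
The plan is to combine Lemma~\ref{lem:always_optimal} with Lemma~\ref{lem:find_bp_delta} to locate $\alpha^\star$ in a specific triplet of $\mathcal{T}_j$, and then show that the estimated utility at one of the triplet's endpoints—which $\textnormal{OPT}(j)$ dominates by construction—is already within $\nicefrac{7\Delta_j}{4}+\nicefrac{1}{T}$ of $\textnormal{OPT}$. By Lemma~\ref{lem:always_optimal}, under $\mathcal{E}$ there exists $I\in\mathcal{I}_j$ with $\alpha^\star\in I$. Tracing the recursive calls that \texttt{Find-Jumps} performs on $I$ in the first macro block of epoch $j$, there must exist a leaf triplet $([\alpha_1,\alpha_2],\widehat\mu(\alpha_1),\widehat\mu(\alpha_2))\in\mathcal{T}_j$ with $\alpha^\star\in[\alpha_1,\alpha_2]$. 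Since Algorithm~\ref{alg:main} sets $\textnormal{OPT}(j)$ to be the maximum of $\ell(\alpha)\widehat\mu(\alpha)$ over all endpoints of all triplets, it suffices to lower bound $\ell(\alpha)\widehat\mu(\alpha)$ for some $\alpha\in\{\alpha_1,\alpha_2\}$.

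Because \texttt{Find-Jumps} terminates on this triplet, Lemma~\ref{lem:find_bp_delta} divides the analysis into two cases. In the regime $\alpha_2-\alpha_1\le\nicefrac{1}{T}$, I work with the right endpoint $\alpha_2$ (note that $\widehat\mu(\alpha_1)$ is artificially set to $0$ in this branch, so $\alpha_1$ provides no useful information). Jump monotonicity gives $\mu_{h(\alpha_2)}\ge\mu_{h(\alpha^\star)}$; the clean event gives $|\widehat\mu(\alpha_2)-\mu_{h(\alpha_2)}|\le\nicefrac{\Delta_j}{4}$; and the fact that $\ell\colon[0,1]\to[0,1]$ is strictly decreasing and linear makes it $1$-Lipschitz with $\ell\le 1$, so $\ell(\alpha^\star)-\ell(\alpha_2)\le\alpha_2-\alpha^\star\le\nicefrac{1}{T}$. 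Chaining these three facts yields $\ell(\alpha_2)\widehat\mu(\alpha_2)\ge\textnormal{OPT}-\nicefrac{1}{T}-\nicefrac{\Delta_j}{4}$, which is well within the target.

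In the regime where $\mu_{h(\alpha)}-\mu_{h(\alpha')}\le\nicefrac{3\Delta_j}{2}$ for all $\alpha\ge\alpha'$ in $[\alpha_1,\alpha_2]$, I work with the left endpoint $\alpha_1$. The jump-gap bound gives $\mu_{h(\alpha_1)}\ge\mu_{h(\alpha^\star)}-\nicefrac{3\Delta_j}{2}$, and the clean event gives $\widehat\mu(\alpha_1)\ge\mu_{h(\alpha_1)}-\nicefrac{\Delta_j}{4}\ge\mu_{h(\alpha^\star)}-\nicefrac{7\Delta_j}{4}$. Since $\ell$ is decreasing and $\alpha_1\le\alpha^\star$, we also have $\ell(\alpha_1)\ge\ell(\alpha^\star)\ge 0$, and because $\widehat\mu(\alpha_1)\in[0,1]$ the substitutions preserve the inequality direction (the degenerate case $\mu_{h(\alpha^\star)}<\nicefrac{7\Delta_j}{4}$ is handled trivially, since then $\textnormal{OPT}\le\nicefrac{7\Delta_j}{4}$). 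Using $\ell(\alpha^\star)\le 1$, this multiplies out to $\ell(\alpha_1)\widehat\mu(\alpha_1)\ge\textnormal{OPT}-\nicefrac{7\Delta_j}{4}$. Taking the weaker of the two case bounds gives the stated inequality.

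The real content of the proof is not any single inequality—each step is elementary—but choosing which endpoint of the leaf triplet to analyze in each regime: in the small-interval regime one must use $\alpha_2$ because $\widehat\mu(\alpha_1)$ has been zeroed out by \texttt{Find-Jumps}, while in the small-gap regime one must use $\alpha_1$ so that $\ell(\alpha_1)\ge\ell(\alpha^\star)$ absorbs the only nontrivial side loss. The specific constant $\nicefrac{7\Delta_j}{4}$ emerges as $\nicefrac{3\Delta_j}{2}+\nicefrac{\Delta_j}{4}$, precisely the sum of the gap bound from Lemma~\ref{lem:find_bp_delta} and the confidence half-width in the definition of $\mathcal{E}$.
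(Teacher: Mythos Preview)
Your proof is correct and follows essentially the same approach as the paper's: both locate $\alpha^\star$ in a leaf triplet of $\mathcal{T}_j$, split into the two cases of Lemma~\ref{lem:find_bp_delta}, use the right endpoint $\alpha_2$ in the small-interval case and the left endpoint $\alpha_1$ in the small-gap case, and chain the clean-event bound with monotonicity (resp.\ the $\nicefrac{3\Delta_j}{2}$ gap bound) to obtain the same constants. Your write-up is somewhat more explicit than the paper's---you invoke Lemma~\ref{lem:always_optimal} directly, spell out the $1$-Lipschitz property of $\ell$, and discuss the degenerate sign case---but the underlying argument is identical.
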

\begin{proof}
	We split the proof into two cases.
	\begin{enumerate}
		\item If $\alpha^\star \in I$ with $(I, \widehat\mu(\alpha_1), \widehat\mu(\alpha_2) )\in \mathcal{T}_{j}$ and $|I| > \nicefrac{1}{T}$, then we have:
		\begin{align*}
			\textnormal{OPT}(j) &\ge \ell(\alpha_1) \widehat{\mu}(\alpha_1) \\
			&\ge  \ell(\alpha_1)  {\mu}_{h(\alpha_1)} - \frac{\Delta_j}{4}\\
			&\ge  \ell(\alpha_1)  {\mu}_{h(\alpha^\star)} - \frac{7 \Delta_j}{4} \\
			&\ge  \ell(\alpha^\star) {\mu}_{h(\alpha^\star)} - \frac{7 \Delta_j}{4}\\
			&= \textnormal{OPT} - \frac{7 \Delta_j}{4}.
		\end{align*}
		The first inequality above holds according to the definition of $\textnormal{OPT}(j)$, the second inequality holds under the event $\mathcal{E}$, the third inequality holds thanks to Lemma~\ref{lem:find_bp_delta} and the last inequality holds since $\alpha^\star \ge \alpha_1$.
		\item  If $\alpha^\star \in I$ with $(I, \widehat\mu(\alpha_1), \widehat\mu(\alpha_2) )\in \mathcal{T}_{j}$ and $|I| \le \nicefrac{1}{T}$, under the event $\mathcal{E}$, we have:
		\begin{align*}
			\textnormal{OPT}(j)
			& \ge \ell(\alpha_2)  \widehat{\mu}({\alpha_2})\\
			& \ge \ell(\alpha_2)  {\mu}_{h(\alpha_2)} - \frac{\Delta_j}{4}\\
			& = (\ell(\alpha_2 )\pm \ell(\alpha^\star) ) {\mu}_{h(\alpha^\star)} - \frac{\Delta_j}{4} \\
			& \ge \ell(\alpha^\star)  {\mu}_{h(\alpha^\star)}  - \frac{ \Delta_j}{4} - \frac{1}{T} \\
			& = \textnormal{OPT} - \frac{ \Delta_j}{4} - \frac{1}{T}.
		\end{align*}
		The first inequality holds according to the definition of $\textnormal{OPT}(j)$, the second inequality holds under the event $\mathcal{E}$, the third inequality holds thanks to monotonicity and the last inequality holds since $|I| \le \nicefrac{1}{T}$.
	\end{enumerate}
	Thanks to the above inequalities the lemma holds.
\end{proof}

%
%
%

Finally, we introduce the main property satisfied by Algorithm~\ref{alg:restrict}, which establishes a lower bound on the learner's expected utility for the intervals returned by the algorithm.
Formally:
\begin{restatable}{lemma}{RestrictThird}\label{lem:deltaopt}
	Under the event $\mathcal{E}$, for every epoch $j \in \mathbb{N}_+$ and interval $I=[\alpha_1, \alpha_2] \in \mathcal{I}_{j+1}$, it holds:
	\begin{enumerate}
		\item If $\alpha_{2}-\alpha_{1} > \nicefrac{1}{T}$, then $u(\alpha) \ge \textnormal{OPT} - 4\Delta_j - \nicefrac{2}{T}$ for every action $\alpha \in I$.
		\item If $\alpha_{2}-\alpha_{1} \leq \nicefrac{1}{T}$, then $u(\alpha_2) \ge \textnormal{OPT} - 4\Delta_j - \nicefrac{2}{T}$.
	\end{enumerate}
	%
\end{restatable}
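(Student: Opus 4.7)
My plan is to split into cases according to which branch of Algorithm~\ref{alg:restrict} produced the interval $I = [\alpha_1, \alpha_2] \in \mathcal{I}_{j+1}$. By inspection of \texttt{Optimistic-Shrink}, $I$ must have originated from some triplet $([\widetilde \alpha_1, \widetilde\alpha_2], \widehat\mu(\widetilde\alpha_1), \widehat\mu(\widetilde\alpha_2)) \in \mathcal{T}_j$ via either Line~\ref{line:restrict_small_int} (in which case $I = [\widetilde\alpha_1, \widetilde\alpha_2]$ is returned unchanged and in particular $\widetilde\alpha_2 - \widetilde\alpha_1 \le \nicefrac{1}{T}$) or via Line~\ref{line:restrict_large_int} (in which case $I$ is a sub-interval of $[\widetilde\alpha_1, \widetilde\alpha_2]$ whose \emph{left} endpoint coincides with $\widetilde\alpha_1$, because the selection condition in Line~\ref{line:restrict_large_int} is decreasing in $\alpha$ since $\ell$ is linearly decreasing). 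This structural observation is what lets me identify, in each case, \emph{which} empirical estimate from $\mathcal{T}_j$ controls $u(\alpha)$ on $I$.

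For Part~1, only the branch at Line~\ref{line:restrict_large_int} can produce an interval longer than $\nicefrac{1}{T}$, so $\alpha_1 = \widetilde\alpha_1$ and the defining inequality
\[
\ell(\alpha)\,\widehat\mu(\alpha_1) + 2\Delta_j + \nicefrac{2}{T} \;\ge\; \textnormal{OPT}(j)
\]
holds for every $\alpha \in I$. I convert this into a lower bound on $u(\alpha) = \ell(\alpha)\mu_{h(\alpha)}$ in three steps: (i) the clean event $\mathcal{E}$ gives $\mu_{h(\alpha_1)} \ge \widehat\mu(\alpha_1) - \nicefrac{\Delta_j}{4}$; (ii) the jump-monotonicity assumption gives $\mu_{h(\alpha)} \ge \mu_{h(\alpha_1)}$ since $\alpha \ge \alpha_1$; and (iii) bounding $\ell(\alpha) \le 1$ turns the $\nicefrac{\Delta_j}{4}$ slack into an additive loss. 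Substituting and invoking Lemma~\ref{lem:optimal_estimate} to pass from $\textnormal{OPT}(j)$ to $\textnormal{OPT}$ yields the claimed inequality.

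For Part~2, both branches are possible. If $I$ arose from Line~\ref{line:restrict_small_int}, I combine the retained inequality $\ell(\alpha_2)\widehat\mu(\alpha_2) + \nicefrac{\Delta_j}{2} + \nicefrac{1}{T} \ge \textnormal{OPT}(j)$ with the clean-event deviation applied to $\widehat\mu(\alpha_2)$ to lower-bound $u(\alpha_2) = \ell(\alpha_2)\mu_{h(\alpha_2)}$ directly; no monotonicity argument is needed since the claim concerns only the endpoint $\alpha_2$. If instead $I$ arose from Line~\ref{line:restrict_large_int}, I apply exactly the argument of Part~1 evaluated at $\alpha = \alpha_2$. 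In either sub-case, a final application of Lemma~\ref{lem:optimal_estimate} converts the bound in terms of $\textnormal{OPT}(j)$ into one in terms of $\textnormal{OPT}$.

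\paragraph{Main obstacle.} The only step that is not pure bookkeeping is the uniform bound over the \emph{interior} of $I$ required by Part~1: the condition enforced at Line~\ref{line:restrict_large_int} is stated pointwise in $\alpha$, and its conversion into a lower bound on the true reward $u(\alpha)$ works only because the jump-monotonicity assumption makes the quantity $\ell(\alpha)\widehat\mu(\alpha_1)$ a \emph{pessimistic} (rather than merely optimistic) surrogate for $u(\alpha)$ throughout $I$; without monotonicity, $\mu_{h(\alpha)}$ could drop below $\mu_{h(\alpha_1)}$ inside $I$ and the bound would fail. The remainder of the proof is a careful tracking of the clean-event slack $\nicefrac{\Delta_j}{4}$ on each endpoint estimate, combined with the constants furnished by Lemma~\ref{lem:optimal_estimate}.
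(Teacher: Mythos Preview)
Your proposal is correct and follows essentially the same approach as the paper: monotonicity together with the clean event converts the retention inequalities of \texttt{Optimistic-Shrink} into lower bounds on $u$, and Lemma~\ref{lem:optimal_estimate} then closes the gap between $\textnormal{OPT}(j)$ and $\textnormal{OPT}$. Your case analysis for Part~2 is in fact slightly more complete than the paper's, which tacitly treats any output interval of length at most $\nicefrac{1}{T}$ as coming from Line~\ref{line:restrict_small_int}; your observation that the Line~\ref{line:restrict_large_int} branch can also shrink an interval below $\nicefrac{1}{T}$, and that the Part~1 argument evaluated at $\alpha=\alpha_2$ then handles this sub-case, is correct and worth keeping.
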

\begin{proof}
	We split the proof into two cases.
	\begin{enumerate}
		\item If $|I| > \nicefrac{1}{T}$, then under the event $\mathcal{E}$, the following holds:
		\begin{align*}
			\ell(\alpha) \mu_{h(\alpha)}  &\ge \ell(\alpha) \mu_{h(\alpha_1)}  \\
			&\ge  \ell(\alpha) \widehat{\mu}(\alpha_1 ) - \frac{\Delta_j}{4} \\
			&\ge \textnormal{OPT}(j)- \frac{\Delta_j}{4} - {2 \Delta_j} - \frac{2}{T} \\
			&\ge  \textnormal{OPT}- {4 \Delta_j} - \frac{2}{T}.
		\end{align*}
		We notice that the first inequality above holds as a consequence of monotonicity, and the second inequality holds under the clean event $\mathcal{E}$. Furthermore, if $\alpha \in I$ for some $I \in \mathcal{I}_{j+1}$, then the condition at Lines~\ref{line:restrict_large_int} is satisfied, and thus the third inequality above holds. Finally, the last inequality holds as a consequence of Lemma~\ref{lem:optimal_estimate}.
		\item If $|I| \le \nicefrac{1}{T}$, we have:
		\begin{align*}
			\ell(\alpha_2) \mu_{h(\alpha_2)}  &\ge \ell(\alpha_2) \widehat\mu(\alpha_2) - \frac{\Delta_j}{4}  \\
			&\ge \textnormal{OPT}(j)- \frac{\Delta_j}{4} - \frac{\Delta_j}{2} - \frac{1}{T}  \\
			&\ge  \textnormal{OPT}- {4 \Delta_j} - \frac{2}{T}.
		\end{align*}
		The first inequality above holds under the clean event $\mathcal{E}$. The second inequality holds true because  the condition at Lines~\ref{line:restrict_small_int} is satisfied. Finally, the last inequality holds as a consequence of Lemma~\ref{lem:optimal_estimate}. 
	\end{enumerate}
    This concludes the proof.
\end{proof}

Notice that, thanks to Lemma~\ref{lem:deltaopt}, every action $\alpha \in \mathcal{I}_{j+1}$ chosen by the algorithm at epoch $j+1$ is $\mathcal{O}(\Delta_j)$-optimal.
Thus, since the \texttt{Find-Jumps} procedure employs a number of samples proportional to $\nicefrac{1}{\Delta_{j+1}^2}$, the resulting regret suffered by the algorithm at epoch $j+1$ is proportional to $\mathcal{O}\left( \nicefrac{\Delta_{j}}{\Delta_{j+1}^2}\right) = \mathcal{O}\left( \nicefrac{1}{\Delta_{j+1}}\right) $, since $2\Delta_{j+1} = \Delta_j$.
%

\subsection{Putting All Together}\label{sec:PutTogether}

In order to derive the regret guarantees of Algorithm~\ref{alg:main}, we first need to bound the number of calls made to \texttt{Find-Jumps} (including recursive ones) during each epoch $j \in \mathbb{N}_{+}$, since each call may require the learner to select two actions for a number of rounds proportional to $\nicefrac{1}{\Delta^2_j}$. Formally:
\begin{restatable}{lemma}{MainAlgoFirst}\label{lem:call_find_bp}
	Under the clean event $\mathcal{E}$, at each epoch $j \in \mathbb{N}_{+}$ of Algorithm~\ref{alg:main}, the number of calls to \texttt{Find-Jumps} is upper bounded by~$(j+2)n \lceil \log_2(T) \rceil$.
\end{restatable}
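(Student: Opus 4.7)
\textbf{Proof plan for Lemma~\ref{lem:call_find_bp}.} The plan is to view each invocation of \texttt{Find-Jumps} on an interval $I \in \mathcal{I}_j$ as the root of a strict binary recursion tree $T_I$, and to bound (a) the depth of every such tree, (b) the number of internal nodes across all trees at epoch $j$, and finally (c) the growth of $|\mathcal{I}_j|$ from one epoch to the next. Summing these will give the claimed $(j+2)n\lceil \log_2 T\rceil$ bound.

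\textbf{Step 1: Depth bound.} Every recursive invocation in Line~\ref{line:recursion} halves the length of the input interval. Since the initial interval $I \in \mathcal{I}_j$ has length at most $1$ and \texttt{Find-Jumps} stops (by the test in Line~\ref{line:small_int}) once the length drops to $\tfrac{1}{T}$ or below, the depth of each $T_I$ is at most $\lceil \log_2 T\rceil$.

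\textbf{Step 2: Internal nodes contain a jump.} A node of $T_I$ is internal exactly when Line~\ref{line:large_estimates} fires, i.e. $\widehat\mu(\alpha_2)-\widehat\mu(\alpha_1)\ge \Delta_j$ on its interval $[\alpha_1,\alpha_2]$. Under $\mathcal{E}$ this implies $\mu_{h(\alpha_2)}-\mu_{h(\alpha_1)}\ge \Delta_j/2>0$, so $h(\alpha_1)\neq h(\alpha_2)$ and $[\alpha_1,\alpha_2]$ strictly contains at least one jump discontinuity. (Lemma~\ref{lem:find_bp_terminate} is the analogue of this observation in the degenerate case.)

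\textbf{Step 3: Counting internal nodes per level.} At any fixed depth $d$, the intervals appearing in the trees $\{T_I\}_{I\in\mathcal{I}_j}$ are pairwise disjoint sub-intervals of $[0,1]$ (the trees start from disjoint roots and binary subdivision preserves disjointness). Since the total number of jump discontinuities in $[0,1]$ is $n-1$, and each internal node at depth $d$ must contain one such jump by Step~2, the number of internal nodes at depth $d$ is at most $n-1$. Summing over depths $0,\dots,\lceil\log_2 T\rceil-1$, the total number of internal nodes across all trees at epoch $j$ is at most $(n-1)\lceil\log_2 T\rceil$.

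\textbf{Step 4: From internal nodes to total calls, and induction on $j$.} Each tree $T_I$ is a strict binary tree (internal nodes have exactly two children), so the number of leaves equals the number of internal nodes plus one. Hence the total number of nodes of all trees at epoch $j$ equals $2\cdot(\text{internal at epoch }j) + |\mathcal{I}_j|$, and the number of triplets added to $\mathcal{T}_j$ equals the number of leaves, namely $(\text{internal at epoch }j)+|\mathcal{I}_j|$. Since \texttt{Optimistic-Shrink} returns at most one interval per input triplet, $|\mathcal{I}_{j+1}|\le |\mathcal{T}_j|\le (n-1)\lceil\log_2 T\rceil + |\mathcal{I}_j|$. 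Starting from $|\mathcal{I}_1|=1$, an induction yields $|\mathcal{I}_j|\le 1+(j-1)(n-1)\lceil\log_2 T\rceil$. Plugging this back,
\begin{equation*}
\text{calls at epoch }j \;\le\; 2(n-1)\lceil\log_2 T\rceil + |\mathcal{I}_j| \;\le\; (j+1)(n-1)\lceil\log_2 T\rceil + 1 \;\le\; (j+2)n\lceil\log_2 T\rceil,
\end{equation*}
which is the stated bound.

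\textbf{Main difficulty.} The routine parts are Steps~1 and~2; the conceptual crux is Step~3, where one must argue that the disjointness of the root intervals in $\mathcal{I}_j$ is preserved through binary subdivision so that internal nodes at a fixed depth compete for a pool of only $n-1$ jumps. The secondary subtlety is Step~4: $|\mathcal{I}_j|$ itself grows with $j$, and one must track it via the induction $|\mathcal{I}_{j+1}|\le (n-1)\lceil\log_2 T\rceil + |\mathcal{I}_j|$ in order to see that the cumulative growth remains linear in $j$, matching the $(j+2)$ factor in the claim.
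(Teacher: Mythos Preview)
Your proposal is correct and follows essentially the same strategy as the paper's proof: both bound the recursion depth by $\lceil\log_2 T\rceil$, observe (via Lemma~\ref{lem:find_bp_terminate}) that every recursing node must contain a jump, use this to cap the number of recursing nodes per level by $n-1$, and then close with the same induction $|\mathcal{I}_{j+1}|\le |\mathcal{T}_j|\le |\mathcal{I}_j|+n\lceil\log_2 T\rceil$ to control $|\mathcal{I}_j|$. The paper phrases the counting via an auxiliary recursion $M(j,k+1)=M(j,k)+N(j,k+1)/2$ at each level, whereas you invoke the strict-binary-tree identity (leaves $=$ internal $+$ roots) directly; these are equivalent bookkeeping. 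One small imprecision worth tightening: the root intervals in $\mathcal{I}_j$ (and hence the nodes at a fixed depth) are only \emph{interior}-disjoint, not strictly disjoint, since \texttt{Find-Jumps} splits $[\alpha_1,\alpha_2]$ into halves sharing the midpoint; however, because $h(\alpha_2)>h(\alpha_1)$ requires some $\overline\alpha_i$ with $\alpha_1<\overline\alpha_i\le\alpha_2$, a jump landing on a shared endpoint is credited to at most one of the two adjacent intervals, so the $n-1$ bound per level still holds.
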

\begin{proof}
	In the following, we introduce $ \mathcal{N}(j,k) $, which corresponds to the set of tuples given as input to Algorithm~\ref{alg:find_bp} at iteration $ j \in \mathbb{N}_{+} $ and level $ k \in \mathbb{N}_{+} $.
	Formally, given $ j, k \in \mathbb{N}_{+} $, we define
	\begin{equation*}
		\mathcal{N}(j,k) \coloneqq |\{ I \subseteq [0,1]  \,\ | \,\ (I,j,k) \,\, \textnormal{is given as input to Algorithm~\ref{alg:find_bp}}  \}|
	\end{equation*}
	Furthermore, we denote $N(j,k) \coloneqq |	\mathcal{N}(j,k) |$. Intuitively, the quantity $N(j,k)$ corresponds to the number of nodes at level $k \in \mathbb{N}_{+}$ in the recursive tree generated by the execution of    Algorithm~\ref{alg:find_bp}. (See Figure~\ref{fig:tree}).
	As a result, the quantity $ \sum_{k \in \mathbb{N}_{+}} N(j,k) $ represents the number of calls performed to Algorithm~\ref{alg:find_bp} during the execution of the for-loop at Lines~\ref{line:firstformain} in Algorithm~\ref{alg:main} at iteration $j \in \mathbb{N}_{+}$.
	
	We also observe that, when Algorithm~\ref{alg:find_bp} receives a tuple as input, it either recursively invokes itself twice or it does not recursively invoke itself anymore.
	Thus, each node in the recursive tree in Figure~\ref{fig:tree} has either zero or two children.
	This means that the quantity $N(j, k+1)/2$ represents the number of times the algorithm has recursively invoked itself at level $k \in \mathbb{N}_{+}$ and iteration $j \in \mathbb{N}_{+}$.
	Consequently, it represents the number of nodes that have two children at level $k \in \mathbb{N}_{+}$ in the tree of  Figure~\ref{fig:tree}.
	
	We also introduce the set $\mathcal{M}(j,k)$, which contains all the tuples corresponding to the leaves of the recursive tree generated by the execution of Algorithm~\ref{alg:find_bp} (see the red boxes in Figure~\ref{fig:tree}) at level $k \in \mathbb{N}_{+}$ and iteration $j \in \mathbb{N}_{+}$. Formally, we have:
    \begin{align*}\small
    \mathcal{M}(j,k) \hspace{-1mm}\coloneqq \mathcal{N}(j,k) \cup \{ I \subseteq [0,1], k' \in [k-1] \,|\, \textnormal{$(I, j, k')$ is an input of Algo~\ref{alg:find_bp} s.t. it does not invoke itself} \}.
	\end{align*}

	We denote $M(j,k) \coloneqq |	\mathcal{M}(j,k) |$ and we observe that can be recursively expressed as follows:
	\begin{equation}\label{eq:recursion_m}
		M(j,k+1) = \underbrace{M(j,k) - \frac{N(j,k+1)}{2} }_{(1)}+  2\underbrace{\frac{N(j,k+1)}{2}}_{(2)} = M(j,k) + \frac{N(j,k+1)}{2},
	\end{equation}
	for all $k\ge1$, with $M(j,1) = |\mathcal{I}_{j}|$.
	As observed before, the quantity (2) in Equation~\ref{eq:recursion_m} represents the number of nodes at level $k$ that have two children and thus will be leaf nodes at level $k+1$.
	The quantity in (1) in Equation~\ref{eq:recursion_m} represents all the leaf nodes at level $k$ that have no children and thus will remain leaves at level $k+1$ as well.
	
	Furthermore, for each $j \in \mathbb{N}_{+}$, the following holds.
	\begin{enumerate}
		\item If $k=1$, then $N(j,1)=|\mathcal{I}_{j}|$. This is because Algorithm~\ref{alg:find_bp} receives as input all the tuples $(I,j,1)$ with $I \in \mathcal{I}_j$, as specified at  Lines~\ref{line:firstformain} in Algorithm~\ref{alg:main}. 
		
		\item If $k> \lceil \log_2(T) \rceil + 1 $, then $N(j,k)=0$.
		When Algorithm~\ref{alg:find_bp} receives a tuple $(I, j, k)$ as input and it recursively invokes itself with a new tuple $(I', j, k + 1)$, the length of $I'$ is halved compared to that of $I$.
		Thus, observing that Algorithm~\ref{alg:find_bp} does not invoke itself again if it receives an interval $I \subseteq [0, 1]$ that satisfies $ |I| \le \nicefrac{1}{T}$, we conclude that there is no interval $ I $ such that the tuple $ (I, \lceil \log_2(T) \rceil + 2, j) $ may represent a valid input to Algorithm~\ref{alg:find_bp}.
		Indeed, any interval $I \subseteq [0, 1]$, when halved $\lceil \log_2(T) \rceil $ times, will ultimately have a length smaller than $ \nicefrac{1}{T} $.
		
		\item If $ 0 < k \le \lceil \log_2(T) \rceil +1 $, then $ N(j, k) \le 2n $.
		This is because when Algorithm~\ref{alg:find_bp} receives an interval $I=[\alpha_1, \alpha_2]$ as input, it recursively invokes itself only if the condition $\mu_{h(\alpha_1)} < \mu_{h(\alpha_2)}$ holds, as stated in Lemma~\ref{lem:find_bp_terminate}, and thus only if there is at least one jump discontinuity within the interval.
		Furthermore, observing that there are at most $n-1$ jumps and that Algorithm~\ref{alg:find_bp} recursively invokes itself twice, we have $ N(j, k) \le 2(n-1) \le 2n $.
	\end{enumerate}
	Then, thanks to Equation~\ref{eq:recursion_m} and the above observations, we have that:
	\begin{equation*}
		M(j,k)  = \begin{cases}
			M(j,k+1) \le M(j,k) \,\,\,\, \forall \, k> \lceil \log_2(T) \rceil, \\
			M(j,k+1) \le M(j,k)+ n \,\,\,\, \forall \, 1\le k\le \lceil \log_2(T)\rceil, \\
			M(j,1) = |\mathcal{I}_{j}|.
		\end{cases}
	\end{equation*}
	Consequently, at each iteration $j \in \mathbb{N}_{+}$, we have:
	\begin{equation*}
		M(j,k) \le n \lceil \log_2(T) \rceil+   |\mathcal{I}_{j}|,
	\end{equation*}
	for all $k >0$.

	We also observe that $|\mathcal{T}_{j}| = M(j, \lceil \log_2(T) \rceil+1)$. This is because, the set $\mathcal{T}_{j}$ contains all the leaf nodes of the recursive tree generated by the execution of Algorithm~\ref{alg:find_bp}. (See Line~\ref{line:firstformain} in Algorithm~\ref{alg:main}).   
	In addition, Algorithm~\ref{alg:restrict}, can only decrease the number of intervals belonging to $\mathcal{I}_{j+1}$ compared to the number of tuples belonging to $\mathcal{T}_{j}$. 
	Thus, thanks to the latter observations, we have that:
	\begin{equation*}
		|\mathcal{I}_{j+1}| \le |\mathcal{T}_{j}|  \le n \lceil \log_2(T)\rceil +   |\mathcal{I}_{j}|
	\end{equation*}
	and $|\mathcal{I}_{1}|=1$.
	Then, by induction, we can easily show that:
	\begin{equation*}
		|\mathcal{I}_{j}|  \le  (j-1) n \lceil \log_2(T)\rceil + 1 \le  j n \lceil \log_2(T)\rceil ,
	\end{equation*}
	for each $j \in \mathbb{N}$.
	We prove that, for each $j \ge 1$, the number of calls to Algorithm~\ref{alg:find_bp} satisfies:
	\begin{equation*}
		\sum_{k \in \mathbb{N}_{+}} N(j,k) = N( j, 1) + \sum_{k =2}^{\lceil \log_2(T)\rceil+1} N(j,k) \le |\mathcal{I}_{j}| + 2 n \lceil \log_2(T)\rceil \le  (j+2) n \lceil \log_2(T) \rceil,
	\end{equation*}
	concluding the proof.
\end{proof}

To prove Lemma~\ref{lem:call_find_bp}, we observe that in each epoch $j \in \mathbb{N}_{+}$, thanks to Lemma~\ref{lem:find_bp_terminate}, \texttt{Find-Jumps} recursively invokes itself only if the distributions at the extremes of the input interval are different and, thus, if the input interval contains at least one jump discontinuity. 
Since the number of discontinuities is bounded by $n-1$, at each level $k \in \mathbb{N}$ of the tree of recursive calls generated by the execution of \texttt{Find-Jumps} in epoch $j \in \mathbb{N}_{+}$ (see Figure~\ref{fig:tree}), the procedure recursively invokes itself invoked at most $2(n-1)$ times.
%
%
Furthermore, given the termination condition at Lines~\ref{line:small_int_end} in Algorithm~\ref{alg:find_bp}, if the size of the input interval is smaller than $\nicefrac{1}{T}$, then the procedure does \emph{not} invoke itself anymore. 
Thus, by observing that \texttt{Find-Jumps} recursively invokes itself with an input interval whose length is halved, we have that the maximum depth $k \in \mathbb{N}_{+}$ reached in the tree of recursive calls at each epoch is at most $\log_2(T)$.
Therefore, by combining the above observations, the number of times that \texttt{Find-Jumps} is invoked during epoch $j \in \mathbb{N}_{+}$ is bounded by $2n\log_2(T) + |\mathcal{I}_{j}|$, as the procedure is initially invoked on each interval in $\mathcal{I}_{j}$ by Algorithm~\ref{alg:main}.
Finally, by employing an inductive argument, it is possible to show that $|\mathcal{I}_{j}| \le j n \log_2(T) $, thus providing an upper bound on the number of calls to \texttt{Find-Jumps} in each epoch $j \in \mathbb{N}_{+}$.

%
%
%

%
%
%
%
\begin{figure}[!htp]
	\centering
	 \scalebox{0.7}[0.7]{
\begin{tikzpicture}[
	every node/.style={align=center},
	level 1/.style={sibling distance=10cm},
	level 2/.style={sibling distance=7.5cm},
	level 3/.style={sibling distance=6.5cm},
	leaf/.style={rectangle, draw=red, fill=white, thick}
	]
	\node (A) at (0,0) {\texttt{Find-Jumps(I=[0,1],$\Delta_j$=1/2,k=1)}}
	child {node (B) {\texttt{Find-Jumps(I=[0,1/2],$\Delta_j$=1/2,k=2)}}
		child {node[leaf] (D) {\texttt{Find-Jumps(I=[0,1/4],$\Delta_j$=1/2,k=3)}}}
		child {node (E) {\texttt{Find-Jumps(I=[1/4,1/2],$\Delta_j$=1/2,k=3)}}
			child {node[leaf] (G) {\texttt{Find-Jumps(I=[1/4,3/8],$\Delta_j$=1/2,k=4)}}}
			child {node[leaf] (F) {\texttt{Find-Jumps(I=[3/8,1/2],$\Delta_j$=1/2,k=4)}}}
		}
	}
	child {node[leaf] (C) {\texttt{Find-Jumps(I=[1/2,1],$\Delta_j=1$,k=2)}}};
\end{tikzpicture}}
	\caption{Example of tree of recursive calls generated by the execution of \texttt{Find-Jumps} (Algorithm~\ref{alg:find_bp}).}
	\label{fig:tree}
\end{figure}
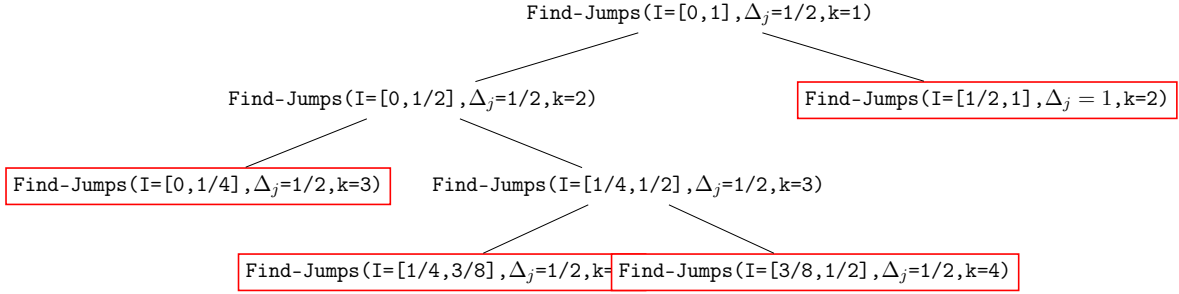

We are now ready to prove the main theorem of this section (and of the paper).

\begin{restatable}{theorem}{MainAlgoSecond}\label{thm:MainThm}
	The regret suffered by Algorithm~\ref{alg:main} is bounded by ${\mathcal{O}}(\sqrt{nT}\log^2(T))$.
\end{restatable}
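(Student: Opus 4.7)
The plan is to decompose the $T$ rounds into epochs, bound the regret of each epoch using Lemmas~\ref{lem:call_find_bp} and~\ref{lem:deltaopt}, and then optimize over the total number of epochs using the round budget.

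\textbf{Step 1 (Clean event).} First I would verify that $\Pr(\mathcal{E}) \geq 1 - O(1/T)$. Indeed, the sample size $N = \frac{8}{\Delta_j^2}\log(4T/\delta)$ with $\delta = 1/T$ was chosen so that, by Hoeffding's inequality, each individual estimate $\widehat{\mu}(\alpha_i)$ deviates from $\mu_{h(\alpha_i)}$ by more than $\Delta_j/4$ with probability at most $\delta/T = 1/T^2$. A union bound over the polynomially many invocations of \texttt{Find-Jumps} (at most $\mathrm{poly}(n,T)$ across all epochs) shows $\Pr(\neg\mathcal{E}) = O(1/T)$, contributing only $O(1)$ to the pseudo-regret. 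From now on I condition on $\mathcal{E}$.

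\textbf{Step 2 (Rounds and regret per epoch).} Fix an epoch $j$. By Lemma~\ref{lem:call_find_bp}, the number of invocations of \texttt{Find-Jumps} in epoch $j$ is at most $(j+2) n \lceil \log_2 T\rceil$, and each invocation plays actions for at most $2N = O(\log(T)/\Delta_j^2)$ rounds (the recursive calls are counted separately). Hence the number of rounds spent in epoch $j$ is
\[
T_j \,=\, O\!\left(\frac{j \, n \log^2 T}{\Delta_j^2}\right).
\]
Next, every action played in epoch $j$ lies in an interval of $\mathcal{I}_j$, which was produced by \texttt{Optimistic-Shrink} at the end of epoch $j-1$. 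Applying Lemma~\ref{lem:deltaopt} at level $j-1$ shows that every such action is $O(\Delta_{j-1}) = O(\Delta_j)$-optimal (for $j=1$ we use the trivial bound $O(1)$, contributing only $O(n\log^2 T)$ which is absorbed). Therefore the regret incurred in epoch $j$ satisfies
\[
R_j \,\leq\, T_j \cdot O(\Delta_j) \,=\, O\!\left(\frac{j \, n \log^2 T}{\Delta_j}\right) \,=\, O\!\bigl(j\, n \log^2(T)\cdot 2^j\bigr).
\]

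\textbf{Step 3 (Balancing the number of epochs).} Let $J$ be the index of the last epoch that actually begins within $T$ rounds. Since $\sum_{j<J} T_j \leq T$ and $T_j$ grows like $4^j$ up to a factor $j n \log^2 T$, the last epoch dominates, yielding $4^J = O\!\bigl(T/(J\, n \log^2 T)\bigr)$, and hence $2^J = O\!\bigl(\sqrt{T/(J\, n \log^2 T)}\bigr)$. Summing the per-epoch regrets,
\[
R_T \,\leq\, \sum_{j=1}^{J} R_j \,=\, O\!\bigl(n \log^2 T\bigr)\sum_{j=1}^{J} j\cdot 2^j \,=\, O\!\bigl(J\, n \log^2(T) \cdot 2^J\bigr),
\]
and substituting the bound on $2^J$ above, together with $J = O(\log T)$, gives $R_T = O(\sqrt{J n T}\cdot \log T) = \widetilde{O}(\sqrt{nT}\log^2 T)$, as required.

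\textbf{Main obstacle.} The delicate part is not any single calculation but the interplay between the three sources of amortization: the linear-in-$j$ overhead in the number of \texttt{Find-Jumps} calls (Lemma~\ref{lem:call_find_bp}), the $\Delta_j^{-2}$ sampling cost, and the $O(\Delta_j)$ per-round regret guaranteed by the previous epoch's \texttt{Optimistic-Shrink} (Lemma~\ref{lem:deltaopt}). Care is needed to ensure that the partial final epoch $J$ does not spoil the geometric argument, which is handled by noting that the regret contribution of epoch $J$ is itself bounded by $O(Jn\log^2(T)\cdot 2^J)$ regardless of whether it completes, so the budget-based bound on $2^J$ applies verbatim.
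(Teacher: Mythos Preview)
Your Steps~1 and~2 are essentially correct and match the paper. The gap is in Step~3. The inference ``$4^J = O\bigl(T/(J\,n\log^2 T)\bigr)$'' from the budget $\sum_{j<J}T_j\le T$ is invalid: the expression $T_j=O\bigl(j\,n\log^2(T)/\Delta_j^2\bigr)$ from Step~2 is only an \emph{upper} bound on the rounds in epoch $j$, whereas bounding $J$ via the budget requires a \emph{lower} bound on some $T_j$. The only lower bound available is $T_j\ge N=\Omega(4^j\log T)$ (a single \texttt{Find-Jumps} call), which carries no factor of $n$; this yields only $2^J=O(\sqrt{T/\log T})$ and hence $\sum_{j\le J}R_j = O\bigl(J\,n\log^2(T)\cdot 2^J\bigr)=O\bigl(n\sqrt{T}\,\mathrm{polylog}(T)\bigr)$, with $n$ rather than $\sqrt n$. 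Concretely, if all gaps $\mu_{i+1}-\mu_i$ are tiny, \texttt{Find-Jumps} never recurses in early epochs, each epoch uses only $\Theta(4^j\log T)$ rounds regardless of $n$, $J$ grows to roughly $\tfrac12\log_2(T/\log T)$, and your summed upper bounds on $R_j$ overshoot.

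The paper avoids this by not bounding $J$ at all. It fixes a deterministic threshold $j^\star\approx\log_2\bigl(\sqrt{T/n}/\log_2^2 T\bigr)$ and splits the sum $\sum_j \Delta_j\,\mathcal P_j$: for $j\le j^\star$ it uses your per-epoch bound $\Delta_j\,\mathcal P_j = O\bigl(j\,n\log^2(T)/\Delta_j\bigr)$ and sums (using $1/\Delta_j\le 1/\Delta_{j^\star}$ and $\sum_{j\le j^\star}(j+2)=O(\log^2 T)$) to get $O(\sqrt{nT}\log^2 T)$; for $j>j^\star$ it uses instead $\Delta_j\le\Delta_{j^\star+1}=O(\sqrt{n/T}\log^2 T)$ together with the budget $\sum_j\mathcal P_j\le T$, giving $\sum_{j>j^\star}\Delta_j\,\mathcal P_j\le O(\sqrt{n/T}\log^2 T)\cdot T=O(\sqrt{nT}\log^2 T)$. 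Replacing your Step~3 with this split repairs the proof.
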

\begin{proof}
	Let $\eta >0$. We observe that at iteration $j \in \mathbb{N}_{+}$, with probability $1 - \eta$, by applying Hoeffding's inequality and a union bound, Algorithm~\ref{alg:find_bp} computes two estimates satisfying:
	\begin{equation}\label{eq:hoeffding}
		|\widehat{\mu}(\alpha_i) - \mu_{h(\alpha_i)} | \le \frac{\Delta_j}{4},
	\end{equation}
	for each $i = 1, 2$, using $\nicefrac{8}{\Delta_j^2} \log( \nicefrac{4}{\eta} )$ samples for each extreme of the interval.
	
    Thus, employing a union bound and observing that Algorithm~\ref{alg:find_bp} is called at most $T$ times, we have that the condition in Equation~\ref{eq:hoeffding} is satisfied with probability $1 - \eta T$ during the execution of Algorithm~\ref{alg:main}.
	By setting $\eta T = \delta$, we ensure that with $\nicefrac{8}{\Delta_j^2} \log( \nicefrac{4T}{\delta})$ samples from both extremes, the condition in Equation~\ref{eq:hoeffding} is satisfied with probability $1 - \delta$ whenever Algorithm~\ref{alg:find_bp} is invoked.
	Thus, $\mathbb{P}(\mathcal{E})\ge 1-\delta$.

	In the following, we denote with $\mathcal{P}_j$ the number of rounds required to execute iteration $j \in \mathbb{N}_{+}$.
	We notice that $\sum_{j \in \mathbb{N}_{+}}\mathcal{P}_j = T$, and, by Lemma~\ref{lem:call_find_bp}, we have:
	\begin{equation*}
		\mathcal{P}_j \le \underbrace{  2 \frac{8}{\Delta^2_j} \log\left(\nicefrac{4T}{\delta}\right)}_{(a)}\,\, \underbrace{ (j+2) n \lceil \log_2(T) \rceil}_{(b)} =   \frac{16}{\Delta^2_j} \log\left(\nicefrac{4T}{\delta}\right) (j+2) n \lceil \log_2(T) \rceil
	\end{equation*}
	Where (a) represents an upper bound to number of samples required to execute Algorithm~\ref{alg:find_bp} at iteration  $j \in \mathbb{N}_{+}$, while (b) is a upper bound to the number of calls to Algorithm~\ref{alg:find_bp} at iteration $j \in \mathbb{N}_{+}$ according to Lemma~\ref{lem:call_find_bp}.

	Furthermore, under the clean event $\mathcal{E}$, at each iteration $j \in \mathbb{N}_{+}$, the learner selects actions that are $(4\Delta_{j-1} + \nicefrac{2}{T})$-optimal, as shown in Lemma~\ref{lem:deltaopt}.
	Since $\Delta_{j-1} = 2\Delta_{j}$, and $\Delta_{j} = \nicefrac{1}{2^j}$, this implies that the actions selected by the learner are $(8\Delta_{j} + \nicefrac{2}{T})$-optimal.
	Consequently, the cumulative regret can be upper bounded as follows:
	\begin{align}
		R_T & \le \left(\sum_{j \in \mathbb{N}_{+}}\mathcal{P}_j \left(8\Delta_{j} + \frac{2}{T}\right) \right) \mathbb{P}(\mathcal{E}) + T \cdot \mathbb{P}(\mathcal{E}^\mathsf{C})  \nonumber \\ 
		&\le  \sum_{j \in \mathbb{N}_{+}}8\Delta_{j} \mathcal{P}_j + 2 + T \cdot \delta   \nonumber \\
		& =  \underbrace{\sum_{j \, \le \, j^\star}8\Delta_{j}\mathcal{P}_j }_{(*)}  +  \underbrace{\sum_{j > j^\star}8\Delta_{j}\mathcal{P}_j }_{(**)}+ 2 + T \cdot \delta, \label{eq:regret_bound_first}
	\end{align}
	where we let $j^\star = \max \left\{0, \left\lfloor \log_2 \left(\sqrt{\frac{T}{n}}\frac{1}{\log_2^2(T)}\right)\right\rfloor \right\}$.

	We separately bound the two terms $(*)$ and $(**)$ in Equation~\ref{eq:regret_bound_first}.
	\begin{itemize}
		\item Bound on $(*)$. Thanks to the definition of $j^\star \in \mathbb{N}_+$, we have $\Delta_j \ge \log_2^2(T) \sqrt{\nicefrac{n}{T}}$ for each $j \in [\, j^\star]$.
		Thus, we have:
		\begin{align*}
			\sum_{j \, \le \, j^\star}8\Delta_{j}  \mathcal{P}_j
			&\le \sum_{j \, \le \, j^\star}  \frac{128}{\Delta_j} \log\left(\nicefrac{4T}{\delta}\right) (j+2) n \lceil \log_2(T) \rceil \\
			&\le \frac{{128} \sqrt{nT} \log\left(\nicefrac{4T}{\delta}\right)  \lceil \log_2(T) \rceil \sum_{j \, \le \, j^\star}(j+2) }{  \log^2_2(T)}  \\
			&\le \frac{{128} \sqrt{nT} \log\left(\nicefrac{4T}{\delta}\right)  \lceil \log_2(T) \rceil \left(  \log^2_2(T) + 3 \log_2(T) \right) }{  \log^2_2(T)}  \\
			&\le \mathcal{O} \left( \sqrt{nT} \log\left(\nicefrac{4T}{\delta} \right)   \log(T) \right).
		\end{align*}
		\item Bound on $(**)$. Thanks to the definition of $j^\star \in \mathbb{N}_+$, it holds $\Delta_j \le \log_2^2(T) \sqrt{\nicefrac{n}{T}}$ for each $j >  j^\star$.
		Thus, we can prove that the following holds:
		\begin{align*}
			\sum_{j > j^\star}8\Delta_{j}  \mathcal{P}_j \le 8 \sqrt{\frac{n}{T}} \log_2^2(T ) \sum_{j > j^\star}   \mathcal{P}_j \le 8 \sqrt{nT} \log^2_2(T) \le \mathcal{O}(\sqrt{nT} \log^2(T)).
		\end{align*}
	\end{itemize}
	Putting $(*)$ and $(**)$ together and setting $\delta=\nicefrac{1}{T}$ we have:
	\begin{align*}
		R_T \le  \mathcal{O}(\sqrt{nT} \log^2(T)),
	\end{align*}
	concluding the proof.
\end{proof}

We provide a sketch of how to prove that the regret suffered by Algorithm \ref{alg:main} is of order $\widetilde{\mathcal{O}}(\sqrt{T})$, where we explicitly omit the dependence on $n$ for ease of presentation.
Notice that, thanks to Lemma~\ref{lem:deltaopt}, the regret suffered during the execution of \texttt{Find-Jumps} in epoch $j \in \mathbb{N}_{+}$ is $\mathcal{O}(\nicefrac{1}{\Delta_{j}})=\mathcal{O}({2^{j}})$.
Furthermore, thanks to Lemma~\ref{lem:call_find_bp}, we have that \texttt{Find-Jumps} is invoked a number of times proportional to the index of the epoch $j \in \mathbb{N}_{+}$.
Therefore, the regret suffered during the execution of \texttt{Find-Jumps} in epoch $j \in \mathbb{N}_{+}$ is bounded by $\mathcal{O}(j2^j)$.
Let us also observe that the largest epoch $j^\star \in \mathbb{N}_{+}$ executed by Algorithm~\ref{alg:main} must be such that $j^\star \le {\log_2(T)}/{2}$.
Otherwise, the \texttt{Find-Jumps} procedure would require a number of rounds larger than $\nicefrac{1}{\Delta^2_{j^\star}}= T$ to be executed.
Therefore, the regret suffered by \texttt{Find-Jumps} can be bounded by:
$$
	\sum_{j\le {\log_2(T)}/{2}} j 2^j \le \widetilde{\mathcal{O}}(\sqrt{T}).
$$
In conclusion, thanks to Theorem~\ref{thm:MainThm} and the relations between BwMJ problems and microeconomic models discussed in Section~\ref{sec:applications}, we have that the two following corollaries hold.
\begin{corollary}\label{cor:contracts}
In (non-Bayesian) hidden-action principal-agent problems, there exists an algorithm that achieves a regret of $\widetilde{\mathcal{O}}(\sqrt{nT} )$ when competing against an optimal linear contract, where $n$ is the number of agent's actions.
Moreover, in Bayesian hidden-action principal-agent problems, there exists an algorithm that achieves a regret of $\widetilde{\mathcal{O}}(\sqrt{ndT} )$ when competing against an optimal linear contract, where $d \in \mathbb{N}$ is the number of possible agent's types.
\end{corollary}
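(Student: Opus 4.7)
The plan is to derive both regret bounds as direct consequences of Theorem~\ref{thm:MainThm}, by exhibiting, for each principal-agent setting, an equivalent BwMJ instance to which the algorithm of Section~\ref{sec:NoRegretAlgo} can be applied. The main work is to check that the reductions preserve (i) the interaction protocol and expected-reward structure, (ii) the assumption that the linear function $\ell$ is strictly decreasing, and (iii) the monotonicity of the jumps.

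For the non-Bayesian case, I would invoke the reduction already described in Section~\ref{sec:applications_contracts}. Given an instance $(A,\Omega,r,F,c)$ with $n$ agent's actions, I take the $n$ intervals $\mathcal{A}_i \coloneqq \mathcal{B}_i=[\beta_i,\beta_{i+1})$ induced by the agent's best-response partition, let $\nu_i$ be the distribution over principal's rewards $r_j$ when the agent plays action $i$ (so $\mu_i=R_i$), and set $\ell(\rho)=1-\rho$. Then $u(\rho)=(1-\rho)R_{i^\star(\rho)}$ coincides with the BwMJ reward function, $\ell$ is strictly decreasing, and the monotonicity assumption $\mu_i<\mu_{i+1}$ holds because we ordered the agent's actions so that $R_i<R_{i+1}$. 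Moreover, the feedback observed by the principal at round $t$, namely the realized outcome $j\sim F_{i^\star(\rho_t)}$, has expected value $R_{i^\star(\rho_t)}=\mu_{h(\rho_t)}$, so observing the principal's reward $r_j$ is exactly a sample from $\nu_{h(\rho_t)}$. Applying Theorem~\ref{thm:MainThm} to this BwMJ instance with $n$ intervals yields the $\widetilde{\mathcal{O}}(\sqrt{nT})$ regret bound.

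For the Bayesian case, the same recipe is applied after refining the partition to account for the $d$ possible types. As $\rho$ ranges over $[0,1]$, the best-response function of \emph{each} type $k\in[d]$ is piecewise constant with at most $n-1$ breakpoints, so the joint best-response profile $(i^\star_1(\rho),\ldots,i^\star_d(\rho))$ is piecewise constant with at most $d(n-1)$ breakpoints. I would define the BwMJ intervals as the maximal sub-intervals of $[0,1]$ on which this joint profile is constant, yielding $n'\le nd$ intervals, and let $\mu_i \coloneqq \sum_{k\in[d]} \mathcal{D}(k)\,R_{i^\star_k(\rho),k}$ for $\rho$ in the $i$-th interval, with $\nu_i$ the corresponding mixture distribution over principal's rewards; again $\ell(\rho)=1-\rho$. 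Monotonicity $\mu_i<\mu_{i+1}$ follows because, at each breakpoint, the type that switches best response moves (by the tie-breaking rule) to an action with strictly larger expected reward under its own type, so each summand $R_{i^\star_k(\rho),k}$ is non-decreasing in $\rho$ and at least one strictly increases across each breakpoint. Once these properties are verified, applying Theorem~\ref{thm:MainThm} to the resulting BwMJ instance with at most $nd$ intervals gives the $\widetilde{\mathcal{O}}(\sqrt{ndT})$ regret bound.

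The main obstacle is the Bayesian case: one must carefully argue that (a) the refined partition indeed has size $\mathcal{O}(nd)$ rather than exponential in $d$, and (b) the monotonicity assumption genuinely holds after aggregating across types, which relies on the standard property that, as $\rho$ grows, each type's best response moves to higher-reward actions. All other ingredients—matching the feedback model, checking that $\ell$ is strictly decreasing, and that the learner's realized reward $(1-\rho_t)r_{j_t}$ is the product $\ell(\rho_t)\cdot x_t$ with $x_t\sim \nu_{h(\rho_t)}$—are routine translations between the two formalisms.
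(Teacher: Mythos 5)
Your proposal is correct and follows essentially the same route as the paper: the corollary is obtained by applying Theorem~\ref{thm:MainThm} to the BwMJ reductions already set up in Section~\ref{sec:applications_contracts} (the $n$ best-response intervals $\mathcal{B}_i$ with $\mu_i = R_i$ and $\ell(\rho)=1-\rho$ in the non-Bayesian case, and the refinement into at most $nd$ intervals $\mathcal{B}_{i,k}$ with type-averaged expectations in the Bayesian case). Your additional verification that the aggregated expectations $\sum_k \mathcal{D}(k) R_{i^\star_k(\rho),k}$ remain strictly increasing across breakpoints is exactly the detail the paper leaves implicit when it asserts the $nd$-interval construction, so nothing is missing.
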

In hidden-action principal-agent problems, \citet{zhu2023sample} proposed an algorithm that achieves a regret of $\widetilde{\mathcal{O}}(T^{\nicefrac{2}{3}})$ when competing against an optimal linear contract.
\citet{zhu2023sample} leave open the question of whether such regret dependence could be improved when the number of the agent's actions is sufficiently smaller compared to the time horizon $T$. %
We provide a positive answer to this question in Corollary~\ref{cor:contracts}.
%
%
%
%
%
%
%
\begin{corollary}\label{cor:auctions}
	In posted-price auctions with finitely many buyer valuations, there exists an algorithm that achieves a regret of $\widetilde{\mathcal{O}}(\sqrt{nT} )$ when competing against an optimal fixed price, where $n$ is the number of possible buyer valuation values.
\end{corollary}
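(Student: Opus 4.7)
The plan is to cast any repeated posted-price auction with finitely many buyer valuations as an instance of the BwMJ framework, and then invoke Theorem~\ref{thm:MainThm} directly.

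First, I reuse the reduction already spelled out in Section~\ref{sec:applications_postedprice}: given the (unknown) support $\{v_i\}_{i=1}^n$ of $\mathcal{V}$, partition $[0,1]$ into the $n+1$ intervals $\mathcal{A}_1=[0,v_1]$, $\mathcal{A}_i=(v_{i-1},v_i]$ for $i=2,\dots,n$, and $\mathcal{A}_{n+1}=(v_n,1]$; let $\nu_i$ be Bernoulli with mean $\mu_i=1-F(v_i)$; and set $\ell(p)=p$. The feedback the seller observes per round, namely the indicator of whether the item is sold at the posted price $p_t$, is by construction a sample $x_t\sim\text{Bernoulli}(\mu_{h(p_t)})$, and the per-round revenue $p_t\cdot\mathbbm{1}\{v_t\ge p_t\}$ coincides exactly with the BwMJ reward $\ell(p_t)\cdot x_t$. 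Hence maximizing expected revenue $u(p)=p(1-F(p))$ against an optimal fixed price is the same objective as maximizing the BwMJ expected reward against $\textnormal{OPT}$, so the two notions of pseudo-regret match.

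The next step is to verify the monotonicity assumption. Here the sequence $(\mu_i)_{i\in[n+1]}$ is strictly \emph{decreasing} (since $F$ is a CDF and the $v_i$ are strictly increasing), and $\ell(p)=p$ is strictly \emph{increasing}, which is the mirror image of the canonical BwMJ convention used to state Theorem~\ref{thm:MainThm}. This case is explicitly covered by the footnotes accompanying the BwMJ definition, which note that the results transfer to instances with $\mu_i>\mu_{i+1}$ and with an increasing linear $\ell$; formally, one may conjugate by the map $p\mapsto 1-p$ (which reverses the action space and swaps the roles of extremes of each interval) to reduce to the canonical orientation, so Algorithm~\ref{alg:main} and its analysis carry over without change.

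Applying Theorem~\ref{thm:MainThm} to the resulting instance, which has $n+1$ pieces, yields a pseudo-regret bound of $\mathcal{O}(\sqrt{(n+1)T}\log^2(T))=\widetilde{\mathcal{O}}(\sqrt{nT})$, establishing the corollary. The only non-mechanical step is the orientation issue discussed above, which is handled by a simple symmetry argument; everything else is a transcription of the reduction of Section~\ref{sec:applications_postedprice} into the hypotheses of Theorem~\ref{thm:MainThm}.
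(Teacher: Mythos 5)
Your proposal is correct and follows exactly the route the paper intends: the corollary is obtained by plugging the reduction of Section~\ref{sec:applications_postedprice} (with $n+1$ intervals, Bernoulli distributions of means $1-F(v_i)$, and $\ell(p)=p$) into Theorem~\ref{thm:MainThm}, with the reversed orientation (decreasing $\mu_i$, increasing $\ell$) handled by the symmetry the paper relegates to its footnotes. Your explicit conjugation by $p\mapsto 1-p$ is a clean way to make that footnote rigorous, but it does not change the argument.
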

In posted-price auctions with finitely many buyer valuations, \citet{cesa2019dynamic} proposed an algorithm that achieves a regret of $\widetilde{\mathcal{O}}(\sqrt{nT} + V(V+1))$, where $V$ is an instance-dependent parameter encoding the magnitude of jump gaps.
\citet{cesa2019dynamic} leave open the question of whether the dependence on such a parameter $V$ is avoidable or not. 
We positively answer this question in Corollary~\ref{cor:auctions}.

\section{On the Optimality of the \texttt{RJI-OS} Algorithm}\label{sec:lowerbound}

In this section, we show that the regret guarantees attained by the \texttt{RJI-OS}  algorithm are optimal.
Specifically, the regret attained by Algorithm~\ref{alg:main} is optimal even when restricting the attention to BwMJ instances related to microeconomic settings, namely those capturing hidden-action principal-agent problems and posted-price auctions, which have been introduced in Section~\ref{sec:applications}.

Specifically, the \texttt{RJI-OS} algorithm, when employed in posted-price auctions settings (see Section~\ref{sec:applications_postedprice}), achieves optimal regret guarantees that match the lower bound introduced by~\citet{cesa2019dynamic}.
Similarly, when employed in principal-agent problems (see Section~\ref{sec:applications_contracts}), the \texttt{RJI-OS} algorithm attains optimal regret guarantees, matching a lower bound we discuss in the following.

We now present a lower bound for principal-agent settings.
Specifically, for the problem of minimizing the regret with respect to an optimal-in-hindsight \emph{linear contract} (see Section~\ref{sec:applications_contracts} for its formal introduction).
%
%
Our result is formally stated in the following Theorem~\ref{thm:lb}.
Notice that Theorem~\ref{thm:lb} strengthens an existing lower bound by~\citet{zhu2023sample}, which holds only when the agent has a number of actions proportional to $T^{\nicefrac{1}{3}}$, whereas Theorem~\ref{thm:lb} holds for any number of agent's actions smaller than or equal to that needed by~\citet{zhu2023sample}.
Formally:
\begin{restatable}{theorem}{LowerBound}\label{thm:lb}
	Let $n, T \in \mathbb{N}_{+}$ with $T^{\nicefrac{1}{3}} \ge n \ge 3$.
	Then, for any learner's policy $\pi \coloneqq (\pi_t)_{t \in [T]}$, there exists an instance of BwMJ problem, encoding a hidden-action principal-agent problem where the agent has $n$ actions available (see Section~\ref{sec:applications_contracts}), in which the regret attained by the policy $\pi$ is
	\[
		R_T (\pi) \ge \Omega \left( {\sqrt{nT}} \right).
	\]
	%
	%
	%
\end{restatable}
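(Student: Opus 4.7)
The plan is to adapt the standard $\Omega(\sqrt{nT})$ lower bound for $n$-armed bandits to the setting of linear contracts, by embedding $n$ statistically hard-to-distinguish hypotheses into a family of hidden-action principal-agent instances.

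First, I would construct a \emph{skeleton} instance with $n$ agent actions whose costs and baseline expected rewards $R_i$ are chosen so that the best-response partition $(\mathcal{B}_i)_{i\in[n]}$ under linear contracts consists of intervals of width $\Theta(1/n)$, with every action implementable. A convenient choice is to let each $F_i$ be a Bernoulli distribution over two outcomes of rewards $0$ and $1$, so that $R_i$ coincides with the success probability of $F_i$; the costs $c_i$ can then be picked to produce approximately equispaced thresholds $\beta_i$.

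Second, I would form a family $\{\mathcal{I}_k\}_{k}$ of perturbed instances, where $\mathcal{I}_k$ is obtained from the skeleton by bumping $R_k$ up by a small $\epsilon > 0$ (equivalently, nudging the Bernoulli parameter of $F_k$). The perturbation is chosen so that (i) the partition $(\mathcal{B}_i)_{i\in[n]}$ and action implementability are preserved, and (ii) in $\mathcal{I}_k$ the optimal contract lies in $\mathcal{B}_k$ and yields expected utility $\Omega(\epsilon)$ higher than any contract outside $\mathcal{B}_k$. Thus, any round spent outside $\mathcal{B}_k$ under $\mathcal{I}_k$ incurs instantaneous regret $\Omega(\epsilon)$.

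Third, I would apply a standard information-theoretic argument. Let $T_k$ denote the expected number of rounds the learner plays inside $\mathcal{B}_k$ under the skeleton (reference) instance. Pigeonholing gives some $k^\star$ with $T_{k^\star} \le T/n$. The KL divergence between the round-$T$ observation laws under $\mathcal{I}_{k^\star}$ and the reference is at most $\mathcal{O}(\epsilon^2 T_{k^\star})$, since the two instances differ only through $F_{k^\star}$, which is sampled only when the learner plays in $\mathcal{B}_{k^\star}$. Pinsker's inequality then bounds the change in the expected number of rounds spent in $\mathcal{B}_{k^\star}$ between the two instances by $\mathcal{O}(T\,\epsilon\sqrt{T_{k^\star}})$. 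Combining, the regret under $\mathcal{I}_{k^\star}$ is at least $\Omega\bigl(\epsilon\,(T - T/n - T\epsilon\sqrt{T/n})\bigr)$. Setting $\epsilon = c\sqrt{n/T}$ for a small enough constant $c$ balances the terms and yields the claimed $\Omega(\sqrt{nT})$ bound; the hypothesis $n \le T^{1/3}$ guarantees $\epsilon \le 1$ and that the perturbation keeps all $R_i$ in $[0,1]$.

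The main obstacle will be the first step. Unlike in a plain multi-armed bandit, the ``arms'' here are contracts in the continuum $[0,1]$, and the effective partition into intervals is a derived object that depends on the agent's costs and rewards in a non-trivial way. Designing a single skeleton with equispaced thresholds, checking that every perturbation keeps all actions implementable and preserves the partition, and verifying that a small bump in $R_k$ actually moves the maximum of $u(\rho)$ into $\mathcal{B}_k$ (rather than leaving it in a neighboring interval) will require careful algebraic bookkeeping. A secondary subtlety is translating KL divergences on the outcome distributions $F_i$ into KL divergences on the learner's observed rewards, but with the Bernoulli-over-$\{0,1\}$ choice this reduces to standard Bernoulli KL estimates.
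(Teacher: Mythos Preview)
Your proposal is correct and follows essentially the same approach as the paper. Both reduce to a standard $n$-arm bandit lower bound embedded in a two-outcome principal-agent instance: build a family of instances that differ only in the Bernoulli parameter of one action, locate via pigeonhole an action $i^\star$ the learner rarely induces under a reference instance, and use a KL-based inequality to show the learner cannot play enough in $\mathcal{B}_{i^\star}$ under the perturbed instance either; the choice $\epsilon\asymp\sqrt{n/T}$ then yields $\Omega(\sqrt{nT})$. The paper executes this with an explicit construction whose thresholds $\alpha_i$ cluster in a bounded subinterval of $[0,1]$ and whose baseline utilities satisfy $u(\alpha_i)=1/k$ for all $i\ge 3$ exactly, and it compares two instances via Bretagnolle--Huber rather than Pinsker; your skeleton-plus-perturbation via Pinsker is an equally standard variant. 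One small correction: the role of the hypothesis $n\le T^{1/3}$ is not ``$\epsilon\le 1$'' but rather that the bump $\epsilon$ must not exceed the gap $R_{k+1}-R_k$ (which is $\Theta(1/n)$ in your construction and exactly $\epsilon$ in the paper's), so that monotonicity and the best-response partition survive the perturbation; this is what forces $\epsilon n=O(1)$ and hence $n\lesssim T^{1/3}$.
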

\begin{proof}
	Let $\epsilon \in (0,1)$ and $k \in ( 2, 3) $ be two parameters to be defined in the following. 
	We consider a principal-agent instance $\mathcal{P}$ characterized by two outcomes $ r_1 = 1 $ and $ r_2 = 0 $ and by $ n \ge 3 $ agent's actions.
	The costs and the corresponding probability distributions over outcomes of these actions are defined as follows.
	\begin{align*}
		\begin{cases}
			F_{1}=\left(0,1 \right) \,\, & c_1 = 0 \\
			\vspace{1.5mm}
			F_{2}=\Big(\frac{1}{2} + \frac{ 1 }{2} \epsilon ,  \frac{1}{2} - \frac{1 }{2} \epsilon \Big) \,\, & c_2 = \left( 1+ \epsilon \right)\left(\frac{1}{2} - \frac{1}{k}  \right ) \\
			F_{i}=\Big(   \frac{1}{2} + \epsilon (i-2) , \frac{1}{2} - \epsilon (i-2)  \Big) \,\ & c_{i} =  c_{i-1} + \alpha_i \left(F_{i,1} - F_{i-1,1} \right) \,\,\,\,\, i \ge 3,
		\end{cases}
	\end{align*}
	Where the parameters $\alpha_i \in [0,1]$ are given by:
	\begin{equation*}
		\alpha_1 = 0 \quad \textnormal{and}	\quad \alpha_i \coloneqq 1 - \frac{1}{\left( \frac{1}{2} + \epsilon (i-2) \right)k }  \,\,\,\,\, i=2, \dots, n.
	\end{equation*}
	In the following, we assume $\epsilon n \le \nicefrac{1}{4}$ to ensure that the probability distributions and the costs of the different actions are well-defined. Indeed, for each $ i \ge 3$, we have:
	\begin{equation*}
		c_{i} =  c_{i-1} + \alpha_i \left(F_{i,1} - F_{i-1,1} \right) \le c_{i-1} + \epsilon.
	\end{equation*}
	The last inequality holds observing that $|\alpha_i|\le 1$ for each $i \in [n]$ and $|F_{i,1} - F_{i-1,1} | \le \epsilon$ for each $i \ge 3$.
	Thus, thanks to the assumption that $\epsilon n \le \nicefrac{1}{4}$, it is easy to verify that $c_n \le 1$ by employing a simple recursive argument and noticing that $c_2 \le \nicefrac{1}{3}$ since $\epsilon \le 1$ and $k \le 3$.

	We also notice that each action $i \ge 2$ provides the agent with an expected utility greater or equal to what they would achieve by selecting action $i-1$ when the principal commits to a contract $\alpha \ge \alpha_i$.
	This is because $F_{i,1} \ge F_{i-1,1}$ for each $i \ge 2$, and, by a simple calculation, it is possible to verify that the following equalities hold:
	\begin{equation*}
		F_{i-1,1} \alpha_{i}  - c_{i-1} = F_{i,1} \alpha_{i}  - c_{i} \,\,\ \textnormal{for each} \,\,\, i=2, \dots, n.
	\end{equation*}
	%
	%
	Thus, each action $i \ge 2$ provides the agent with a higher expected utility compared to all the actions $i' \in [n]$ with $i' < i$ when $\alpha \ge \alpha_i$.
	At the same time, the action $i+1$ provides the agent higher utility compared to action $i$ only when $\alpha \ge \alpha_{i+1}$.
	Consequently, the agent's best-response regions are given by:
	\begin{equation}\label{eq:br_regions_lb}
		\mathcal{B}_i=\left[\alpha_{i}, \, \alpha_{i+1} \right)  \,\ i=1, \dots , n-1 \,\,\,\,\,\,\ \textnormal{and} \,\,\,\,\,\,\,\	\mathcal{B}_n=\left[\alpha_{n}, \, 1 \right],
	\end{equation}
	as we assume the agent breaks ties in favor of the principal.

	We also observe that for each $\alpha \in \mathcal{B}_i$ and action $i \in [n]$, the principal's expected utility satisfies $u(\alpha) \le u(\alpha_i)$, as $u(\alpha)$ is a decreasing function in $\mathcal{B}_i$.
	Furthermore, the principal's expected utility in each $\alpha_i$, with $i \ge 3$, is such that:
	\begin{equation*}
		u(\alpha_i) = (1- \alpha_i) F_{i, 1} = \frac{1}{\left(\frac{1}{2} + \epsilon (i-2) \right) k } \left(\frac{1}{2} + \epsilon (i-2) \right) = \frac{1}{k}.
	\end{equation*}
	Similarly, when $\alpha=\alpha_2$, the principal's expected utility is equal to:
	\begin{equation*}
		u(\alpha_2) = (1- \alpha_2) F_{2, 1} = \frac{2}{k} \left(\frac{1+ \epsilon}{2} \right) = \frac{1 + \epsilon}{k}.
	\end{equation*}
	Finally, we notice that $u(\alpha_1)=0$.
	Thus, in instance $\mathcal{P}$, the optimal (linear) contract is equal to $\alpha_2$. 
	
	%
	We denote with $\mathbb{P}$ the probability distribution over all possible histories associated with the execution of $\pi$ in instance $\mathcal{P}$ over $T \ge 0$ rounds.
	Similarly, we define $\mathbb{E}$ as the expectation under the probability distribution $\mathbb{P}$.
	We also denote with $T_i(T)$ the number of rounds in which the agent selects action $i \in [n]$ out of $T \ge 0$ rounds, and we let:
	\begin{equation}\label{eq:istar_lb}
		i^\star = \arg\min_{i>2} \mathbb{E}\left[ \,T_i(T) \, \right].
	\end{equation}
	Intuitively, the action $i^\star > 2$ corresponds to the (sub-optimal) action that is selected less frequently by the agent when the policy $\pi$ is executed in instance $\mathcal{P}$ over $T \geq 0$ rounds.
	
	We now introduce a different principal-agent instance $ \mathcal{P}'$ depending on the action $i^\star$ defined in Equation~\ref{eq:istar_lb}.
	This new instance shares the same distribution over the set of outcomes as instance $\mathcal{P}$, with the main difference being that, in instance $\mathcal{P}'$, the probability of observing the first outcome when the agent selects action $i^\star$ is increased by $\epsilon > 0$.
	Formally, we have:
	\begin{align*}
		\begin{cases}
			F_{1}'=\left(0,1 \right) \,\, & c_1' = 0 \\
			\vspace{1.5mm}
			F_{2}'=\Big(\frac{1}{2} + \frac{ 1 }{2} \epsilon ,  \frac{1}{2} - \frac{1 }{2} \epsilon \Big) \,\, & c_2' = \left( 1+ \epsilon \right)\left(\frac{1}{2} - \frac{1}{k}  \right ) \\
			\vspace{1.5mm}
			F_{i^\star}'=\Big(   \frac{1}{2} + \epsilon (i^\star-1), \frac{1}{2} - \epsilon (i^\star-1)  \Big) \,\ & c_{i^\star}' =  c_{i^\star-1}' + \alpha_{i^\star} \left(F_{i^\star,1}' - F_{i^\star-1,1}' \right)\\
			F_{i}'=\Big(   \frac{1}{2} + \epsilon (i-2) , \frac{1}{2} - \epsilon (i-2)  \Big) \,\ & c_{i}' =  c_{i-1}' + \alpha_i \left(F_{i,1}' - F_{i-1,1}' \right) \,\,\,\,\,\,\, i\ge 3. \\
		\end{cases}
	\end{align*}
	We observe that, in instance $\mathcal{P}'$, the actions $i^\star$ and $i^\star + 1$ coincide when $i^\star \neq n$.
	Therefore, it is without loss of generality to assume that when the principal selects a contract $\alpha \in \mathcal{B}_{i^\star}$, the agent chooses action $i^\star$, while when the principal selects $\alpha \in \mathcal{B}_{i^\star+1}$, the agent chooses action $i^\star + 1$.
	Using the same argument employed above, it is possible to verify that the agent's best-response regions $\mathcal{B}_{i}'$ in instance $\mathcal{P}'$ satisfy:
	\begin{equation*}
		\mathcal{B}_{i}'=\mathcal{B}_{i} \quad \forall i\in [n].
	\end{equation*}
	%
	%
	%
	%
	Moreover, we notice that the principal's utility in instance $\mathcal{P}'$, evaluated at $\alpha_{i^\star}$, is equal to:
	\begin{align*}
		u'(\alpha_{i^\star}) = (1- \alpha_{i^\star}) F_{{i^\star}, 1} & = \frac{1}{\left(\frac{1}{2} + \epsilon ( {i^\star}-2) \right) k } \left(\frac{1}{2} + \epsilon ( {i^\star}-2) + \epsilon \right)\\ &= \frac{1}{k} \left( 1 + \frac{\epsilon}{\left(\frac{1}{2} + \epsilon ( {i^\star}-2) \right)}\right)\\
		& \ge \frac{1}{k} \left( 1 + \frac{4}{3}\epsilon\right),
	\end{align*}
	where the last inequality holds since $\epsilon \le \epsilon n \le \nicefrac{1}{4}$. 
	Furthermore, the principal's utility evaluated at the remaining $\alpha_{i}$ is equal to:
	\begin{equation*}
		u'(\alpha_{i}) = u(\alpha_{i}),
	\end{equation*}
	for each $i \neq i^\star$. Thus, in instance $\mathcal{P}'$, the optimal contract coincides with $\alpha_{i^\star}$.  
	In the following, we assume that whenever the principal selects a contract $\alpha \in \mathcal{B}_i$ (or $\alpha \in \mathcal{B}_i'$), they achieve an expected utility equal to what they achieve by selecting $\alpha_i$.
	This assumption only facilitates the learning problem faced by the principal, since the principal's expected utility is a decreasing function in each agent's best-response region.
	Moreover, the feedback received by the principal is equivalent for each action $\alpha \in \mathcal{B}_i$ (or $\alpha \in \mathcal{B}_i'$).
	%
	%
	%
	%
	
	
	In this way, the learning problem faced by the principal reduces to a multi-armed bandit problem, where the set of arms available in the two instances coincides with the set $\{\alpha_i\}_{i=1}^n$, and each arm $\alpha_i$ provides utility $u(\alpha_i)$ and $u'(\alpha_i)$ in instances $\mathcal{P}$ and $\mathcal{P}'$, respectively.
	%
	Thus, the regret suffered in instance $\mathcal{P}$ can be lower bounded as follows:
	\begin{equation*}
		R_T(\pi) \ge  \mathbb{P} \left(T_2(T)\le T/2\right) \frac{T}{2} \frac{\epsilon}{k},
	\end{equation*}
	since the optimal arm in instance $\mathcal{P}$ provides an expected utility to the principal that is ${\epsilon}/{k}$ better compared to all the contracts $\alpha_i $ with $i \neq 2$.
	Furthermore, the regret in instance $\mathcal{P}'$ can be lower-bounded as follows.
	\begin{equation}\label{eq:regret_lb_2}
		R'_T(\pi) \ge \mathbb{P}' \left(T_2(T) > T/2\right) \frac{T}{2} \frac{\epsilon}{3 k},
	\end{equation}
	where, similar to instance $\mathcal{P}$, we let $\mathbb{P}'$ be the probability distribution over all possible histories when executing $\pi$ in instance $\mathcal{P}'$ over $T \ge 0$ rounds and we define $\mathbb{E}'$ the expectation under~$\mathbb{P}'$.
	
	We observe that Equation~\eqref{eq:regret_lb_2} holds because $\alpha_2$ provides an expected utility that is ${\epsilon}/{3k}$ worse compared to the utility achieved in $\alpha_{i^\star}$. 
	We also observe that the average number of times the agent selects action $i^\star$ when the principal executes policy $\pi$ in instance $\mathcal{P}$ can be upper bounded as follows: 
	\begin{equation*}
		\mathbb{E} [T_{i^\star}(T)] \le \frac{T}{n-2}.
	\end{equation*}
	Since $\sum_{i \ge 3 } \mathbb{E} [T_{i}(T)] \le T$ and because of the definition of $i^\star$ introduced in Equation~\eqref{eq:istar_lb}.
	Furthermore, by employing the Bretagnolle–Huber inequality, we have that:
	\begin{align*}
		R_T (\pi)+ R'_T(\pi) & \ge
		\left( \mathbb{P} \left(T_2(T)\le T/2\right) +  \mathbb{P}' \left(T_2(T) > T/2\right) \right) \frac{T \epsilon}{6 k}\\
		& \ge \frac{T \epsilon}{12 k} e^{-\mathcal{KL}(\mathbb{P}, \mathbb{P}')}.
	\end{align*}

	Then, by employing the divergence decomposition lemma (see~\cite{lattimore2020bandit}) and observing that the principal receives different feedback in the two instances only when they select contracts $\alpha \in \mathcal{B}_{i^\star}$, we have:
	\begin{align*}
		\mathcal{KL} \left(\mathbb{P}, \mathbb{P}' \right) & = \mathbb{E} [T_{i^\star}(T)] \cdot\mathcal{KL} \left(F_{i^\star}, F_{i^\star} ' \right)\\
		& \leq \frac{T }{ n-2}  \frac{\epsilon^2}{\left(\frac{1}{2}- \epsilon (i^\star+1) \right) \left(\frac{1}{2} + \epsilon (i^\star-1) \right)} \\
		& \leq \frac{16}{3} \frac{T }{ n-2}  \epsilon^2,
	\end{align*}
	where we employed the fact that the Kullback-Leibler divergence between two Bernoulli distributions can be upper bounded as $\mathcal{KL}(\mathcal{B}(p), \mathcal{B}(q)) \le (p-q)^2/((1-q)q)$ and the fact that $\epsilon( i^\star-1) \le \epsilon n \le \nicefrac{1}{4}$. Thus, we have:
	\begin{align*}
		\max \{ R_T (\pi),R'_T(\pi)\} &\ge \frac{1}{2} \left( R_T (\pi)+ R'_T(\pi) \right) \\
		& \ge \frac{T \epsilon}{24 k} e^{-\mathcal{KL}(\mathbb{P}, \mathbb{P}')}\\
		& \ge \frac{T \epsilon}{24 k} \exp \left( -  \frac{16 T \epsilon^2}{3 (n-2)} 	\right)\\
		& \ge \frac{5 e^{-1}}{1008} \sqrt{nT}
	\end{align*}
	Where the last step holds by setting $\epsilon = \sqrt{\nicefrac{n}{16 T}} $, $k =\nicefrac{21}{10} $ and observing that $n/(n-2) \le 3$ since $n \ge 3$. Finally, we have that the number of actions $n$ must satisfy:
	\begin{equation*}
		\epsilon n =  \sqrt{\frac{n}{16 T}} n  \le \frac{1}{4},
	\end{equation*}
	which implies that $n \le T^{\nicefrac{1}{3}} $, concluding the proof.
\end{proof}
%
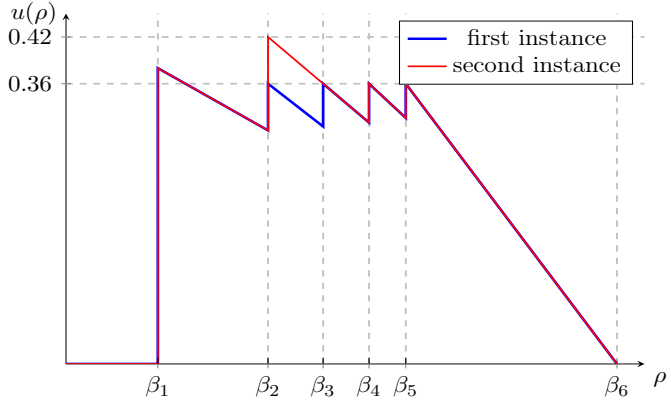
\begin{figure}[!htp]
	\centering
	\resizebox{0.6\linewidth}{0.37\linewidth}{	\begin{tikzpicture}
	\begin{axis}[
		axis lines=middle,
		xlabel={$\rho$},
		xlabel style={below right},
		ylabel={$u(\rho)$},
		ylabel style={left},
		xmin=0, xmax=0.63,
		ymin=0, ymax=0.45,
		xtick={0,1},
		ytick={0},
		grid=both,
		grid style={dashed,gray!50},
		extra y ticks={ 0.36,0.42,0.5,0.6},
		extra y tick style={grid=major, grid style={thick, dashed}},
		extra x ticks={0.1,0.22, 0.28, 0.33, 0.37, 0.6},
		extra x tick labels={$\beta_1$, $\beta_2$, $\beta_3$, $\beta_4$, $\beta_5$, $\beta_6$, $\beta_7$},
		extra x tick style={grid=major, grid style={thick, dashed}},
		width=10.5cm,
		height=7.5cm
		]
		\addplot[ blue, very thick] coordinates {
			(0.0, 0.0)
			(0.1, 0.0)
			(0.1, 0.38)
			(0.22, 0.3)
			(0.22, 0.36)
			(0.28, 0.305)
			(0.28, 0.36)
			(0.33, 0.31)
			(0.33, 0.36)
			(0.37, 0.316)
			(0.37, 0.36)
			(0.6, 0.0)
		};
		\addlegendentry{\textnormal{first instance}}
		
		\addplot[red, thick] coordinates {
			(0.0, 0.0)
			(0.1, 0.0)
			(0.1, 0.38)
			(0.22, 0.3)
			(0.22, 0.42)
			(0.22, 0.42)
			(0.33, 0.31)
			(0.33, 0.36)
			(0.37, 0.316)
			(0.37, 0.36)
			(0.6, 0.0)
		};
		\addlegendentry{\textnormal{second instance}}
		
		\draw[dashed, line width=0.02mm] (axis cs:0.1,0) -- (axis cs:0.1,0.01);
		\draw[dashed, line width=0.02mm] (axis cs:0.22,0) -- (axis cs:0.22,0.01);
		\draw[dashed, line width=0.02mm] (axis cs:0.28,0) -- (axis cs:0.28,0.01);
		\draw[dashed, line width=0.02mm] (axis cs:0.33,0) -- (axis cs:0.33,0.01);
		\draw[dashed, line width=0.02mm] (axis cs:0.37,0) -- (axis cs:0.37,0.01);
	\end{axis}
\end{tikzpicture}}
	\caption{Principal's expected utility in the two instances used to prove the lower bound in Theorem~\ref{thm:lb}.}
	\label{fig:lower_bound}
\end{figure}
In order to prove Theorem~\ref{thm:lb}, we draw a suitable connection between regret-minimizing linear contracts in principal-agent problems and regret minimization in classical multi-armed bandits.
%
%
To do so, we construct two principal-agent instances in which the distributions over outcomes of agent's actions coincide, except for one.
Thus, the principal receives the same feedback and utility, except when the agent selects the action that is different in the two instances.
This is made possible by suitably choosing the costs of agent's actions in the two cases.
Figure~\ref{fig:lower_bound} provides an example of the principal’s expected utility in the two instances.
We notice that, in order to distinguish between the two instances, the principal is required to play in the interval $\mathcal{B}_{2}=[\beta_2, \beta_3)$, since this corresponds to the set of linear contracts under which the action different in the two instances is a best response for the agent.
However, this results in a large regret in the first instance, since an optimal contract in the first instance is $\beta_1\in[0,1]$.
Similarly, if the principal only plays a few samples from the interval $\mathcal{B}_{2}=[\beta_2, \beta_3)$, then the regret in the second instance is large, since an optimal contract in such an instance is $\beta_2\in[0,1]$.

Let us observe that the lower bound presented in Theorem~\ref{thm:lb} cannot be extended to settings with a larger number of actions, except for absolute constants and logarithmic factors in the number of rounds $T$.
Indeed, if in Theorem~\ref{thm:lb} the number of an agent's actions $n$ exceeds $T^{\nicefrac{1}{3}}$, then the lower bound on the regret would be $\Omega(T^{\alpha})$ for some $\alpha > \nicefrac{2}{3}$.
%
%
However, this contradicts the $\widetilde{{\mathcal{O}}}(T^{\nicefrac{2}{3}})$ upper bound on the regret achievable by employing the algorithm by~\citet{kleinberg2004nearly}.
%

%


We conclude this section by discussing the results we obtained in light of the lower bound presented in this section.
In Figure~\ref{fig:regret_comparison} (\emph{Left}), we show that in principal-agent problems, the regret of Algorithm~\ref{alg:main} matches the lower bound presented in Theorem~\ref{thm:lb} (up to logarithmic factors) when the number of agent's actions satisfies $n \leq T^{\nicefrac{1}{3}}$, while the regret bound achieved by the algorithm in~\cite{zhu2023sample} could be largely suboptimal.
Additionally, as shown in Figure~\ref{fig:regret_comparison} (\emph{Right}), the regret suffered by Algorithm~\ref{alg:main} matches (up to logarithmic factors) the $\Omega(\sqrt{nT})$ lower bound by \citet{cesa2019dynamic} for posted-price auction settings.
This is not the case for the regret suffered by the algorithm proposed by \citet{cesa2019dynamic}, which can be significantly larger in scenarios in which identifying the jump discontinuities is challenging.
\begin{figure}[!t]
	\centering
	\resizebox{0.85\linewidth}{!}{\begin{tikzpicture}
	\begin{axis}[
		axis lines=middle,
		xlabel={$n$},
		ylabel={$R_T$},
		ylabel style={left},
		xmin=0, xmax=1.3,
		ymin=0, ymax=1.4,
		xtick={0},
		ytick={0},
		grid=both,
		grid style={dashed,gray!50},
        extra y ticks={1},
		extra y tick style={grid=major, grid style={thick, dashed}},
        extra y tick labels={$T^{\nicefrac{2}{3}}$},
		extra x ticks={1},
		extra x tick labels={$T^{\nicefrac{1}{3}}$},
		extra x tick style={grid=major, grid style={thick, dashed}},
		width=9cm,
		height=7cm
		]
		
		\addplot[blue, thick, domain=0:1, samples=100] {sqrt(x)};
        \addlegendentry{\hspace{0mm}LB (Thm~\ref{thm:lb})}
        \addplot[red, thick, domain=0:1, samples=100] {sqrt(x)+0.015};
        \addlegendentry{\hspace{-1mm}\textnormal{UB (Cor~\ref{cor:contracts})}}
        \addplot[red, thick, dashed, domain=0:1, samples=100] {1}; \addlegendentry{\hspace{-1.9mm}\textnormal{UB (Zhu23)}}
	\end{axis}
    \node[below] at (3.5, -0.5) {{principal-agent problems}};

	\begin{axis}[
		name=plot2,
		at={(plot1.right of south east)},
		anchor=left of south west,
		xshift=3.0cm,
		axis lines=middle,
		xlabel={$n$},
		ylabel={$R_T$},
		ylabel style={left},
		xmin=0, xmax=1.3,
		ymin=0, ymax=1.4,
		xtick={0},
		ytick={0},
		grid=both,
		grid style={dashed,gray!50},
		extra y tick style={grid=major, grid style={thick, dashed}},
		extra x ticks={1},
		extra x tick labels={$T^{\nicefrac{1}{3}}$},
		extra x tick style={grid=major, grid style={thick, dashed}},
		width=9cm,
		height=7cm
		]
		
        \addplot[blue, thick, domain=0:1, samples=100] {sqrt(x)};
        \addlegendentry{\hspace{-13mm}\textnormal{LB }}
		\addplot[red, thick, domain=0:1, samples=100] {sqrt(x)+0.015};
        \addlegendentry{\textnormal{UB (Cor~\ref{cor:contracts})}}
        \addplot[red, thick, dashed, domain=0:1, samples=100] {sqrt(x)+0.5};
        \addlegendentry{\hspace{-1.5mm}\textnormal{UB (Ces19)}}

	\end{axis}
     \node[below] at (13.0, -0.5) {{posted-price auctions}};
	
\end{tikzpicture}

	\caption{Regret upper and lower bounds as functions of the parameter \(n \le T^{\nicefrac{1}{3}}\). For the sake of presentation, we omitted logarithmic factors in the dependence of the regret bounds suffered by the different algorithms.}
	\label{fig:regret_comparison}
\end{figure}
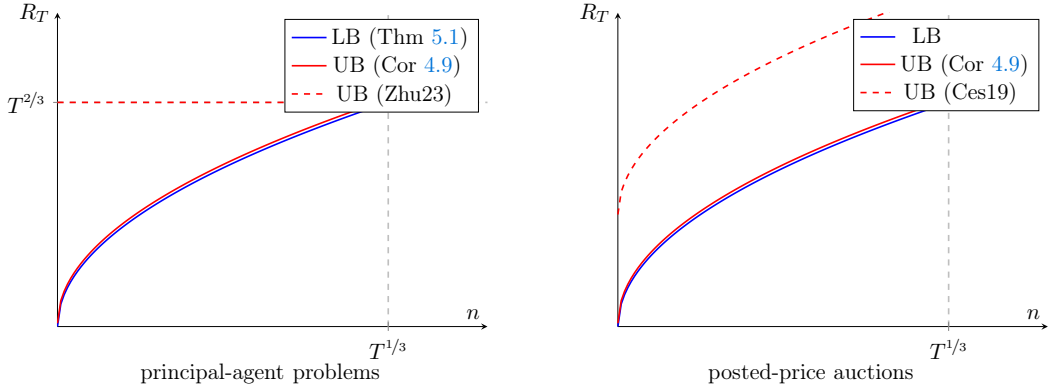
\section*{Acknowledgments}

This paper is supported by the Italian MIUR PRIN 2022 Project ``Targeted Learning Dynamics:
Computing Efficient and Fair Equilibria through No-Regret Algorithms'', by the FAIR (Future
Artificial Intelligence Research) project, funded by the NextGenerationEU program within the PNRRPE-AI scheme (M4C2, Investment 1.3, Line on Artificial Intelligence), and by the EU Horizon project
ELIAS (European Lighthouse of AI for Sustainability, No. 101120237).

\newpage
\printbibliography

@String{Computing = "Computing" }

@String{Computer = "{IEEE} Computer" }

@inproceedings{zhu2023sample,
	title={The Sample Complexity of Online Contract Design},
	author={Zhu, Banghua and Bates, Stephen and Yang, Zhuoran and Wang, Yixin and Jiao, Jiantao and Jordan, Michael I},
	booktitle={Proceedings of the 24th ACM Conference on Economics and Computation},
	year={2023}
}

@inproceedings{kleinberg2003value,
	title={The value of knowing a demand curve: Bounds on regret for online posted-price auctions},
	author={Kleinberg, Robert and Leighton, Tom},
	booktitle={44th Annual IEEE Symposium on Foundations of Computer Science, 2003. Proceedings.},
	pages={594--605},
	year={2003},
	organization={IEEE}
}

@article{kleinberg2004nearly,
	title={Nearly tight bounds for the continuum-armed bandit problem},
	author={Kleinberg, Robert},
	journal={Advances in Neural Information Processing Systems},
	volume={17},
	year={2004}
}

@book{lattimore2020bandit,
	title={Bandit algorithms},
	author={Lattimore, Tor and Szepesv{\'a}ri, Csaba},
	year={2020},
	publisher={Cambridge University Press}
}

@inproceedings{cesa2019dynamic,
	title={Dynamic pricing with finitely many unknown valuations},
	author={Cesa-Bianchi, Nicolo and Cesari, Tommaso and Perchet, Vianney},
	booktitle={Algorithmic Learning Theory},
	pages={247--273},
	year={2019},
	organization={PMLR}
}

@inproceedings{dutting2019simple,
	title={Simple versus optimal contracts},
	author={D{\"u}tting, Paul and Roughgarden, Tim and Talgam-Cohen, Inbal},
	booktitle={Proceedings of the 2019 ACM Conference on Economics and Computation},
	pages={369--387},
	year={2019}
}

@article{daskalakis2022learning,
	title={Learning in auctions: Regret is hard, envy is easy},
	author={Daskalakis, Constantinos and Syrgkanis, Vasilis},
	journal={Games and Economic Behavior},
	volume={134},
	pages={308--343},
	year={2022},
	publisher={Elsevier}
}

@inproceedings{bacchiocchi2024learning,
	title={Learning Optimal Contracts: How to Exploit Small Action Spaces},
	author={Bacchiocchi, Francesco and Castiglioni, Matteo and Marchesi, Alberto and Gatti, Nicola},
	year={2024}, 
	booktitle={The Twelfth International Conference on Learning Representations}
}

@article{chien2016adaptive,
	title={Adaptive Contract Design for Crowdsourcing Markets: Bandit Algorithms for Repeated Principal-Agent Problems},
	author={Ho, Chien-Ju and Slivkins, Aleksandrs and Vaughan, Jennifer Wortman},
	journal={The Journal of Artificial Intelligence Research},
	volume={55},
	pages={317--359},
	year={2016},
	publisher={AI Access Foundation}
}

@article{chen2024bounded,
	title={Are Bounded Contracts Learnable and Approximately Optimal?},
	author={Chen, Yurong and Chen, Zhaohua and Deng, Xiaotie and Huang, Zhiyi},
	journal={arXiv preprint arXiv:2402.14486},
	year={2024}
}

@InProceedings{DuettingOptimal2023,
  title = 	 {Optimal No-Regret Learning for One-Sided {L}ipschitz Functions},
  author =       {Duetting, Paul and Guruganesh, Guru and Schneider, Jon and Wang, Joshua Ruizhi},
  booktitle = 	 {Proceedings of the 40th International Conference on Machine Learning},
  pages = 	 {8836--8850},
  year = 	 {2023},
  volume = 	 {202},
  series = 	 {Proceedings of Machine Learning Research},
  month = 	 {23--29 Jul},
  publisher =    {PMLR},
  pdf = 	 {https://proceedings.mlr.press/v202/duetting23b/duetting23b.pdf},
  abstract = 	 {Inspired by applications in pricing and contract design, we study the maximization of one-sided Lipschitz functions, which only provide the (weaker) guarantee that they do not grow too quickly in one direction. We show that it is possible to learn a maximizer for such a function while incurring $O(\log \log T)$ total regret (with a universal constant independent of the number of discontinuities / complexity of the function). This regret bound is asymptotically optimal in $T$ due to a lower bound of Kleinberg and Leighton. By applying this algorithm, we show that one can sell digital goods to multiple buyers and learn the optimal linear contract in the principal-agent setting while incurring at most $O(\log \log T)$ regret.}
}

@article{denBoer2020Discontinuous,
author = {den Boer, Arnoud V. and Keskin, N. Bora},
title = {Discontinuous Demand Functions: Estimation and Pricing},
year = {2020},
issue_date = {October 2020},
publisher = {INFORMS},
address = {Linthicum, MD, USA},
volume = {66},
number = {10},
issn = {0025-1909},
abstract = {We consider a dynamic pricing problem with an unknown and discontinuous demand function. There is a seller who dynamically sets the price of a product over a multiperiod time horizon. The expected demand for the product is a piecewise continuous and parametric function of the charged price, allowing for possibly multiple discontinuity points. The seller initially knows neither the locations of the discontinuity points nor the parameters of the demand function but can infer them by observing stochastic demand realizations over time. We measure the seller’s performance by the revenue loss relative to a clairvoyant who knows the underlying demand function with certainty. We construct a dynamic estimation-and-pricing policy that accounts for demand discontinuities, derive the convergence rates of discontinuity- and parameter-estimation errors under this policy, and prove that it achieves near-optimal revenue performance. We also extend our analysis to the cases of time-varying demand discontinuities and inventory constraints.This paper was accepted by Noah Gans, stochastic models and simulation.},
journal = {Manage. Sci.},
month = oct,
pages = {4516–4534},
numpages = {19},
keywords = {revenue management, pricing, estimation, exploration-exploitation, demand discontinuity}
}

@article{den2015dynamic,
  title={Dynamic pricing and learning: historical origins, current research, and new directions},
  author={Den Boer, Arnoud V},
  journal={Surveys in operations research and management science},
  volume={20},
  number={1},
  pages={1--18},
  year={2015},
  publisher={Elsevier}
}

@article{BesbesSurprising2015, author = {Besbes, Omar and Zeevi, Assaf}, title = {On the Surprising Sufficiency of Linear Models for Dynamic Pricing with Demand Learning}, year = {2015}, issue_date = {April 2015}, publisher = {INFORMS}, address = {Linthicum, MD, USA}, volume = {61}, number = {4}, issn = {0025-1909}, journal = {Manage. Sci.}, month = apr, pages = {723–739}, numpages = {17}, keywords = {inference, model misspecification, myopic pricing, price optimization, revenue management} }

@article{Babaioff2015Dynamic, author = {Babaioff, Moshe and Dughmi, Shaddin and Kleinberg, Robert and Slivkins, Aleksandrs}, title = {Dynamic Pricing with Limited Supply}, year = {2015}, issue_date = {March 2015}, publisher = {Association for Computing Machinery}, address = {New York, NY, USA}, volume = {3}, number = {1}, issn = {2167-8375}, journal = {ACM Trans. Econ. Comput.}, month = mar, articleno = {4}, numpages = {26}, keywords = {dynamic pricing, mechanism design, multiarmed bandits, posted price auctions, regret, revenue maximization} }

@inproceedings{Jieming2018,
 author = {Mao, Jieming and Leme, Renato and Schneider, Jon},
 booktitle = {Advances in Neural Information Processing Systems},
 pages = {},
 title = {Contextual Pricing for Lipschitz Buyers},
 year = {2018}
}

@inproceedings{Bubeck2008Online,
 author = {Bubeck, S\'{e}bastien and Stoltz, Gilles and Szepesv\'{a}ri, Csaba and Munos, R\'{e}mi},
 booktitle = {Advances in Neural Information Processing Systems},
 editor = {D. Koller and D. Schuurmans and Y. Bengio and L. Bottou},
 pages = {},
 publisher = {Curran Associates, Inc.},
 title = {Online Optimization in X-Armed Bandits},
 volume = {21},
 year = {2008}
}

@inproceedings{Kleinberg2008bandits,
author = {Kleinberg, Robert and Slivkins, Aleksandrs and Upfal, Eli},
title = {Multi-armed bandits in metric spaces},
year = {2008},
isbn = {9781605580470},
publisher = {Association for Computing Machinery},
booktitle = {Proceedings of the Fortieth Annual ACM Symposium on Theory of Computing},
pages = {681–690},
numpages = {10},
keywords = {covering dimension, metric spaces, multi-armed bandit problem, online learning},
location = {Victoria, British Columbia, Canada},
series = {STOC '08}
}

@article{han2024optimal,
  title={Optimal no-regret learning in repeated first-price auctions},
  author={Han, Yanjun and Weissman, Tsachy and Zhou, Zhengyuan},
  journal={Operations Research},
  year={2024},
  publisher={INFORMS}
}

@misc{ai2022noregretlearningrepeatedfirstprice,
      title={No-regret Learning in Repeated First-Price Auctions with Budget Constraints}, 
      author={Rui Ai and Chang Wang and Chenchen Li and Jinshan Zhang and Wenhan Huang and Xiaotie Deng},
      year={2022},
      eprint={2205.14572},
      archivePrefix={arXiv},
      primaryClass={cs.GT},
      url={https://arxiv.org/abs/2205.14572}, 
}

@article{dutting2024algorithmic,
  title={Algorithmic contract theory: A survey},
  author={D{\"u}tting, Paul and Feldman, Michal and Talgam-Cohen, Inbal and others},
  journal={Foundations and Trends{\textregistered} in Theoretical Computer Science},
  volume={16},
  number={3-4},
  pages={211--412},
  year={2024},
  publisher={Now Publishers, Inc.}
}

@InProceedings{pmlr-v206-mehta23a,
  title = 	 {Thresholded linear bandits},
  author =       {Mehta, Nishant A. and Komiyama, Junpei and Potluru, Vamsi K. and Nguyen, Andrea and Grant-Hagen, Mica},
  booktitle = 	 {Proceedings of The 26th International Conference on Artificial Intelligence and Statistics},
  pages = 	 {6968--7020},
  year = 	 {2023},
  volume = 	 {206},
  series = 	 {Proceedings of Machine Learning Research},
  month = 	 {25--27 Apr},
  publisher =    {PMLR},
  abstract = 	 {We introduce the thresholded linear bandit problem, a novel sequential decision making problem at the interface of structured stochastic multi-armed bandits and learning halfspaces. The set of arms is $[0, 1]^d$, the expected Bernoulli reward is piecewise constant with a jump at a separating hyperplane, and each arm is associated with a cost that is a positive linear combination of the arm’s components. This problem is motivated by several practical applications. For instance, imagine tuning the continuous features of an offer to a consumer; higher values incur higher cost to the vendor but result in a more attractive offer. At some threshold, the offer is attractive enough for a random consumer to accept at the higher probability level. For the one-dimensional case, we present Leftist, which enjoys $\log^2 T$ problem-dependent regret in favorable cases and has $\log(T) \sqrt{T}$ worst-case regret; we also give a lower bound suggesting this is unimprovable. We then present MD-Leftist, our extension of Leftist to the multi-dimensional case, which obtains similar regret bounds but with $d^{2.5} \log d$ and $d^{1.5} \log d$ dependence on dimension for the two types of bounds respectively. Finally, we experimentally evaluate Leftist.}
}

@inproceedings{BalcanDickVitercik2018,
  title = {Dispersion for Data-Driven Algorithm Design, Online Learning, and Private Optimization},
  author = {Balcan, Maria-Florina and Dick, Travis and Vitercik, Ellen},
  booktitle = {Proceedings of the 59th IEEE Annual Symposium on Foundations of Computer Science (FOCS 2018)},
  year = {2018}
}

@inproceedings{SinglaWang2024,
  title = {Bandit Sequential Posted Pricing via Half-Concavity},
  author = {Singla, Sahil and Wang, Yifan},
  booktitle = {Proceedings of the 25th ACM Conference on Economics and Computation (EC 2024)},
  pages = {922--939},
  year = {2024},
  publisher = {ACM}
}

@inproceedings{LazzaroPikeBurke2025FixedBudget,
  title     = {Fixed-Budget Change Point Identification in Piecewise Constant Bandits},
  author    = {Lazzaro, Joseph and Pike-Burke, Ciara},
  booktitle = {Proceedings of the 28th International Conference on Artificial Intelligence and Statistics},
  pages     = {3268--3276},
  year      = {2025},
  volume    = {258},
  series    = {Proceedings of Machine Learning Research},
  publisher = {PMLR}
}

@inproceedings{LazzaroPikeBurke2025FixedConfidence,
  title     = {Fixed-Confidence Multiple Change Point Identification under Bandit Feedback},
  author    = {Lazzaro, Joseph and Pike-Burke, Ciara},
  booktitle = {Proceedings of the 42nd International Conference on Machine Learning},
  pages     = {32675--32703},
  year      = {2025},
  volume    = {267},
  series    = {Proceedings of Machine Learning Research},
  publisher = {PMLR}
}

\newpage
\appendix


\end{document}